\theoremstyle{remark}
\newtheorem{remark}{Remark}
\newtheorem{example}{Example}
\newtheorem{lemma}{Lemma}
\theoremstyle{definition}
\newtheorem{theorem}{Theorem}
\newtheorem{definition}{Definition}
\newtheorem{fact}{Property}
\newtheorem{assumption}{\textbf{Assumption}}
 \definecolor{color1bg}{HTML}{77AC30}
\definecolor{color2bg}{HTML}{EDB120}
\definecolor{color3bg}{HTML}{0072BD}
\definecolor{color4bg}{HTML}{A2142F}
\providecommand{\norm}[1]{\lVert#1\rVert}
\providecommand{\SRG}[1]{\operatorname{SRG}(#1)}
\def\tb#1{{\color{black}#1}} 
\def\tbr#1{{{\color{red}#1}}} 
\tikzset{sin v source/.style={
  circle,
  draw,
  append after command={
    \pgfextra{
    \draw
      ($(\tikzlastnode.center)!0.5!(\tikzlastnode.west)$)
       arc[start angle=180,end angle=0,radius=0.425ex] 
      (\tikzlastnode.center)
       arc[start angle=180,end angle=360,radius=0.425ex]
      ($(\tikzlastnode.center)!0.5!(\tikzlastnode.east)$) 
    ;
    }
  },
  scale=1.5,
 }
}
\begin{document}

\title{Stability Analysis of Power-Electronics-Dominated Grids Using Scaled Relative Graphs}

\author{Eder~Baron-Prada, Adolfo Anta and~Florian Dörfler
\thanks{Eder Baron is with the Austrian Institute of Technology, 1210 Vienna, Austria, and also with the Automatic Control Laboratory, ETH Zurich, 8092 Z\"urich, Switzerland. (e-mail: ebaron@ethz.ch), 
Adolfo Anta is with the Austrian Institute of Technology, Vienna 1210, Austria (e-mail: adolfo.anta@ait.ac.at), 
Florian Dorfler is with the Automatic Control Laboratory, ETH Zürich, Zürich 8092, Switzerland (e-mail: doerfler@ethz.ch).}
}

\markboth{IEEE TPS}%
{Shell \MakeLowercase{\textit{et al.}}: Bare Demo of IEEEtran.cls for IEEE Journals}
\maketitle
\begin{abstract}

This paper presents a novel approach to stability analysis for grid-connected converters utilizing Scaled Relative Graphs (SRG). Our method effectively decouples grid and converter dynamics, thereby establishing a comprehensive and efficient framework for evaluating closed-loop stability. Our analysis accommodates both linear and non-linear loads, enhancing its practical applicability. Furthermore, we demonstrate that our stability assessment remains unaffected by angular variations resulting from dq-frame transformations, significantly increasing the method's robustness and versatility. The effectiveness of our approach is validated in several simulation case studies, which illustrate its broad applicability in modern power systems.
\end{abstract}

\begin{IEEEkeywords}
Scaled Relative Graphs, Power system stability, Constant Power Loads.
\end{IEEEkeywords}

\IEEEpeerreviewmaketitle

\section{Introduction}
The integration of renewable energy sources is fundamentally reshaping modern power grids. Power electronic converters are rapidly replacing traditional synchronous generators and serve as the interface for renewable energy sources \cite{Farhangi2010_pathof}. While essential for enabling renewable energy systems, these converters introduce complex dynamics due to internal filters such as phase-locked loops, nested control loops, and intricate grid interactions. These factors present significant and unprecedented stability challenges \cite {dorfler2023,Huang2024_Howmany,Cigre2024}. Unlike synchronous generators, which inherently synchronize and provide grid-stabilizing inertia, converters lack such properties, making stability analysis and control more difficult in converter-dominated systems. 

\tb{There are two main approaches to analyze the small-signal stability of power systems. On one hand, state-space methods operate in the time domain and require full knowledge of the system matrices, making them suitable for detailed model-based analysis. Nonetheless, these models are not commonly accessible, in part because manufacturers maintain confidentiality over their control algorithms. \cite{bahrani2024grid}. On the other hand, impedance-based methods are formulated in the frequency domain, allowing stability and interaction studies using measured or simulated frequency responses obtained from black-box models, which are commonly shared by manufacturers}. {This makes them particularly compatible with data-driven identification and experimental validation.} 

Traditional \tb{impedance-based} small-signal stability analysis techniques (such as Nyquist-based methods) encounter limitations when applied to converter-dominated grids. These approaches aggregate the interconnected system dynamics, potentially obscuring frequency-specific instabilities \cite{Tao2022,Fan2020_problemsAdmittance}. Although impedance-based methods can provide insights for single-converter systems, they scale poorly in large, decentralized networks\cite{Fan2020_problemsAdmittance}. Recently, decentralized criteria like passivity, small-gain, and small-phase theorems have gained attention for their scalability, given that they can decouple converter and grid dynamics; however, they are often conservative and require strong assumptions \cite{Wang2024,huang2024gain,Wang2024Limitations}.  Moreover, in current power systems where nonlinearities are unavoidable, the limitations of these criteria become even more pronounced.

To address these challenges, we adopt the Scaled Relative Graphs (SRG) framework\cite{Ryu_2021,ryu2022large}, a novel methodology for stability analysis of interconnected systems \cite{Baron2025SRG,Baron2025SGP,Chaffey_2023,Chaffey2021,krebbekx2025}. When applied to power systems dominated by converters, SRG offers a significant advancement by allowing frequency-specific analysis and enabling a clear decoupling of grid and converter dynamics \cite{Baron2025SRG}. By separating grid and converter dynamics, SRG provides clearer insights into their individual behaviors and their interactions, facilitating targeted interventions and design improvements in power-electronic systems. This modularity enhances both analytical precision and practical usability.

Our contributions are as follows: we adapt a SRG-based stability theorem from \cite{Baron2025SRG} for power system analysis, creating a customized stability evaluation framework. We broaden the stability theorem's scope to include nonlinear loads, specifically, constant power loads (CPLs) not previously addressed through frequency-domain analysis. To achieve this, we develop an over-approximation technique for the SRG applicable to CPLs, which enables us to integrate CPLs directly into closed-loop stability assessments without requiring linearization. This methodology maintains the natural nonlinear characteristics of CPLs, resulting in enhanced accuracy of stability assessment.  Moreover, we establish that our stability results remain unaffected by angular changes arising from $dq$-frame coordinate transformations, which is critical for power systems with multiple converters modeled in distinct local and global $dq$-frames\cite{gong2018impact}.  

We validate our approach through a series of case studies, including single grid-connected converter and system-level stability assessment. For grid-following (GFL) converters, our SRG-based analysis corroborates existing results: higher phase-locked loop (PLL) cutoff frequencies ($f_{\text{pll}}$) require stronger grid conditions to maintain stability, consistent with~\cite{Li_duality_2022,Huang2020}. Leveraging the SRG framework's compatibility with nonlinear loads, we further perform stability assessments of grid-forming (GFM) converters connected to CPLs. Finally, we demonstrate the method's practical applicability through stability analysis of IEEE 14-bus and \tb{IEEE 57-bus systems }with high renewable penetration.
The SRG framework offers multiple advantages over conventional methods, outperforming passivity-based and mixed small-gain and phase techniques in certifying stability \cite{Baron2025SGP}.

This paper is organized as follows: Section \ref{sec:model} presents the modeling of converter and grid dynamics. Section \ref{sec:conditions_linear} develops stability certification for grid-connected converters with linear loads using SRGs, while Section \ref{sec:conditions_nonlinear} extends this framework to nonlinear loads. Section \ref{sec:system_analysis} uses the SRG-based theorem to analyze the stability of a system, followed by a comparison of stability analysis methods in Section \ref{sec:comparison}. Finally, Section \ref{sec:conclusions} concludes the work and discusses future research directions.

\section{Modeling Dynamics of Converter and Grids} \label{sec:model}
 
This section models the converter and grid using the admittance approach, linking Point of Common Coupling (PCC) terminal voltages and currents, and presents the converter-grid feedback loop.
\subsection{Admittance Modeling of Grid-Connected Converters}
Consider a converter connected to an AC system, as in Fig.~ \ref{fig:setup_single}, wherein the DC side is connected to a DC voltage source, $v_{dc,g}$ with an internal impedance $z_{dc,g}$. On the AC side, there is an LC filter that connects the converter to a balanced AC grid. At the PCC, a constant admittance load $y_{l}(s)$, a constant power load $y_{cp}$, and an infinite current bus $i_{ac,g}$ are connected. The infinite bus is considered in steady-state and therefore has no implication in small-signal stability. In Subsection \ref{sec:system_analysis}, we extend our analysis to scenarios where the converter at the PCC interfaces with a broader system.

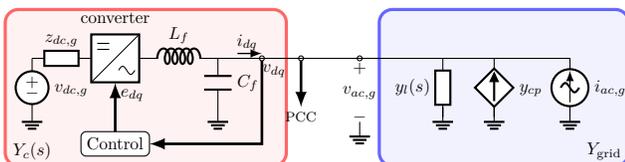
\begin{figure}[hbt]
    \centering
    \begin{circuitikz}[scale=0.65, every node/.style={transform shape}]
\filldraw[color=red!60, fill=red!5, very thick,rounded corners=5] (0.75,5) rectangle (6.5,1.8);
\filldraw[color=blue!60, fill=blue!5, very thick,rounded corners=5] (8.4,5) rectangle (13.5,1.8);
\filldraw[color=black, fill=white,rounded corners=2] (2.3,2.5) rectangle (3.7,2);
\node at (1.3,2.1) {$Y_c(s)$};
\node at (13,2.1) {$Y_{\text{grid}}$};
\node at (6.8,2.8) { \footnotesize PCC};
\node at (3,2.25) {Control};
\draw[fill = white] (6,4) circle [radius=0.05];
\draw[-latex, line width = 1 pt] (6,3.95) -- (6,2.25)--(3.7,2.25);
\draw[-latex, line width = 1 pt] (3,2.5) --  (3,3.5);
\draw[fill = white] (6.8,4) circle [radius=0.05];
\draw[-latex, line width = 1 pt] (6.8,3.95) -- (6.8,3);
\node at (5.7,4.33) {$i_{dq}$};
\node at (6.25,3.75) {$v_{dq}$};
\node at (8,3.3) {$v_{ac,g}$};
\node at (8,3.8) {$+$};
\node at (8,2.8) {$-$};
\node at (3.35,3.25) {$e_{dq}$};
\draw[-latex, line width = 0.5 pt] (5.5,4.1) --  (6,4.1);
\draw[fill = white] (8,4) circle [radius=0.05];

\ctikzset{capacitors/height=.4}
\draw
(1.3,2.8) to (1.3,2.7) node[tlground]{}
(1.3,2.8) to[american voltage source, l_=$v_{dc,g}$,sources/scale=0.8,invert, fill=white] (1.3,4) 
(1.3,4)to[twoport,l=$z_{dc,g}$,bipoles/twoport/width=0.5,bipoles/twoport/height=0.22, fill=white] (2.5,4)
(2.5,4) to [sdcac,fill=white, l=converter] (3.5,4)
(3.5,4) to [L, l=$L_{f}$] (5.1,4)
(5.1,4) to [C, l=$C_{f}$] (5.1,3)
(5.1,3) to (5.1,2.7) node[tlground]{}
(5.1,4) to (9.8,4)
(9.7,4) to [twoport,a=$y_{l}(s)$,bipoles/twoport/width=0.5,bipoles/twoport/height=0.22, fill=white] (9.7,2.8)
(9.8,4) to (12.3,4)
(9.7,3) to (9.7,2.7) node[tlground]{}
(12.3,4) to[sI, l=$i_{ac,g}$,sources/scale=0.9, fill=white] (12.3,2.8)
(12.3,3) to (12.3,2.7) node[tlground]{} 
(10.75,4) to[american controlled current source, l=$y_{cp}$,csources/scale=0.8,invert, fill=white] (10.75,2.8)
(10.75,3) to (10.75,2.7) node[tlground]{}
(8,2.7) to (8,2.4) node[tlground]{};
\draw [arrows = {-latex[scale=10]}]   (12.3,3.1)--(12.3,3.75)  ;
\end{circuitikz}
    \caption{A converter connected to a grid composed of an infinite bus in parallel to constant impedance and constant power load. }
    \label{fig:setup_single}
\end{figure} 

Depending on the objective of the converter connection, two control options are considered: GFL and GFM control. In the GFL case, the main objective is to exchange active and reactive power with an AC grid, i.e., acting as a power source. The main objective of the GFM is to regulate voltage and frequency at the PCC. In both cases, it is possible to model the converter small-signal dynamical behavior in the frequency domain via an admittance matrix, $Y_c(s)$. The admittance of the converter includes the LC filter, DC source, and the controller, as shown in Fig.~\ref{fig:setup_single}. This admittance model helps to capture the converter’s interaction with the AC grid, including its synchronization and control behaviors. The admittance model can be built in several coordinate systems, such as $\alpha\beta$, $dq$, or $abc$, depending on the assumptions on the grid\cite{Harnefors2007}. The admittance $Y_c(s)$ is typically built using a local $dq$-frame, and later related to a global $dq$-frame using rotation matrices:
\begin{align}
- \begin{bmatrix}
i_{d} \\
i_{q}
\end{bmatrix}
=  \underbrace{J(\theta) Y_{c}(s) J(-\theta)}_{\mathbf{\Tilde{Y}_{c}(s)}}
\begin{bmatrix}
v_{d} \\
v_{q}
\end{bmatrix} \label{eqn:Yc},
\end{align}
where $Y_{c}(s)$ is a $2\times2$ transfer function matrix that describes the converter’s dynamics using local per-unit calculations, $i~=~[i_d, i_q]^\top $ is the converter’s output current and $v=[v_d, v_q]^\top $ the input voltage, both in this global $dq$-frame.  The steady-state angle difference between the network reference voltage ($v_{ac,g}$), and the converter voltage coordinate  is represented by a constant $\theta$, where the rotation is given by 
\begin{align*}
    J(\theta) =
\begin{bmatrix}
\cos \theta & -\sin \theta \\
\sin \theta & \cos \theta
\end{bmatrix}.
\end{align*}

This global $dq$ coordinate system ensures uniformity in modeling converter-connected systems \cite{huang2024gain,Huang2024_Howmany}. 
\begin{remark}
\tb{The $dq$ transformation assumes balanced sinusoidal conditions. Under unbalance, positive and negative sequences couple, producing $2\omega$ oscillations and loss of decoupling. In such cases, a harmonic state-space (HSS) formulation is more suitable as it preserves sequence interactions.}
\end{remark}

\subsection{{Linear Load, Constant Power Load} and Grid Modeling } \label{subsec:modeling}
This section examines load modeling through two elements: linear admittance representation and CPL modeling. The discussion concludes with an examination of grid model representation and a simplified form (the short-circuit ratio) which serves as a measure of grid strength.

\subsubsection{{Linear Load} Admittance}
The admittance {transfer function} captures the linearized behavior of a load around a steady state as
\begin{align}
        y_{l}(s) = \begin{bmatrix}
        y_{dd}(s) & y_{dq}(s) \\
        y_{qd}(s) & y_{qq}(s)
    \end{bmatrix}. \label{eqn:Zgrid_estimated}
\end{align}
Note that constant admittance loads are linear and can be represented as \eqref{eqn:Zgrid_estimated} without losing generality. 

\subsubsection{Constant power loads}
A CPL is a nonlinear load that maintains a fixed power consumption regardless of voltage and frequency variations, adjusting its current to ensure the specified power demand is met~\cite{dorfler2023}. The relationship between $i$ and $v$ for a CPL, i.e., $i=y_{cp}(v)v$, can be described in local $dq$ coordinates as:
\begin{align} 
     \begin{bmatrix} i_d \\ i_q \end{bmatrix} = \dfrac{1}{\|v\|_2^2} \begin{bmatrix} p_c & q_c \\ -q_c & p_c \end{bmatrix} \begin{bmatrix} v_d \\ v_q \end{bmatrix},
     \label{eqn:Zgrid_CPL} 
 \end{align}
 where \tb{$\|\cdot\|_2$ denotes the $\mathcal{L}_2$-norm},  $p_c$ and $q_c$ denote constant active and reactive power set points, respectively. For brevity, hereafter we denote $ y_{cp} $ without explicit dependence on $ v $. 

\subsubsection{Grid Model}
{Our grid model from Fig.~\ref{fig:setup_single} can be represented as the sum of the admittances: 
 \begin{align}
     Z^{-1}_{\text{grid}}=Y_{\text{grid}}= y_{l}(s)+y_{cp}(v),
     \label{eqn:Zgrid_nonlinear}
 \end{align}
where $Z_{\text{grid}}$ denotes the grid impedance. We relate the grid impedance to voltages and currents in the global $dq$-frame as
\begin{align}
v= Z_{\text{grid}}\;i.    \label{eqn:Zgrid_closedloop}
\end{align}}
\subsubsection{Short Circuit Ratio (SCR)} \label{subsubsec:SCR}
In grid scenarios where modeling becomes computationally intensive, the SCR is used as a simplified grid model and as an efficient tool for the preliminary stability assessment of grid-connected converters \cite{Cigre2024}.  The SCR is defined as the $\mathcal{L}_2$-norm of the {linearized} grid admittance normalized by the converter admittance at the fundamental frequency $\omega_0$, \cite{Standard_DClink}, 
\begin{align*}
       \operatorname{SCR}~:=~\dfrac{\|Y_{\text{grid}}(\textup{j}\omega_0)\|_2}{\|Y_{c}(\textup{j}\omega_0)\|_2},
\end{align*}
where $Y_{\text{grid}}(s)$ is the grid equivalent admittance and $Y_{{c}}(s)$ the converter admittance.
A high SCR indicates a stiff grid resistant to disturbances, while a low SCR suggests susceptibility to voltage instability. This frequency-invariant positive scalar~\cite{Green2024} serves as a stability metric by capturing grid robustness~\cite{Huang2020}. 

\begin{remark}[Availability of admittance models]
\tb{Converter and grid admittances can be obtained either from analytical small-signal models derived from control and filter dynamics~\cite{Chen2019,Cigre2024,Fan2020_problemsAdmittance} or through experimental and data-driven identification methods~\cite{Gong2021,Lyu2024,Fan2023}. In practice, the availability of such admittance data is not a limiting factor, as they are routinely provided by manufacturers or identified locally by system operators~\cite{spp_2025,neso_2024}. Impedance-based representations are therefore well established and widely used for grid modeling and stability assessment~\cite{Dorfler2011,Haberle2023,Cigre2024}.}
\end{remark}
\subsection{Feedback Loop between grid and converter}
We can now establish the feedback loop between the converter admittance and the grid impedance.  
Together, \eqref{eqn:Yc} and \eqref{eqn:Zgrid_closedloop} form the system closed-loop dynamics shown in Fig.~\ref{fig:decentralizedfb}. 
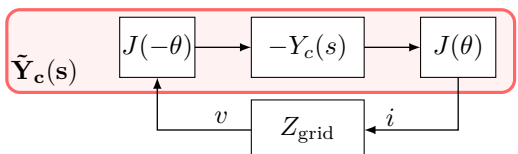
\begin{figure}[h]
\centering
\begin{tikzpicture}[scale=1, every node/.style={transform shape}]
\filldraw[color=red!60, fill=red!5, very thick,rounded corners=5] (6.5,4.6) rectangle (13.25,3.5);
\filldraw[color=black, fill=white] (9.75,4.5) rectangle (11.25,3.7);
\node at (10.5,4.1) {$-Y_c(s)$};
\filldraw[color=black, fill=white] (12,4.5) rectangle (13,3.7);
\node at (12.5,4.1) {$J(\theta)$};
\filldraw[color=black, fill=white] (8,4.5) rectangle (9,3.7);
\node at (8.5,4.1) {$J(-\theta)$};
\filldraw[color=black, fill=white] (9.75,3.4) rectangle (11.25,2.6);
\node at (10.5,3) {$Z_{\text{grid}}$}; 
\draw[-latex, line width = .5 pt] (9,4.1) -- (9.75,4.1);
\draw[-latex, line width = .5 pt] (11.25,4.1) -- (12,4.1);
\draw[-latex, line width = .5 pt] (12.5,3.7) -- (12.5,3) -- (11.25, 3 );
\draw[-latex, line width = .5 pt] (9.75,3) -- (8.5,3) -- (8.5,3.7) ;
\node at (9.35,3.1664) {$v$};
\node at (7,3.8) {$\mathbf{\Tilde{Y}_c(s)}$};
\node at (11.6,3.1664) {$i$};
\end{tikzpicture}
\caption{Closed-loop dynamics of a converter-grid system}
    \label{fig:decentralizedfb}
\end{figure} 
%
%

The converter’s dynamics and grid interaction form a closed-loop system that governs overall stability. However, certifying stability in such systems poses inherent challenges. Conventional methods such as the Generalized Nyquist Criterion (GNC) or Bode analysis are restricted to linear systems and lack explicit formulations to identify instability sources, often obscuring critical interactions between interconnected subsystems \cite{Fan2020_problemsAdmittance}. Other approaches, such as mixed small-gain and phase \cite{huang2024gain} or passivity techniques \cite{Wang2023_Passivity}, provide intuitive stability conditions but suffer from limited applicability and excessive conservatism. However, they can be applied in a decentralized manner. To overcome these limitations, Sections \ref{sec:conditions_linear} and \ref{sec:conditions_nonlinear} present an SRG-based method that provides stability conditions addressing nonlinear loads that decouples grid and converter dynamics. 

\section{Stability Certification for Grid connected converters with Linear Loads Using SRGs} \label{sec:conditions_linear}
This section presents an SRG method for analyzing stability in power systems. The definition and properties of SRGs are discussed, leading to Theorem \ref{thm:GFTLTI}, an SRG-based stability condition tailored for power systems. Applying this theorem, we determine the critical Short Circuit Ratio (cSCR) for a GFL with variable PLL bandwidth, thereby establishing the minimum grid strength threshold required for stable operation.
\subsection{Scaled Relative Graphs} 
First developed for convergence analysis in optimization algorithms \cite{Ryu_2021,ryu2022large}, SRGs have been extended to characterize both linear and nonlinear operators \cite{Chaffey_2023,Baron2025SRG}. Let $F \in \{\mathbb{R}, \mathbb{C}\}$ be the field of interest, and define the space $\mathcal{L}_2^n(F)$ as all square-integrable signals $u, y: \mathbb{R}_{\ge 0} \rightarrow F^n$ with inner product $\langle u, y \rangle := \int_0^\infty u(t)^* y(t) dt$ and norm $\|u\|_2 := \sqrt{\langle u, u \rangle}$. \tb{The angle between two nonzero vectors $z_1, z_2$ in the same Hilbert space is defined as $\angle(z_1, z_2):= \arccos\!\left( \tfrac{\Re\langle z_1,\, z_2 \rangle} {\|z_1\|_2\,\|z_2\|_2} \right)$}. For an operator $A$, the SRG captures input-output relationships as follows \cite{Ryu_2021}
\begin{align}
        \operatorname{SRG}(A) = \left\{ \frac{\|y_2 - y_1\|_2}{\|u_2 - u_1\|_2} \exp\left[\pm \textup{j} \angle(u_2 - u_1, y_2 - y_1)\right] \right\}
        \label{eqn:SRG_operator_nonlinear}
    \end{align}
where $u_1, u_2$ are a pair of inputs with resulting outputs $y_1~=~A(u_1)$ and $y_2~=~A(u_2)$. If A is a matrix, the SRG reduces to
 \begin{align}
        \operatorname{SRG}(A) = \left\{ \frac{\|Au\|_2}{\|u\|_2} \exp\left[\pm \textup{j} \angle(u,Au)\right] \right\}
        \label{eqn:SRG_operator}
    \end{align}
    for $\|u\|_2 \neq 0$ and $ \|A u\|_2\neq0$. Here, the magnitude ratio $\frac{\|Au\|_2}{\|u\|_2}$ quantifies gain, while the argument reflects phase shift. Calculating the SRG of a square LTI system can be done frequency-wise by applying \eqref{eqn:SRG_operator} to its transfer function matrix \cite{Baron2025SGP,Chen2025}. This work considers the $\mathcal{RH}_\infty$ space, i.e., rational, proper, and stable transfer functions, which are commonly used to represent admittances in power systems~\cite{huang2024gain,Huang2020}. For an admittance $Y(s)\in \mathcal{RH}_\infty$, at each frequency point $s=\textup{j}\omega$ with $\omega\in [0,\infty)$, the $\SRG{Y(\textup{j}\omega)}$ is given by  
 \begin{align}
        \operatorname{SRG}(Y(\textup{j}\omega)) = \left\{ \frac{\|Y(\textup{j}\omega)u\|_2}{\|u\|_2} \exp\left[\pm \textup{j} \angle(u,Y(\textup{j}\omega)u)\right] \right\},
        \label{eqn:SRG_operator_transferfunction}
    \end{align}
for $\|u\|_2 \neq 0$ and $ \|Y(\textup{j}\omega) u\|_2\neq0$. \tb{The SRG at each frequency captures all input–output behaviors of the system. Every point corresponds to a particular input direction $u$ and encodes the resulting gain and phase between input and output. Physically, this means that the SRG represents the full range of gain–phase responses attainable at that frequency, providing a geometric illustration of how the system reacts to all possible input directions. In the context of power systems, $\SRG{Y(\textup{j}\omega)}$ therefore represents the complete voltage-to-current gain–phase characteristics of the admittance at frequency $\omega$.} Thus, the $\SRG{Y(s)}$  can be seen as a three-dimensional set formed from  $\SRG{Y(\textup{j}\omega)}~\subset~\mathbb{C}$ for every $\omega\in\mathbb{R}$. {For SISO systems, the SRG reduces to the classical Nyquist plot \cite{Baron2025SRG,Chaffey_2023}, while for MIMO systems SRGs provide a more informative perspective.} The following three properties are essential for deriving SRG-based stability conditions\cite{ryu2022large}:
\begin{fact}[Inversion Property]\label{property:inversion}
    For any $A$, $\operatorname{SRG}(A^{-1})$ is obtained via the so-called Möbius inversion: 
    $\operatorname{SRG}(A^{-1})=\operatorname{SRG}(A)^{-1}=\{(z^{-1})^*\mid z\in \operatorname{SRG}(A)\}$.
\end{fact}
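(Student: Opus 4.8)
The plan is to peel apart the two instances of definition~\eqref{eqn:SRG_operator_nonlinear}, one for $A$ and one for $A^{-1}$, and to match their elements one for one. Fix a pair of inputs $u_1,u_2$ with outputs $y_i=A(u_i)$, and abbreviate $\delta u := u_2-u_1$ and $\delta y := y_2-y_1 = A(u_2)-A(u_1)$. The element of $\operatorname{SRG}(A)$ attached to this pair is $z = \tfrac{\|\delta y\|_2}{\|\delta u\|_2}\exp[\pm\textup{j}\,\angle(\delta u,\delta y)]$. Feeding the outputs $y_1,y_2$ as inputs to $A^{-1}$ returns $u_1,u_2$ as outputs, so the element of $\operatorname{SRG}(A^{-1})$ attached to the same pair is $\tfrac{\|\delta u\|_2}{\|\delta y\|_2}\exp[\pm\textup{j}\,\angle(\delta y,\delta u)]$. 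It therefore suffices to show that this last quantity equals $(z^{-1})^{*}$.

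First I would record two elementary facts. (i) The magnitudes are reciprocal: $\tfrac{\|\delta u\|_2}{\|\delta y\|_2} = \bigl(\tfrac{\|\delta y\|_2}{\|\delta u\|_2}\bigr)^{-1} = |z|^{-1} = |z^{-1}|$. (ii) The angle is symmetric in its two arguments, $\angle(\delta y,\delta u) = \angle(\delta u,\delta y)$, because $\Re\langle \delta y,\delta u\rangle = \Re\langle \delta u,\delta y\rangle$ (the two inner products are complex conjugates) while the normalizing product of norms is unchanged, so the defining $\arccos$ returns the same value. Combining (i) and (ii), the element of $\operatorname{SRG}(A^{-1})$ equals $|z|^{-1}\exp[\pm\textup{j}\,\angle(\delta u,\delta y)]$. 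Writing $z = |z|\exp[\varepsilon\,\textup{j}\,\angle(\delta u,\delta y)]$ with $\varepsilon\in\{+1,-1\}$, we get $z^{-1} = |z|^{-1}\exp[-\varepsilon\,\textup{j}\,\angle(\delta u,\delta y)]$ and hence $(z^{-1})^{*} = |z|^{-1}\exp[\varepsilon\,\textup{j}\,\angle(\delta u,\delta y)]$, which is precisely the expression just obtained. Letting the pair $(u_1,u_2)$ and the sign $\varepsilon$ range over all admissible choices gives $\operatorname{SRG}(A^{-1}) = \{(z^{-1})^{*}\mid z\in\operatorname{SRG}(A)\}$, which is by definition $\operatorname{SRG}(A)^{-1}$; since the $\pm$ in~\eqref{eqn:SRG_operator_nonlinear} makes every SRG symmetric about the real axis, the sets $\{(z^{-1})^{*}\}$ and $\{z^{-1}\}$ coincide. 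The matrix and frequency-wise statements~\eqref{eqn:SRG_operator} and \eqref{eqn:SRG_operator_transferfunction} then follow by specializing $A$ and, in the transfer-function case, applying the argument at each $s=\textup{j}\omega$.

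The one place that needs genuine care --- the ``hard part,'' modest as it is --- is the bookkeeping over the underlying set of input pairs. The correspondence $(u_1,u_2)\leftrightarrow(y_1,y_2)$ invoked above is a bona fide bijection only when $A$ is invertible on the relevant subspace, so that $A^{-1}$ is single-valued and $\delta y\neq 0$ whenever $\delta u\neq 0$; one must also verify that the degenerate pairs, namely those with $\|\delta u\|_2=0$ or $\|\delta y\|_2=0$, are excluded on both sides by the standing hypotheses already built into~\eqref{eqn:SRG_operator_nonlinear}, so that inversion neither creates nor deletes points of the graph. Once this is in place the identity is immediate.
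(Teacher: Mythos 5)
The paper does not prove this property at all: it is imported verbatim from the cited SRG literature (\cite{ryu2022large}) as a standing fact, so there is no in-paper argument to compare against. Your direct verification from the definition~\eqref{eqn:SRG_operator_nonlinear} is correct and is essentially the standard one: the pair correspondence $(u_1,u_2)\leftrightarrow(y_1,y_2)$ swaps the roles of input and output increments, the magnitude inverts, the $\arccos$-based angle is symmetric in its arguments because $\langle \delta y,\delta u\rangle=\overline{\langle \delta u,\delta y\rangle}$ have equal real parts, and the conjugate in $(z^{-1})^{*}$ exactly undoes the sign flip of the argument under $z\mapsto z^{-1}$, with the residual $\pm$ ambiguity absorbed by the SRG's symmetry about the real axis. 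Your closing caveat about degenerate pairs is the right thing to flag; the only slight over-caution is the insistence that $A$ be invertible for the correspondence to be a bijection --- in the SRG framework $A^{-1}$ is understood as the relational inverse (the graph with input and output swapped), under which the pair correspondence is exact by construction and no single-valuedness is needed.
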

\begin{fact}[Chord Property]
An operator $A$ is said to satisfy the chord property if for every bounded $ z \in \operatorname{SRG}(A)$, the line segment $[z, z^*]$, defined as $z_1,z_2 \in \mathbb{C}$ as $[z_1,z_2] := \{ \beta z_1 + (1 - \beta)z_2 \mid \beta \in [0,1] \}$, is in $\operatorname{SRG}(A)$. 
\end{fact}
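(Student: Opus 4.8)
The statement, as written, is a \emph{definition}: it singles out the subclass of operators whose scaled relative graph is closed under attaching to each of its points $z$ the horizontal chord $[z,z^{*}]$, rather than asserting that a particular $A$ enjoys this feature. Consequently the object to be established --- and what is used implicitly wherever the containment and separation arguments behind Theorem~\ref{thm:GFTLTI} are invoked --- is that the SRGs, or rather the over-approximations of them, used in this paper lie in that subclass. My plan is first to recast the chord property in a form checkable by inspection: it is equivalent to requiring that on each vertical line $\{\,\Re w = a\,\}$ the slice of $\operatorname{SRG}(A)$ is a single interval symmetric about the real axis, i.e.\ $\operatorname{SRG}(A)=\bigcup_{z\in\operatorname{SRG}(A)}[z,z^{*}]$. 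Restricting to bounded $z$ is immaterial here, since admittances in $\mathcal{RH}_\infty$ have bounded SRGs at every frequency.

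With this reformulation the verification is a one-line computation for each family of regions that replaces an exact SRG in the analysis. A disk centred on the real axis contains the chord of any of its points because $a^{2}+t^{2}\le a^{2}+b^{2}$ for $|t|\le|b|$; a closed half-plane $\{\,\Re w\le c\,\}$ or $\{\,\Re w\ge c\,\}$, a vertical strip, and a sector about the real axis contain it because passing from $z$ to $z^{*}$ leaves the real part unchanged and only decreases the distance to the axis; and any finite intersection of such sets inherits the property. The only genuinely new item is the over-approximation of the CPL relation \eqref{eqn:Zgrid_CPL}: there I would exhibit the proposed enclosing set, observe that it is a conjugate-symmetric union of real-axis-centred chords --- a disk, a half-plane, or an intersection of these --- and read off the chord property directly, so that Theorem~\ref{thm:GFTLTI} applies to the closed loop.

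The delicate point is not the chord property itself, which is geometrically trivial once the regions have the right shape, but that it is \emph{not} automatic for a raw SRG, so the over-approximation step is unavoidable: a static $dq$-rotation has SRG equal to the conjugate pair $\{e^{\pm\textup{j}\theta}\}$, which omits the entire segment between its two points, and the exterior of a disk centred at the origin fails the property as well. The real work is therefore to choose, for each subsystem, an enclosing set that is simultaneously (i)~of chord-property type, (ii)~cheap to obtain from the computed or measured frequency response, and (iii)~tight enough that the non-intersection condition of Theorem~\ref{thm:GFTLTI} is preserved; reconciling (i) with (iii), i.e.\ controlling the conservatism introduced by the over-approximation, is where I would expect to concentrate the effort.
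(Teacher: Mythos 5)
You are right that this ``Property'' is a definition imported from \cite{ryu2022large}, so the paper offers no proof of it; the only fact actually used downstream is that the closed disk $\operatorname{SRG}(\widehat{y_{cp}})$ from Lemma~\ref{lemma:approx_SRG} (and the linear-load SRGs) satisfy the property, which the paper merely asserts and which your disk/half-plane computation correctly verifies. Your reading and verification are consistent with how the paper uses the chord property in the Example and in the proof of Theorem~\ref{thm:GFT_nonlinear} (one cosmetic slip: the segment $[z,z^{*}]$ is vertical, not horizontal, since it fixes the real part and sweeps the imaginary part).
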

\begin{fact}[Sum of operators]\label{property:sum}
Let $A$ and $B$ be bounded operators. If either $\SRG A$ or $\SRG B$ meet the chord property. Then, $\SRG{A+B}\subseteq\SRG A + \SRG B$.
\end{fact}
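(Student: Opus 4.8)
The plan is to take an arbitrary bounded point $z\in\SRG{A+B}$, realize it through a concrete pair of inputs, split it into a contribution coming from $A$ and one coming from $B$, and then invoke the chord property to absorb the single mismatch that stops those two contributions from summing exactly to $z$.

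First I would fix inputs $u_1\neq u_2$ that generate $z$, set $\delta u:=u_2-u_1$, $\delta y_A:=A(u_2)-A(u_1)$, $\delta y_B:=B(u_2)-B(u_1)$, so that $A+B$ produces the output difference $\delta y_A+\delta y_B$. Working in the Hilbert space under the real inner product $\Re\langle\cdot,\cdot\rangle$ --- which makes the argument identical for $F=\mathbb{R}$ and $F=\mathbb{C}$ --- I would decompose each $\delta y_\bullet$ into its component along $\delta u$ and its orthogonal remainder $P\delta y_\bullet$, and record that the complex number attached to a pair $(\delta u,\delta y)$ is $\tfrac{\Re\langle\delta u,\delta y\rangle}{\|\delta u\|_2^2}\pm\textup{j}\,\tfrac{\|P\delta y\|_2}{\|\delta u\|_2}$, both signs lying in the SRG by construction. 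Writing $z,z_A,z_B$ for the numbers attached to $\delta y_A+\delta y_B$, $\delta y_A$, $\delta y_B$ respectively, we have $z_A\in\SRG{A}$ and $z_B\in\SRG{B}$, together with their conjugates. The degenerate situations --- $\delta y_A=0$ or $\delta y_B=0$ (where the corresponding number is $0$), and $\delta y_A+\delta y_B=0$ --- I would dispatch directly, since there the decomposition $z=a+b$ can be exhibited without even using the chord property.

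Next I would isolate the two structural facts. Real parts are additive, $\Re(z)=\Re(z_A)+\Re(z_B)$, because $\Re\langle\delta u,\cdot\rangle$ is real-linear; and the orthogonal remainders obey the reverse triangle inequality $\big|\,\|P\delta y_A\|_2-\|P(\delta y_A+\delta y_B)\|_2\,\big|\le\|P\delta y_B\|_2$. Assume without loss of generality that $\SRG{A}$ satisfies the chord property; then it contains the whole segment $[z_A,\overline{z_A}]$, i.e.\ every complex number with real part $\Re(z_A)$ and imaginary part anywhere in $[-|\Im(z_A)|,\,|\Im(z_A)|]$. Choosing $b\in\{z_B,\overline{z_B}\}\subseteq\SRG{B}$ with the same sign of imaginary part as $z$, and setting $a:=z-b$, real-part additivity gives $\Re(a)=\Re(z_A)$, while $|\Im(a)|=\big|\,\|P(\delta y_A+\delta y_B)\|_2-\|P\delta y_B\|_2\,\big|/\|\delta u\|_2\le\|P\delta y_A\|_2/\|\delta u\|_2=|\Im(z_A)|$, so $a$ lies on the chord $[z_A,\overline{z_A}]\subseteq\SRG{A}$. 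Hence $z=a+b\in\SRG{A}+\SRG{B}$, which is the claim; if instead $\SRG{B}$ is chordal, the roles of $A$ and $B$ swap.

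The inequality chain itself is elementary; the work I expect is the bookkeeping around it. Each input pair contributes an unordered conjugate pair to every SRG, so one must pick representatives carefully --- the point of choosing $b$ to match the sign of $\Im(z)$ is exactly to make $a$ land inside the chord segment rather than outside it. I would also want to verify that the chord hypothesis is not an artifact of the proof: a small example with $\SRG{A}$ and $\SRG{B}$ both non-chordal and $\SRG{A+B}\not\subseteq\SRG{A}+\SRG{B}$ would confirm it is genuinely needed. Keeping the real/complex treatment uniform via $\Re\langle\cdot,\cdot\rangle$, and handling the boundary cases cleanly, are the only other points that need care.
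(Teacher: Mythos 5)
The paper does not prove this statement: it is imported as a known property of SRGs with a citation to \cite{ryu2022large}, so there is no in-paper argument to compare against. Your proof is correct and is essentially the standard argument from that literature: writing each SRG point attached to a pair $(\delta u,\delta y)$ as $\tfrac{\Re\langle\delta u,\delta y\rangle}{\|\delta u\|_2^2}\pm\textup{j}\,\tfrac{\|P\delta y\|_2}{\|\delta u\|_2}$, exploiting additivity of the real parts, bounding the orthogonal components via the reverse triangle inequality, and using the chord property of the one chordal factor to absorb the resulting slack in the imaginary part. The sign-matching choice of $b\in\{z_B,\overline{z_B}\}$ is exactly the step that makes $a=z-b$ land on the segment $[z_A,z_A^*]$, and your separate treatment of the degenerate cases $\delta y_A=0$, $\delta y_B=0$, $\delta y_A+\delta y_B=0$ (where the angle convention for a zero output difference must be fixed) covers the only points where the main computation is not literally applicable. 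No gap.
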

We now present an adaptation of a SRG-based frequency-wise stability analysis tailored for power systems. This theorem certifies $\mathcal{L}_2$-stability \tb{(input-output stability)},  which means that for any bounded input signal, the corresponding output signal has finite energy, guaranteeing that the system response is not unbounded.

\begin{theorem}\label{thm:GFTLTI} \cite{chen2025softhardscaledrelative}
 Consider $\Tilde{Y}_c(s),Y_{\text{grid}}(s) \in \mathcal{RH}_\infty^{m\times m}$ in  closed-loop as in Fig. \ref{fig:decentralizedfb}, 
 If, $\forall s=\textup{j}\omega \text{ with }\; \omega \in [0, \infty)$,
\begin{align}
     \operatorname{SRG}({Y}_{\text{grid}}(s)) \cap -\tau\operatorname{SRG}(\Tilde{Y}_c(s)) = \emptyset,\quad\forall \tau \in (0,1],
     \label{eqn:SRGEquiLTI}
\end{align}
then the closed-loop system is $\mathcal{L}_{2}$ stable.
\end{theorem}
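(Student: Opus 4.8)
The plan is to reduce Theorem~\ref{thm:GFTLTI} to a Nyquist-type argument phrased in terms of SRGs. The feedback interconnection in Fig.~\ref{fig:decentralizedfb} has loop operator $L(s) := Z_{\text{grid}}(s)\,\Tilde{Y}_c(s)$ (equivalently, after the Möbius/inversion bookkeeping, the relevant object is the relationship between $\SRG{Y_{\text{grid}}}$ and $\SRG{\Tilde Y_c}$). The key classical fact is that $\mathcal{L}_2$-stability of the closed loop is guaranteed if the ``return difference'' $I + L(\textup{j}\omega)$ is nonsingular for all $\omega$ together with a winding-number/homotopy condition; the role of the parameter $\tau\in(0,1]$ in \eqref{eqn:SRGEquiLTI} is precisely to encode the homotopy that continuously deforms the open loop to the trivial (stable) system while never allowing an encirclement, so that no pole can cross into the right half-plane along the way.

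The steps I would carry out are as follows. First, I would translate the separation condition \eqref{eqn:SRGEquiLTI}: since $\SRG{Y_{\text{grid}}(\textup{j}\omega)}$ and $-\tau\,\SRG{\Tilde Y_c(\textup{j}\omega)}$ are disjoint for every $\tau\in(0,1]$, in particular no common point, I would show this forces $\det\!\bigl(I + \tau\, Z_{\text{grid}}(\textup{j}\omega)\Tilde Y_c(\textup{j}\omega)\bigr)\neq 0$ for all $\omega$ and all $\tau\in(0,1]$ — this is where the SRG interpretation of the spectrum (a matrix $M$ is singular iff $0\in\SRG{M}$, and more generally $-1/\tau \in \operatorname{spec}$ is excluded because the SRG contains the eigenvalue directions, via \eqref{eqn:SRG_operator}) comes in, combined with the inversion Property~\ref{property:inversion} to move between $Y_{\text{grid}}$ and $Z_{\text{grid}}=Y_{\text{grid}}^{-1}$. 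Second, using the Sum of operators Property~\ref{property:sum} (with the chord property holding for one of the two SRGs, as needed for the set arithmetic $\SRG{Y_{\text{grid}}+\tau\Tilde Y_c}\subseteq \SRG{Y_{\text{grid}}} + \tau\SRG{\Tilde Y_c}$, shifted appropriately) I would conclude $0\notin \SRG{Y_{\text{grid}}(\textup{j}\omega)+\tau\Tilde Y_c(\textup{j}\omega)}$, i.e. the closed-loop return-difference matrix is invertible at every frequency and every homotopy parameter. Third, I would invoke the homotopy: at $\tau\to 0^+$ the loop degenerates to the open-loop system, whose transfer functions lie in $\mathcal{RH}_\infty^{m\times m}$ and hence contribute no unstable poles / zero encirclements; since the determinant never vanishes along the path $\tau\in(0,1]$ and is continuous in $(\omega,\tau)$, the generalized Nyquist winding number is invariant along the homotopy and equals its value at $\tau=0$, namely zero. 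By the generalized Nyquist stability criterion this certifies that the closed loop at $\tau=1$ has no right-half-plane poles, giving $\mathcal{L}_2$-stability.

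The main obstacle I anticipate is the bookkeeping at the boundary of the homotopy and the precise invocation of the SRG sum/inversion calculus: Property~\ref{property:sum} requires the chord property for at least one operator, so I would need to either verify that LTI admittances in $\mathcal{RH}_\infty$ automatically satisfy the chord property (true for LTI systems, since their frequency-wise SRG is a disk or arc-like set that is chord-closed) or restrict to the case where one of the two subsystems supplies it; getting the set-containment inclusions to run in the correct direction (so that disjointness of SRGs genuinely implies nonsingularity of the sum, rather than merely being consistent with it) is the delicate point. A secondary subtlety is making the $\tau\to 0^+$ limit rigorous — one must ensure uniform boundedness of $Z_{\text{grid}}(\textup{j}\omega)$ and $\Tilde Y_c(\textup{j}\omega)$ over $\omega\in[0,\infty]$ (including $\omega=\infty$) so that the Nyquist contour is well-defined and the winding number is genuinely constant; this is where properness and stability of the transfer functions (the $\mathcal{RH}_\infty$ hypothesis) is used. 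Since the theorem is quoted from \cite{chen2025softhardscaledrelative}, I would expect the actual proof to either cite that homotopy-Nyquist argument directly or to reproduce it in the streamlined SRG language sketched above.
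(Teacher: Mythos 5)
The paper itself contains no proof of Theorem~\ref{thm:GFTLTI}: the result is imported from \cite{chen2025softhardscaledrelative}, and the only place it is touched internally is the proof of Theorem~\ref{thm:GFT_nonlinear}, which reduces condition \eqref{eqn:cor1} to \eqref{eqn:SRGEquiLTI} and invokes Theorem~\ref{thm:GFTLTI} as a black box. So your sketch can only be judged against the standard argument in the cited literature, and on that score the skeleton is right: SRG separation for all $\tau\in(0,1]$ rules out singularity of the return difference along the homotopy, the Nyquist winding number is therefore constant and equal to its value as $\tau\to 0^+$, and the generalized Nyquist criterion closes the argument.

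Two concrete points where your sketch is shakier than it needs to be. First, the detour through Property~\ref{property:sum} to get $0\notin\SRG{Y_{\text{grid}}+\tau\Tilde{Y}_c}$ imports the chord-property hypothesis, which is not among the theorem's assumptions and which you would then have to verify for the frequency-wise SRG of a general matrix. It is also unnecessary: if $\det\bigl(I+\tau Z_{\text{grid}}(\textup{j}\omega)\Tilde{Y}_c(\textup{j}\omega)\bigr)=0$, there is a nonzero $w$ with $Y_{\text{grid}}(\textup{j}\omega)w=-\tau\Tilde{Y}_c(\textup{j}\omega)w$; since both operators map the \emph{same} input direction to the \emph{same} output vector, they generate identical gain and angle, hence a common point of $\SRG{Y_{\text{grid}}(\textup{j}\omega)}$ and $-\tau\SRG{\Tilde{Y}_c(\textup{j}\omega)}$, directly contradicting \eqref{eqn:SRGEquiLTI}. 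This single-direction argument is the correct way to make "disjointness implies nonsingularity" run in the right direction, which you rightly flagged as the delicate point. Second, the loop in Fig.~\ref{fig:decentralizedfb} contains $Z_{\text{grid}}=Y_{\text{grid}}^{-1}$, which need not lie in $\mathcal{RH}_\infty$ merely because $Y_{\text{grid}}$ does; your step "at $\tau\to 0^+$ the open loop contributes no encirclements" silently assumes an open loop with no right-half-plane poles, so you must either recast the loop in terms of $Y_{\text{grid}}$ via Property~\ref{property:inversion} or account for open-loop poles in the Nyquist count. Well-posedness of the interconnection (invertibility of the return difference at $s=\infty$) also has to be checked separately, since absence of encirclements alone does not give $\mathcal{L}_2$-stability.
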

First, observe that $\operatorname{SRG}({Y}_{\text{grid}}(s)) = \operatorname{SRG}({Z}_{\text{grid}}^{-1}(s))$.  According to Condition~\eqref{eqn:SRGEquiLTI}, the closed-loop system is stable if $\operatorname{SRG}({Y}_{\textup{grid}}(s))$ and $-\tau\operatorname{SRG}(\Tilde{Y}_{c}(s))$ are disjoint at all frequencies. Notably, this stability criterion remains invariant regardless of whether $\tau$ scales $\operatorname{SRG}({Y}_{\textup{grid}}(s))$ or $-\operatorname{SRG}(\Tilde{Y}_{c}(s))$~\cite{Baron2025SRG}. \tb{The distance between two sets $A, B \subset \mathbb{C}$ is defined as $\operatorname{dist}(A,B)=\inf_{x\in A,y\in B}\norm{x-y}_{2}$. The non-zero distance} between SRGs serves as a frequency-domain stability margin, which is formally defined as follows~\cite{Baron2025decentralized}:
 \begin{definition}
     The \textit{SRG-based frequency-wise stability margin} of the feedback system between $\Tilde{Y}_c(s)$ and  ${Y}_{\text{grid}}(s)$ is defined for each $\omega\in[0,\infty)$ as
    \begin{align} 
    \rho(\omega):=\inf_{\tau\in(0,1]}\text{\tb{dist}}( \operatorname{SRG}({Y}_{\text{grid}}(s),-\tau\operatorname{SRG}(\Tilde{Y}_c(s))).
        \label{eqn:stability_margin}
    \end{align}
\end{definition}

The \textit{SRG-based stability margin} is defined at each frequency, providing insights into which frequency bands require improvement for stable operation. Also note that $\rho\in\mathbb{R}_{\geq0}$.  

\subsection{Case study 1: Determining the cSCR for a GFL converter} \label{subsec:scenario1}
{For didactic purposes, and as an initial example, we} analyze how the PLL bandwidth $f_{\text{pll}}$ affects GFL converter stability. The analysis utilizes the setup shown in Fig. \ref{fig:setup_single}, where the SCR is used as an equivalent grid model and the corresponding cSCR is determined using SRG. Simulations are performed using the Simplus toolbox\cite{Simplus}, and parameters are detailed in Appendix \ref{appendix:GFL_GFM}. A typical design of GFL controllers ensures that $Y_c(s)~\in~\mathcal{RH}_\infty^{2\times 2}$. Fig. \ref{fig:GFL_SCR} shows two 2D-projections of $-\SRG{Y_c(s)}$ into the complex plane with $f_{\text{pll}}=30$Hz and $f_{\text{pll}}=70$Hz with $\tau=1$. We only plot $-\SRG{Y_c(s)}$ for $\tau=1$, since as $\tau\to0$ $-\SRG{Y_c(s)}$ shrinks towards the origin. The $\SRG{\text{SCR}}$ is calculated using \eqref{eqn:SRG_operator}. Hence, its SRG corresponds to a point on the positive real axis. Fig.~\ref{fig:GFL_SCR} illustrates in orange the permissible SCR that ensures system stability. The critical SCR is defined by the intersection of $\SRG{\text{SCR}}$ with $-\tau\operatorname{SRG}(Y_c(s))$, marking the minimum grid strength required for stable operation, which is depicted as an orange point. For the GFL converter with $f_{\text{pll}}=30$Hz, the cSCR is found at 1.74 and occurs at 12.4 Hz, corresponding to where  $-\tau\SRG{Y_c(s)}$ intersects with $\SRG{\text{SCR}}$. A similar analysis is done for the GFL converter with $f_{\text{pll}}=70$ Hz. In this case, the cSCR should be higher than 2.81 to maintain stability.  This highlights the sensitivity of GFL converters to grid strength, with stability heavily dependent on  $f_{\text{pll}}$. Increasing $f_{\text{pll}}$ further influences this sensitivity, requiring a stronger grid for stability. These findings are consistent with reported results in \cite{Li_duality_2022,huang2024gain,Huang2020}.
\begin{figure}[h]
   \centering
\begin{subfigure}[b]{0.24\textwidth}
    \centering
    \includegraphics[width=1\textwidth]{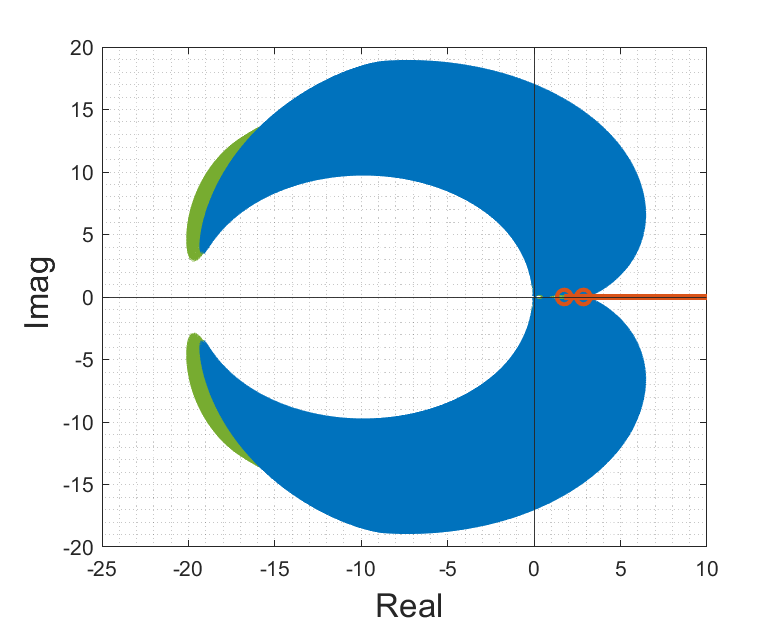}
    \caption{}
\end{subfigure}  
\begin{subfigure}[b]{0.24\textwidth}
    \centering
    \includegraphics[width=1\textwidth]{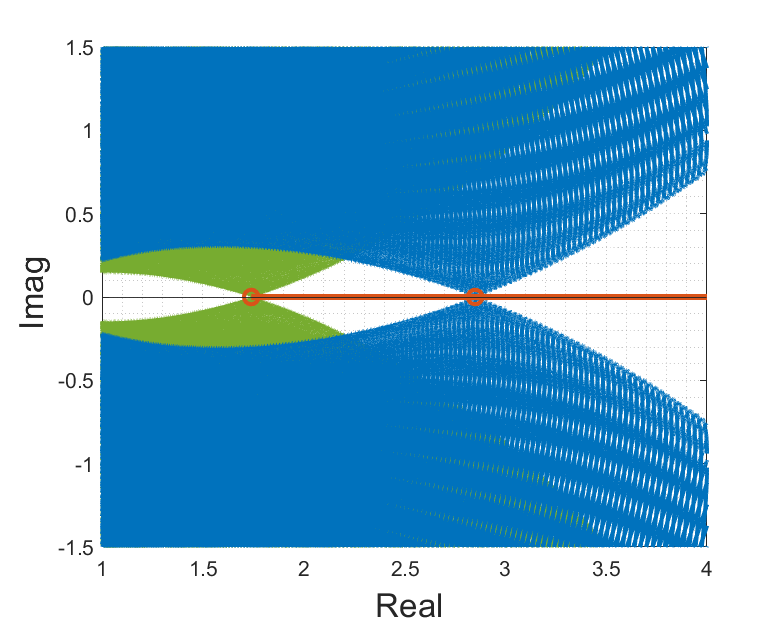}
    \caption{}
\end{subfigure}  
\begin{subfigure}[b]{0.24\textwidth}
    \centering
    \includegraphics[width=1\textwidth]{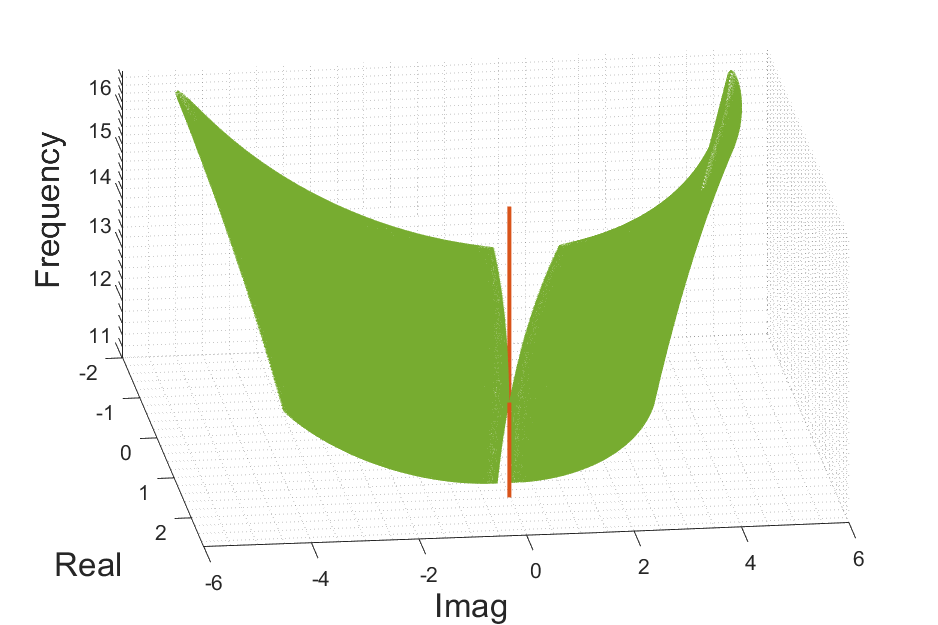}
    \caption{}
\end{subfigure}  
\begin{subfigure}[b]{0.24\textwidth}
    \centering
    \includegraphics[width=1\textwidth]{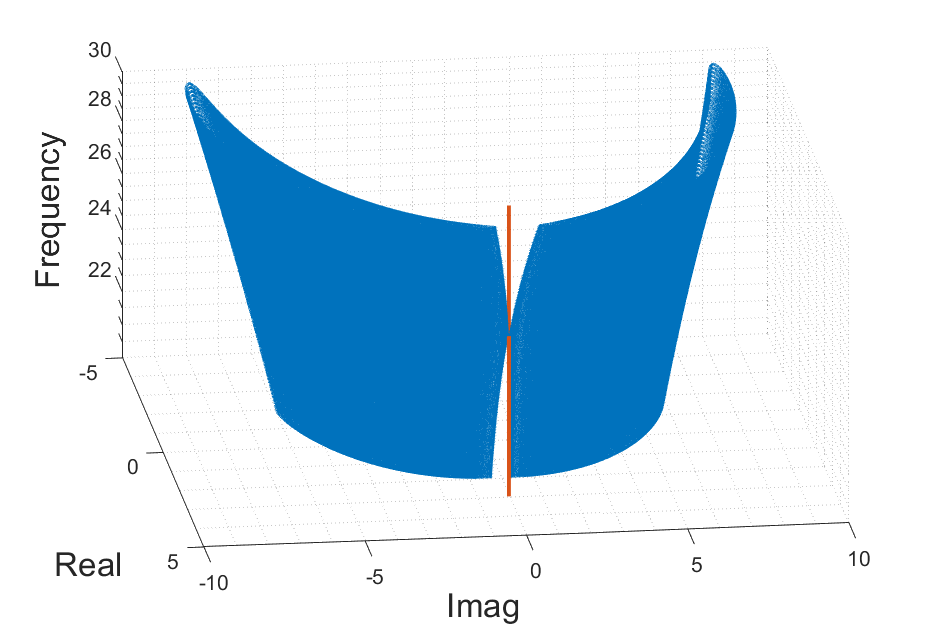}
    \caption{}
\end{subfigure}  
    \caption{\textbf{(a)} 2D SRG projection of the GFL with $f_{\text{pll}}=30$ Hz in green, with $f_{\text{pll}}=70$ Hz in blue, for $\omega\in [10^{-3},10^3]$Hz, and  all allowable SCR in orange. \textbf{(b)} Zoom on the 2D projection. \textbf{(c)} 3D SRG of GFL with $f_{\text{pll}}=30$ Hz, for $f \in [9,17]$Hz with cSCR in orange. \textbf{(d)} 3D SRG of GFL with $f_{\text{pll}}=70$ Hz, for $f \in [20,30]$Hz with cSCR in orange.}
    \label{fig:GFL_SCR}
\end{figure}

\subsubsection*{Time Domain Simulations Case Study 1}\label{subsec:Sim_GFL} 
 We consider three test cases with different SCR $= \{1.67, 1.74, 2\}$ where the frequency response is shown in Fig.~\ref{fig:Sims_GFLex1}. The system is stable for the GFL with $f_{\text{pll}}=30$ Hz if the SCR$\geq 1.74$. The critical case (SCR $= 1.74$) exhibits oscillations at approximately $12.4$ Hz, as depicted in Fig.~\ref{fig:GFL_SCR}. An SCR below $1.74$ leads to system instability, while values above the critical SCR (cSCR) result in stable grid operation. The same simulation is carried out for the GFL with $f_{\text{pll}}=70$Hz, with the SCR=$\{2.70,2.81,3.03\}$ showing analogous results. Under destabilizing SCR conditions, the observed oscillation frequencies match the frequencies at which the intersection between -$\tau\SRG{Y_c(s)}$ and $\SRG{\text{cSCR}}$  occurs for each converter.
\begin{figure}[ht]
    \centering
    \includegraphics[width=1\linewidth]{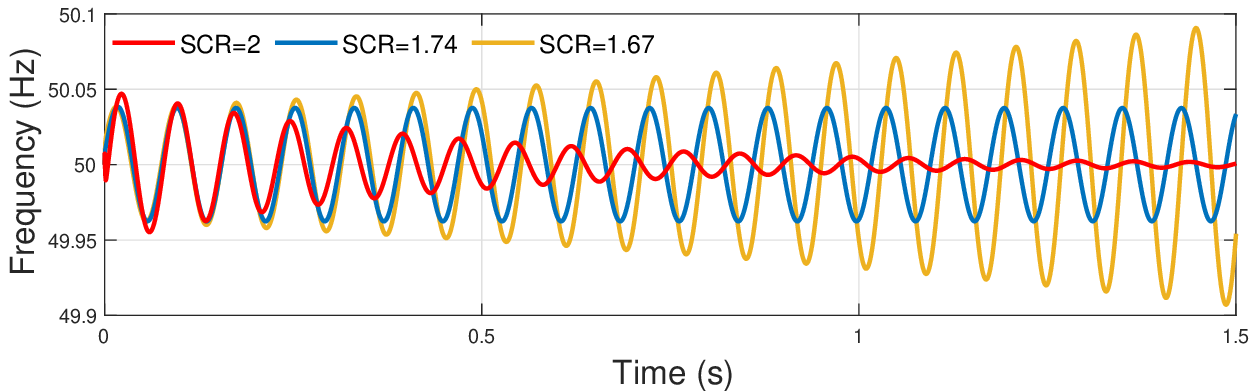}
    \caption{Frequency of the GFL converter with SCR=$\{1.67,1.74,2\}$ for $f_{\text{pll}}=30$Hz.}
    \label{fig:Sims_GFLex1}
    \includegraphics[width=1\linewidth]{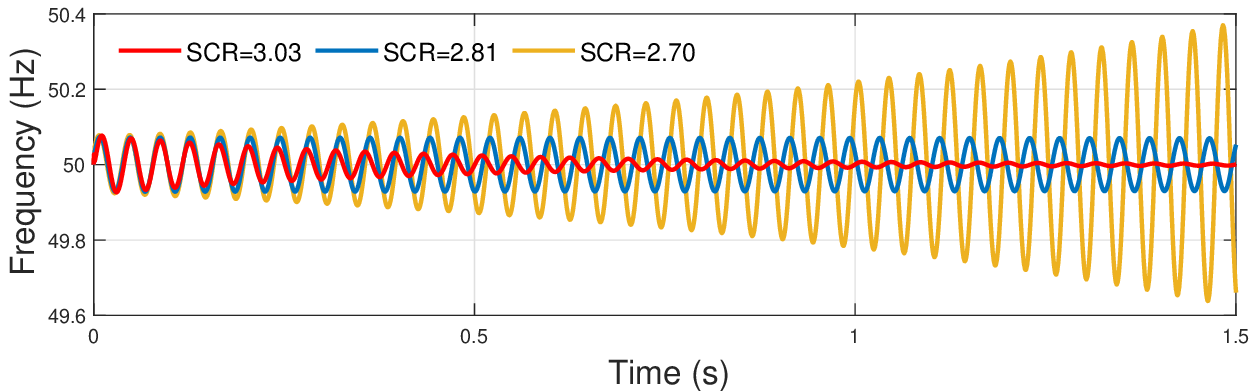}
    \caption{Frequency of the GFL converter with SCR=$\{2.70,2.81,3.03\}$ for $f_{\text{pll}}=70$Hz.}
    \label{fig:Sims_GFLex2}
\end{figure} 

\subsection{Relation to Traditional Gain and Phase Margins}
\tb{Classical gain and phase margins can be interpreted as special, scalar reductions of the geometric information contained in the SRG. For the gain margin, the standard MIMO extension uses the product of the largest singular value of each system in the feedback loop, which determines the maximum admissible increase in magnitude before the Nyquist plot reaches the critical point\cite{zhou1998,khalil2002}. In SRG terms, enclosing the SRG of each subsystem within a disk of radius equal to its maximum singular value produces this scalar projection: the minimal separation between these disks yields the classical gain margin. Because SRGs retain additional directional information that singular values discard, this disk-based reduction is more conservative than the full SRG separation test.}

\tb{The classical notion of phase margin, well defined for SISO systems, does not extend directly to the MIMO case. In such settings, a meaningful comparison arises with the numerical-range–based small-phase theorem~\cite{Wang2024} (see Section \ref{sec:comparison}), where a phase margin can be obtained by assuming each subsystem is sectorial over the frequency spectrum~\cite{Woolcock2023} and subtracting their maximal and minimal phase angles. This assumption is restrictive, as sectoriality rarely holds across all frequencies. In contrast, the SRG-based phase margin evaluates the SRG of each subsystem at every frequency, and for calculating the phase margin, an over-approximation by its maximum phase is made; stability is guaranteed when the sum of the two systems in feedback remains below~$\pi$~\cite{Baron2025SGP}.}
 
\tb{Overall, gain and phase margins correspond to scalar projections of the SRG geometry, whereas the SRG-based margin preserves the full coupled magnitude–phase structure of the input–output relation. This typically yields a larger certified stability region and reduces the conservatism inherent in traditional margin-based criteria.}

\section{SRG-based Stability Certification for Converters with Nonlinear Loads} \label{sec:conditions_nonlinear}
  
This section develops stability criteria for converters connected to CPLs, focusing on the feedback interconnection in Fig.~\ref{fig:decentralizedfb}. Our analysis builds upon Lemma~\ref{lemma:approx_SRG} that establishes a bound for the SRG of a CPL, a characterization of the linear and nonlinear energy CPL response derived in Lemma~\ref{lemma:CPL_eps_volterra}, and Theorem~\ref{thm:GFT_nonlinear} that provides stability conditions for converters connected to a CPL. 

\subsection{Frequency-wise SRG for CPL}
We formalize the CPL bound as follows:

\begin{lemma}\label{lemma:approx_SRG}
 Assume a CPL modeled by \eqref{eqn:Zgrid_CPL}, where $ p_c,~q_c~\in~\mathbb{R}_{\geq~0}$, and assume that there is $v_{\min}$ such that  $ \|v\|_2\geq v_{\min} \in\mathbb{R}_{>0}$. Then, $ \SRG{y_{cp}}~\subseteq~\SRG{\widehat{y_{cp}}}$, where $\SRG{\widehat{y_{cp}}}$ is defined as
\begin{align}
   \resizebox{0.85\hsize}{!}{$\SRG{\widehat{y_{cp}}}=\left\{re^{\mathrm{j}\alpha}|\forall \alpha \in [-\pi,\pi],\frac{\sigma_{\max}(M)}{v_{\min}^2} \geq r\geq 0\right\}$}, \label{eqn:bound_CPL}
\end{align}
where $M=\begin{bmatrix}
        p_c & q_c \\
        -q_c & p_c    
    \end{bmatrix}$.
\end{lemma}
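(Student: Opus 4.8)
The plan is to show that every point of $\SRG{y_{cp}}$ lies in the closed centered disk of radius $\sigma_{\max}(M)/v_{\min}^2$ that constitutes $\SRG{\widehat{y_{cp}}}$. By \eqref{eqn:SRG_operator_nonlinear}, a point of $\SRG{y_{cp}}$ has modulus equal to an incremental ratio $\|i_2-i_1\|_2/\|v_2-v_1\|_2$ and argument $\pm\angle(v_2-v_1,\,i_2-i_1)\in[-\pi,\pi]$, where $v_1\neq v_2$ are admissible voltage trajectories and $i_k=y_{cp}(v_k)$. Since this argument is unconstrained within $[-\pi,\pi]$, it suffices to establish the incremental (Lipschitz) bound $\|i_2-i_1\|_2 \le (\sigma_{\max}(M)/v_{\min}^2)\,\|v_2-v_1\|_2$; the inclusion then follows because any complex number whose modulus is at most the radius lies in the disk, whose $\alpha$ ranges over all of $[-\pi,\pi]$.

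To obtain the incremental bound I would work pointwise in time, using that $y_{cp}$ is a static (memoryless) map. Writing $i_k(t)=\|v_k(t)\|_2^{-2}M v_k(t)$ from \eqref{eqn:Zgrid_CPL} and using $M^\top M=(p_c^2+q_c^2)I$ (so that $M$ is a scaled rotation and $\|Mw\|_2=\sqrt{p_c^2+q_c^2}\,\|w\|_2$), one reduces to estimating $g(v_2(t))-g(v_1(t))$ for the sphere-inversion map $g(x):=x/\|x\|_2^2$. The crux is the exact distortion identity for inversion in a sphere,
\[
\bigl\|g(x_2)-g(x_1)\bigr\|_2=\frac{\|x_2-x_1\|_2}{\|x_1\|_2\,\|x_2\|_2},
\]
obtained by expanding both sides in terms of $\|x_1\|_2^2$, $\|x_2\|_2^2$ and $\langle x_1,x_2\rangle$. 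This is the step I expect to be the main obstacle: the quick alternative, a mean-value estimate from $\|Dg(x)\|=\|x\|_2^{-2}$, is not directly valid because the admissible set $\{\,x:\|x\|_2\ge v_{\min}\,\}$ is nonconvex, so the segment joining $x_1$ to $x_2$ may leave it; the identity above sidesteps this entirely. Invoking the hypothesis $\|v_k(t)\|_2\ge v_{\min}$ (understood pointwise in time) then yields $\|i_2(t)-i_1(t)\|_2\le \sqrt{p_c^2+q_c^2}\,v_{\min}^{-2}\,\|v_2(t)-v_1(t)\|_2$, and squaring and integrating over $t\in[0,\infty)$ gives $\|i_2-i_1\|_2\le \sqrt{p_c^2+q_c^2}\,v_{\min}^{-2}\,\|v_2-v_1\|_2$.

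Finally I would note that $\sigma_{\max}(M)=\sqrt{p_c^2+q_c^2}$ (again from $M^\top M=(p_c^2+q_c^2)I$), so for every admissible pair $(v_1,v_2)$ the incremental gain ratio is at most $\sigma_{\max}(M)/v_{\min}^2$; combined with the free phase in $[-\pi,\pi]$, this places $\SRG{y_{cp}}$ inside $\SRG{\widehat{y_{cp}}}$ as in \eqref{eqn:bound_CPL}. As a side remark, because $M$ is a scaled rotation the only inequality used is the replacement of $\|v_k(t)\|_2$ by $v_{\min}$, so the radius bound is essentially tight as $\|v(t)\|_2\to v_{\min}$; moreover the sign hypothesis $p_c,q_c\ge 0$ is not actually needed for this particular bound and is presumably retained only as a standing assumption invoked in the subsequent results.
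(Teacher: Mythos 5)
Your proof is correct, and while it shares the paper's overall strategy (bound the incremental gain by $\sigma_{\max}(M)/v_{\min}^2$ and let the phase range over all of $[-\pi,\pi]$), the key technical step is genuinely different. The paper factors out $\sigma_{\max}(M)$ and then bounds the increment of $f(v)=v/\|v\|_2^2$ via the mean value inequality, using $\|Df(v)\|_2=1/\|v\|_2^2$ and taking the supremum over the segment $[v_1,v_2]$. You instead use the exact sphere-inversion identity $\|g(x_2)-g(x_1)\|_2=\|x_2-x_1\|_2/(\|x_1\|_2\|x_2\|_2)$, together with the observation that $M$ is a scaled rotation with $\|Mw\|_2=\sqrt{p_c^2+q_c^2}\,\|w\|_2=\sigma_{\max}(M)\|w\|_2$. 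Your route buys something real: the mean value argument as written in the paper is not airtight, because the set $\{v:\|v\|_2\ge v_{\min}\}$ is nonconvex, so the segment $[v_1,v_2]$ may contain points with $\|v\|_2<v_{\min}$ (even $v=0$), where $\sup_{v\in[v_1,v_2]}\|Df(v)\|_2$ exceeds $1/v_{\min}^2$ or is unbounded. Your identity shows the claimed Lipschitz bound holds regardless, so it both closes that gap and shows the bound is attained in the limit $\|v\|_2\to v_{\min}$; your side remarks (tightness, and that $p_c,q_c\ge0$ is not needed for this lemma) are also correct. The only caveat is interpretive rather than mathematical: the paper treats $v_1,v_2$ directly as elements of the Hilbert space with $\|\cdot\|_2$ the $\mathcal{L}_2$ norm, whereas you work pointwise in time and then integrate; the inversion identity is valid in either reading, so both yield the stated disk.
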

\begin{proof}
    The proof can be found in the Appendix \ref{appendix:approximation}. 
\end{proof}

\tb{Lemma \ref{lemma:approx_SRG} establishes a} bound for $\SRG{y_{cp}}$ using a closed disk, \tb{it describes the behavior of the constant power load under normal operation, independently of the point of operation.} The bound reveals two relationships that impact system stability: first, lower values of $v_{\min}$ result in a larger $\SRG{\widehat{y_{cp}}}$, meaning that, as the minimum voltage decreases, the bounded SRG region expands. Second, both $p_c$ and $q_c$ exhibit a direct correlation with $\SRG{\widehat{y_{cp}}}$ size, where increases in either one lead to a larger $\SRG{\widehat{y_{cp}}}$, with both conditions negatively affecting system stability.

\tb{To enable a  frequency-wise analysis, we interpret the CPL as a bounded nonlinear operator that is, in practice, \emph{approximately frequency-preserving}.  Specifically, its input–output behavior is largely governed by a linear, frequency-preserving component, while the remaining nonlinear contribution is bounded and comparatively small. This viewpoint differs from a classical Taylor linearization, where the nonlinear map is replaced by a \textit{linear approximation} whose gain and phase depend on the operating point, and it is only valid in a small neighborhood of the operating point. By contrast, we work with a \textit{global over-approximation} obtained by decomposing the voltage as $v = v_0 + v_\delta$, where $v_0$ denotes an admissible operating voltage, with a norm of $V_0 := \|v_0\|_2 \geq v_{\min}$, and define the ripple ratio $\rho := \|v_\delta\|_\infty / V_0$. With this notation, the CPL current admits an exact decomposition into a frequency-preserving component and a bounded nonlinear remainder. The latter collects all harmonic terms, whose energy scales with the ripple~$\rho$ and is typically small under standard total harmonic distortion (THD) levels~\cite{Standard_Harmonics}. Since THD bounds the RMS energy of all non-fundamental components, the remainder remains small whenever THD lies within normal operating limits (typically $<5\%$, corresponding to $\rho<5\%$). The bound on the nonlinear remainder is formalized in Lemma \ref{lemma:CPL_eps_volterra}.}

\begin{lemma}
\label{lemma:CPL_eps_volterra}
\tb{Consider the CPL defined in \eqref{eqn:Zgrid_CPL}, and assume $\|v\|_2\ge v_{\min}$. Let $v=v_0+v_\delta$ with $V_0=\|v_0\|_2>0$ and ripple ratio $\rho=\|v_\delta\|_\infty/V_0$. Let $i_{\mathrm{har}}$ denote the nonlinear remainder of the CPL current. Then
\[
\|i_{\mathrm{har}}\|_2\ \le\ \frac{\sigma_{\max}(M)}{v_{\min}^2}\;
\frac{(\rho+1)(\rho+2)}{1-(2\rho+\rho^2)}\;\|v_\delta\|_2,
\]
and hence the CPL is $\varepsilon(\rho)$–frequency–preserving with
\[
\varepsilon(\rho)=
\frac{\sigma_{\max}(M)}{v_{\min}^2}\;
\frac{(\rho+1)(\rho+2)}{1-(2\rho+\rho^2)}.
\]}
\end{lemma}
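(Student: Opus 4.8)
The plan is to construct an \emph{exact} split of the CPL current into a frequency-preserving (static, hence LTI) term and a nonlinear remainder, and then estimate the remainder in the $\mathcal{L}_2$-norm. Writing the CPL as $i = y_{cp}(v)\,v = M v / \|v\|_2^2$ with $M$ as in Lemma~\ref{lemma:approx_SRG}, I would take the frequency-preserving component to be the constant-gain map $i_{\mathrm{fp}} := (M/V_0^2)\,v$ — a single matrix acting on $v$, which therefore generates no new frequencies — so that the nonlinear remainder is $i_{\mathrm{har}} := i - i_{\mathrm{fp}}$, the quantity to be bounded. The key conceptual point, already flagged before the lemma, is that $i_{\mathrm{fp}}$ is anchored only through the scalar $V_0$ and not through a Jacobian at the operating point, so what follows is a \emph{global} over-approximation rather than a \emph{local} (Taylor) one; a gain anchored at $v_{\min}$ instead of $V_0$ would fail, since $\|v\|_2^2 - v_{\min}^2$ need not be small.

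The computation itself is elementary. Substituting $v = v_0 + v_\delta$ gives $\|v\|_2^2 = V_0^2 + 2 v_0^\top v_\delta + \|v_\delta\|_2^2$, hence
\[
 i_{\mathrm{har}} = M v\!\left(\frac{1}{\|v\|_2^2}-\frac{1}{V_0^2}\right) = -\,\frac{M v\,\bigl(2 v_0^\top v_\delta + \|v_\delta\|_2^2\bigr)}{V_0^2\,\|v\|_2^2}.
\]
I would then bound each factor pointwise in time: for the numerator, $\|M v\|_2 \le \sigma_{\max}(M)\,\|v\|_2 \le \sigma_{\max}(M)\,V_0(1+\rho)$ using $\|v\|_2 \le V_0 + \|v_\delta\|_\infty$, and $\bigl|2 v_0^\top v_\delta + \|v_\delta\|_2^2\bigr| \le \|v_\delta\|_2\bigl(2V_0 + \|v_\delta\|_\infty\bigr) = \|v_\delta\|_2\,V_0(2+\rho)$; for the denominator, $\|v\|_2^2 \ge V_0^2\bigl(1-(2\rho+\rho^2)\bigr)$, which also keeps $\|v\|_2$ bounded away from zero provided $2\rho+\rho^2 < 1$. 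Multiplying these estimates produces a pointwise inequality $\|i_{\mathrm{har}}(t)\|_2 \le C\,\|v_\delta(t)\|_2$ with $C = \frac{\sigma_{\max}(M)(1+\rho)(2+\rho)}{V_0^2(1-2\rho-\rho^2)}$; squaring and integrating over $t \in [0,\infty)$ gives $\|i_{\mathrm{har}}\|_2 \le C\,\|v_\delta\|_2$, and finally $V_0 \ge v_{\min}$ lets me replace $V_0^2$ by $v_{\min}^2$, yielding exactly the stated bound.

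The conclusion about $\varepsilon(\rho)$ is then immediate: I have exhibited $i = i_{\mathrm{fp}} + i_{\mathrm{har}}$ with $i_{\mathrm{fp}}$ frequency-preserving and $\|i_{\mathrm{har}}\|_2 \le \varepsilon(\rho)\,\|v_\delta\|_2$ for $\varepsilon(\rho) = \frac{\sigma_{\max}(M)}{v_{\min}^2}\cdot\frac{(\rho+1)(\rho+2)}{1-2\rho-\rho^2}$, which is precisely the definition of an $\varepsilon(\rho)$-frequency-preserving operator.

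I do not expect a genuinely hard step; the main care is bookkeeping. One must keep the instantaneous Euclidean norm $\|v(t)\|_2$ appearing in the CPL law distinct from the signal $\mathcal{L}_2$- and $\mathcal{L}_\infty$-norms of $v_\delta$, and, crucially, apply the lower bounds in the right order — $\|v\|_2^2 \ge V_0^2(1-2\rho-\rho^2)$ for the denominator and $V_0 \ge v_{\min}$ only at the very end — since a sloppier route yields a looser constant such as $(2+\rho)/(1-\rho)$ rather than the claimed one. It is also worth stating the standing smallness assumption $\rho < \sqrt{2}-1$ explicitly: without $1-2\rho-\rho^2 > 0$ the denominator is not positive and the over-approximation is vacuous. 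This is benign in practice, since THD limits keep $\rho$ well below $5\%$.
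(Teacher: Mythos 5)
Your proof is correct and arrives at exactly the stated constant, via the same decomposition as the paper: in both arguments the frequency-preserving part is $i_{\mathrm{lin}} = M v / V_0^2$ and the remainder is everything else. The difference is purely in how the remainder is bounded. The paper writes $1/\|v\|_2^2 = V_0^{-2}\sum_{k\ge 0}(-1)^k\xi^k$ with $\xi = (2 v_0^\top v_\delta + v_\delta^\top v_\delta)/V_0^2$, isolates the $k=0$ term, and resums the tail as a geometric series; you instead compute the difference $1/\|v\|_2^2 - 1/V_0^2$ in closed form and bound numerator and denominator directly. The two routes are algebraically equivalent (summing $\sum_{k\ge 1}(-\xi)^k = -\xi/(1+\xi)$ recovers your single fraction), and the individual estimates coincide: $\|v\|_\infty \le V_0(1+\rho)$, the $(2+\rho)$ factor from the cross term, and the $1-(2\rho+\rho^2)$ lower bound on the denominator. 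What your version buys is the avoidance of any series-convergence discussion and a sharper statement of the standing assumption: you only need $2\rho+\rho^2<1$, i.e.\ $\rho<\sqrt{2}-1$, for the bound to be non-vacuous, whereas the paper invokes the sufficient but conservative condition $\rho<0.1$ to guarantee uniform convergence of the Neumann series. Your caution about keeping the instantaneous Euclidean norm in the CPL law distinct from the signal $\mathcal{L}_2$- and $\mathcal{L}_\infty$-norms is also well placed; the paper's proof mixes the two in the same implicit way, so your bookkeeping is, if anything, the more careful of the two.
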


\begin{proof}
    The proof can be found in the Appendix \ref{appendix:lemmaCPL}.
\end{proof}

\tb{The function $\varepsilon(\rho)$ bounds the $\mathcal L_2$–gain from the voltage ripple to the harmonic remainder of the CPL current. It increases monotonically with the ripple level $\rho$ but remains small for $\rho \le 5\%$, a range consistent with standard grid-code harmonic limits~\cite{Standard_Harmonics}. Lemma~\ref{lemma:CPL_eps_volterra} therefore shows that, when the voltage ripple is small, the nonlinear contribution is uniformly bounded and comparatively minor with respect to the linear component. In this regime, the CPL behaves predominantly as a frequency-preserving device, which allows its SRG to be evaluated on a per-frequency basis without introducing significant error. Empirical observations and numerical validation (see Subsection~\ref{subsec:Scenario2_nonlinear}) confirm that, under typical operating conditions ($\rho\le 5\%$), the nonlinear terms are indeed small, thus supporting the use of a per-frequency SRG-based stability analysis and the justification for Assumption \ref{assumption:nonharmonics}.} 

\begin{assumption}\label{assumption:nonharmonics}
\tb{The CPL exhibits a frequency-preserving response: for any input voltage $v(t)~=~\sum_{k=1}^N~\hat{v}_k~\cos(\omega_k t + \phi_k)$, the current responds as $i(t) = \sum_{k=1}^N f_k(\hat{v}_k) \cos\big(\omega_k t + h_k(\phi_k)\big),$ where $f_k(\cdot)$ and $h_k(\cdot)$ are independent of $\omega_k$.} 
\end{assumption} 

\begin{remark}[Frequency-wise CPL model]
\label{remark:frequency_CPL}
\tb{Lemma~\ref{lemma:CPL_eps_volterra} implies that the CPL admits a frequency-wise decomposition: a dominant frequency-preserving response together with an nonlinear term whose $\mathcal{L}_2$-gain is bounded by $\varepsilon(\rho)$. When the ripple is small, the nonlinear current is also small, and equipped with Assumption \ref{assumption:nonharmonics}, the SRG disk~\eqref{eqn:bound_CPL} applies uniformly at each frequency.}
\end{remark}

\begin{example}[SRG of linear load, CPL and their sum]
We compute the SRG for a linear load and a CPL as $\SRG{y_{cp} + y_{l}(s)}$. The following expressions define both loads:
\begin{align*}
    y_{l}(s) &= \begin{bmatrix}
        5+0.5s & -0.5 \\
        0.5 & 5+0.5s    
    \end{bmatrix}^{-1},
    y_{cp} &= \frac{1}{\|v\|_2^2} \begin{bmatrix}
        0.4 & 0.7 \\
        -0.7 & 0.4
    \end{bmatrix},
\end{align*}
with $v_{\min} = 0.9$.  $\SRG{y_{l}(s)}$ is computed using \eqref{eqn:SRG_operator_transferfunction}, while $\SRG{y_{cp}}$ is derived from \eqref{eqn:SRG_operator_nonlinear}. 

\begin{figure}[h]
    \centering
\begin{subfigure}[b]{0.155\textwidth}
    \centering
    \includegraphics[width=1\linewidth]{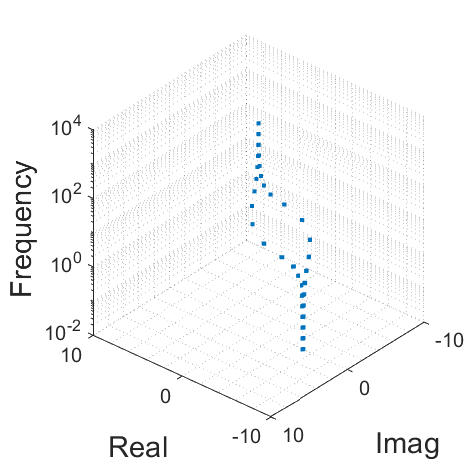}
    \caption{}
\end{subfigure}
\begin{subfigure}[b]{0.155\textwidth}
    \centering
    \includegraphics[width=1\linewidth]{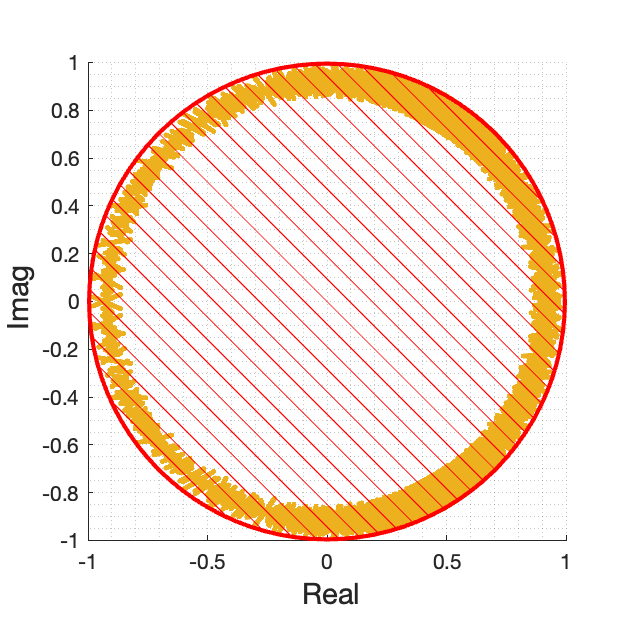}
    \caption{}
\end{subfigure}
\begin{subfigure}[b]{0.155\textwidth}
    \centering
    \includegraphics[width=1\linewidth]{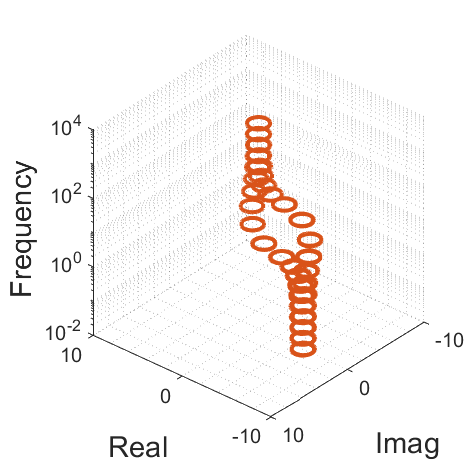}
    \caption{}
\end{subfigure}
\caption{(a) 3D $\SRG{y_{l}(s)}\forall \omega\in [10^{-2},10^4]$Hz in blue  (b) the exact 2D $\SRG{y_{cp}}$ in yellow and $\SRG{\widehat{y_{cp}}}$  as the hatched red disk. (c) $\SRG{\widehat{y_{cp}}}+\SRG{y_{l}(s)}$ in orange. }
    \label{fig:SRG_example}
\end{figure} 

First, Fig.~\ref{fig:SRG_example}(a) shows $\SRG{y_{l}(s)}$ over  $\omega \in [10^{-2}, 10^4]$~Hz. Next, Fig.~\ref{fig:SRG_example}(b) shows $\SRG{{y_{cp}}}$ and $\SRG{\widehat{y_{cp}}}$, which is calculated using Lemma~\ref{lemma:approx_SRG}. Notably, $\SRG{\widehat{y_{cp}}}$ remains constant across all frequencies aligned with Remark~\ref{remark:frequency_CPL}. Since $\SRG{\widehat{y_{cp}}}$ satisfies the chord property, $\SRG{y_{cp} + y_{l}(s)}$ can be approximated, using Property~\ref{property:sum} as $\SRG{\widehat{y_{cp}}} + \SRG{y_{l}(s)}$ depicted in Fig.~\ref{fig:SRG_example}(c). Note that we only plot the operator's border at some frequencies for better visibility.
\end{example} 
\subsection{Stability conditions for a converter connected to a grid with non-linear loads}
The following theorem constitutes our main contribution:
\begin{theorem}\label{thm:GFT_nonlinear} 
Suppose Assumption~\ref{assumption:nonharmonics} holds, and consider $Y_{c}(s), y_{l}(s) \in \mathcal{RH}_\infty^{2\times2}$ and $y_{cp} \in \mathcal{L}_{2}$ interconnected in closed loop as shown in Fig.~\ref{fig:decentralizedfb}. If for all $s = \textup{j}\omega$ with $\omega \in (0,\infty]$ and for all $\tau \in (0,1]$,
\begin{align}
    -\left(\operatorname{SRG}(y_{l}(s)) + \operatorname{SRG}(\widehat{y_{cp}})\right) \cap \tau \operatorname{SRG}(Y_{c}(s)) = \emptyset, \label{eqn:cor1}
\end{align}
then the closed-loop system is $\mathcal{L}_2$-stable.
\end{theorem}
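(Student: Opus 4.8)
The plan is to reduce Theorem~\ref{thm:GFT_nonlinear} to the separation criterion of Theorem~\ref{thm:GFTLTI} by replacing the nonlinear grid operator $Y_{\text{grid}}=y_{l}(s)+y_{cp}$ with a conservative description whose SRG is the frequency-independent disk of Lemma~\ref{lemma:approx_SRG} added to $\operatorname{SRG}(y_{l}(s))$. First I would record the preliminaries: under the standing hypothesis $\|v\|_2\ge v_{\min}$, equation~\eqref{eqn:Zgrid_CPL} shows that $y_{cp}$ has incremental gain dominated by $\sigma_{\max}(M)/v_{\min}^2$, so $Y_{\text{grid}}=y_{l}(s)+y_{cp}$ is a bounded---possibly nonlinear---operator, and, exactly as in Theorem~\ref{thm:GFTLTI}, one may work with the admittance $Y_{\text{grid}}=Z_{\text{grid}}^{-1}$ in the interconnection of Fig.~\ref{fig:decentralizedfb}, which is well posed in the $\mathcal{L}_2$ setting.

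Second, I would carry out the frequency-wise reduction. Assumption~\ref{assumption:nonharmonics} renders the CPL frequency-preserving, so that an excitation concentrated at $\omega$ produces a response only at $\omega$; consequently the SRG of $y_{cp}$ decomposes frequency by frequency, and Lemma~\ref{lemma:approx_SRG} bounds each slice uniformly in $\omega$ by the closed disk $\operatorname{SRG}(\widehat{y_{cp}})$ of radius $\sigma_{\max}(M)/v_{\min}^2$ centered at the origin. Because that disk satisfies the chord property, Property~\ref{property:sum} together with Lemma~\ref{lemma:approx_SRG} gives, at every $s=\textup{j}\omega$,
\[
\operatorname{SRG}\!\big(Y_{\text{grid}}(\textup{j}\omega)\big)\ \subseteq\ \operatorname{SRG}\!\big(y_{l}(\textup{j}\omega)\big)+\operatorname{SRG}(\widehat{y_{cp}}),
\]
which is exactly the over-approximation illustrated in Fig.~\ref{fig:SRG_example}(c).

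Third, I would return to the criterion of Theorem~\ref{thm:GFTLTI}. Since $\widetilde{Y}_c(s)=J(\theta)Y_c(s)J(-\theta)$ with $J(\theta)$ orthogonal, conjugation by $J(\theta)$ preserves $\mathcal{L}_2$-norms and angles, hence $\operatorname{SRG}(\widetilde{Y}_c(\textup{j}\omega))=\operatorname{SRG}(Y_c(\textup{j}\omega))$ for every $\omega$; this is why $Y_c$, rather than $\widetilde{Y}_c$, appears in~\eqref{eqn:cor1}. Using the elementary identity $A\cap(-\tau B)=\emptyset\Leftrightarrow(-A)\cap(\tau B)=\emptyset$ and the inclusion just established, hypothesis~\eqref{eqn:cor1}---emptiness of $-\big(\operatorname{SRG}(y_{l}(\textup{j}\omega))+\operatorname{SRG}(\widehat{y_{cp}})\big)\cap\tau\operatorname{SRG}(Y_c(\textup{j}\omega))$ for all $\omega$ and all $\tau\in(0,1]$---forces $\operatorname{SRG}(Y_{\text{grid}}(\textup{j}\omega))\cap-\tau\operatorname{SRG}(\widetilde{Y}_c(\textup{j}\omega))=\emptyset$ for the same $\omega$ and $\tau$, which is condition~\eqref{eqn:SRGEquiLTI} with the over-approximated grid operator in place of a linear one. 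Applying the nonlinear form of the SRG feedback theorem underlying Theorem~\ref{thm:GFTLTI} \cite{chen2025softhardscaledrelative}---whose well-posedness and chord-property hypotheses are inherited from the LTI block $Y_c$ and the chordal over-approximation of the grid SRG, with the homotopy in $\tau$ supplying the required strict-separation argument---then certifies $\mathcal{L}_2$-stability of the closed loop.

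I expect the genuine difficulty to lie in the second step: a per-frequency, two-dimensional SRG treatment is not in general legitimate for a bona fide nonlinear operator, since the SRG of a nonlinear map need not decompose across frequencies. This is precisely what Assumption~\ref{assumption:nonharmonics} and the ripple estimate of Lemma~\ref{lemma:CPL_eps_volterra} license: the cross-frequency (harmonic) part of the CPL current has $\mathcal{L}_2$-gain at most $\varepsilon(\rho)$, which is negligible for ripple ratios $\rho\le 5\%$ compatible with grid-code harmonic limits, so the CPL behaves essentially as the frequency-preserving device whose per-frequency SRG is the disk of Lemma~\ref{lemma:approx_SRG}. The proof must therefore lean explicitly on those two results to license the reduction; everything afterward is the set algebra and chord-property bookkeeping indicated above.
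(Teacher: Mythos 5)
Your proposal is correct and follows essentially the same route as the paper's proof: over-approximate $\operatorname{SRG}(Y_{\text{grid}})$ via the chord property of the disk $\operatorname{SRG}(\widehat{y_{cp}})$, Property~\ref{property:sum}, and Lemma~\ref{lemma:approx_SRG}, then use unitary invariance of the SRG under conjugation by $J(\theta)$ to identify $\operatorname{SRG}(\Tilde{Y}_c(s))$ with $\operatorname{SRG}(Y_c(s))$ and reduce~\eqref{eqn:cor1} to the separation condition of Theorem~\ref{thm:GFTLTI}. Your closing remark about the per-frequency treatment of the nonlinear CPL being the genuinely delicate point (licensed by Assumption~\ref{assumption:nonharmonics} and Lemma~\ref{lemma:CPL_eps_volterra}) is in fact more explicit than the paper's own appendix, which leaves that step implicit.
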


\begin{proof}
    The proof can be found in the Appendix \ref{appendix:proofcor} 
\end{proof}

The main advantage of Theorem \ref{thm:GFT_nonlinear} is the possibility of analyzing the stability of the power system by comparing the $\operatorname{SRG}$ of the grid and the converter independently, providing insights into the stability limits of the converter or grid under study. Furthermore, this method allows the inclusion of non-linear loads into the small-signal stability assessment. \tb{With appropriate modifications, this approach can be extended to derive decentralized stability conditions\cite{Baron2025decentralized} (see also Remark \ref{rem:decentralized_setting} later on this paper).}

\begin{remark}[Alternative result]
The frequency-wise analysis in our approach requires Remark~\ref{remark:frequency_CPL} to obtain non-conservative results. One may alternatively apply the method from \cite[Thm 4]{Chaffey_2023}, which compares the sets $\tau  \operatorname{SRG}(Y_{c}(s))$ and $-(\operatorname{SRG}(y_{l}(s)~+~ y_{cp}))$ simultaneously across all frequencies. This alternative formulation leads to more conservative stability conditions than our frequency-wise method \cite{Baron2025SRG}.
\end{remark}
\begin{remark}[$dq$-{frames}]
Theorem \ref{thm:GFT_nonlinear} {implicitly} establishes that the stability criterion remains invariant regardless of the specific $dq$-frame employed by individual converters {as the SRG is unitarily invariant} (See Appendix \ref{appendix:proofcor}). This key property is not only applicable to $dq$-{frames} but also to any unitary transformation and, therefore, offers a significant advantage when assessing the stability of heterogeneous inverter-based systems, as it ensures a unified analytical framework without requiring converter-specific adjustments\cite{Shah2022_dqframes}. The practical benefits of this universality are demonstrated in Section \ref{sec:system_analysis}.
\end{remark} 
 
\subsection{Case Study 2:  GFM connected to non-linear grid model}\label{subsec:Scenario2_nonlinear}

In this case study, we analyze a GFM converter connected to three components: a linear impedance, a CPL, and a GFL converter connected to a constant DC current source. The setup is illustrated in Fig. \ref{fig:Nonlinearload_example} and the parameters and controllers are found in Appendix \ref{appendix:GFL_GFM} .

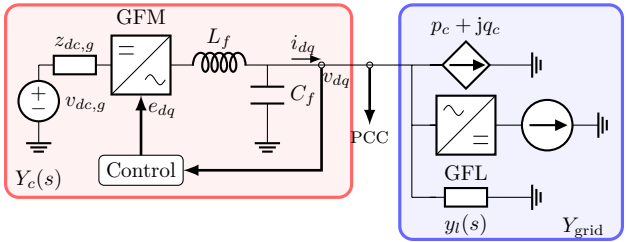
\begin{figure}[h]
\centering 
\begin{circuitikz}[scale=0.8, every node/.style={transform shape}]
\filldraw[color=red!60, fill=red!5, very thick,rounded corners=5] (0.75,5) rectangle (6.5,1.8);
\filldraw[color=blue!60, fill=blue!5, very thick,rounded corners=5] (7.3,5) rectangle (11,1.1);
\filldraw[color=black, fill=white,rounded corners=2] (2.3,2.5) rectangle (3.7,2);
\node at (1.3,2.1) {$Y_c(s)$};
\node at (10.35,1.3) {$Y_{\text{grid}}$};
\node at (6.8,2.8) { \footnotesize PCC};
\node at (3,2.25) {Control};
\draw[fill = white] (6,4) circle [radius=0.05];
\draw[-latex, line width = 1 pt] (6,3.95) -- (6,2.25)--(3.7,2.25);
\draw[-latex, line width = 1 pt] (3,2.5) --  (3,3.5);
\draw[fill = white] (6.8,4) circle [radius=0.05];
\draw[-latex, line width = 1 pt] (6.8,3.95) -- (6.8,3);
\node at (5.7,4.33) {$i_{dq}$};
\node at (6.25,3.75) {$v_{dq}$};
\node at (3.35,3.25) {$e_{dq}$};
\draw[-latex, line width = 0.5 pt] (5.5,4.1) --  (6,4.1);
\ctikzset{capacitors/height=.4}
\draw
(1.3,2.8) to (1.3,2.7) node[tlground]{}
(1.3,2.8) to[american voltage source, l_=$v_{dc,g}$,sources/scale=0.8,invert, fill=white] (1.3,4) 
(1.3,4)to[twoport,l=$z_{dc,g}$,bipoles/twoport/width=0.5,bipoles/twoport/height=0.22, fill=white] (2.5,4)
(2.5,4) to [sdcac,fill=white, l=GFM ] (3.5,4)
(3.5,4) to [L, l=$L_{f}$] (5.1,4)
(5.1,4) to [C, l=$C_{f}$] (5.1,3)
(5.1,3) to (5.1,2.7) node[tlground]{}
(5.1,4) to (8.2,4)
(7.8,4) to[american controlled current source, l=$p_c+\textup{j}q_c$,csources/scale=0.8, fill=white] (9,4)
(9,4) to (9.5,4) node[tlground,rotate=90]{}
(7.5,4) to (7.5,1.8)
(9,3) to [sdcac,fill=white, l=GFL] (7.8,3)
(9,3) to [american current source, csources/scale=0.8, fill=white] (10.5,3)
(10.5,3) to (10.6,3) node[tlground,rotate=90]{}
(7.5,3) to (8.2,3)
(9,1.8)to[twoport,l=$y_l(s)$,bipoles/twoport/width=0.5,bipoles/twoport/height=0.22, fill=white] (7.8,1.8)
(7.5,1.8) to (7.9,1.8)
(9,1.8) to (9.5,1.8) node[tlground,rotate=90]{};
\end{circuitikz}
  \caption{A GFM converter connected to a non-linear  $Y_{\text{grid}}$. }
  \label{fig:Nonlinearload_example}
  \end{figure}

The admittances of the GFM converter, the CPL, the GFL converter, and the linear impedance load are denoted as $Y_c(s)$, $y_{cp}$, $y_{gfl}(s)$, and $y_{l}(s)$, respectively. In this case study, we consider two cases with two different CPLs, CPL$_1$ with $p_c=0.1,q_c=0.1$ and CPL$_2$ with $p_c=0.56,\;q_c=0.1$ that are described by the admittances $y_{cp,1}$ and $y_{cp,2}$, respectively. The grid admittance $Y_{\text{grid}}$ can be expressed as the sum of the individual admittances $Y_{\text{grid,i}} = y_{cp,i} + y_{gfl}(s) + y_{l}(s)$ for $i\in\{1,2\}$.

 $\SRG{\widehat{y_{cp,1}}}$ is represented by the red hatched disk in Fig.~\ref{fig:SRG_GFM_Nonlinear}(a). The exact SRG—computed from input-output measurements—is shown in yellow for comparison. Fig. ~\ref{fig:SRG_GFM_Nonlinear}(b) shows the bound \tb{in red} for $y_{cp,2}$ and in magenta the exact SRG. Fig. ~\ref{fig:SRG_GFM_Nonlinear}(c) shows $\SRG{y_{l}(s)}$, whereas  Fig.~\ref{fig:SRG_GFM_Nonlinear}(d) shows $\SRG{y_{gfl}(s)}$, both computed using \eqref{eqn:SRG_operator_transferfunction}. Moreover, Fig. ~\ref{fig:SRG_GFM_Nonlinear}(e) depicts  $\SRG{Y_c(s)}$; note that $\SRG{Y_c(s)}$ spans over wider space in comparison to $\SRG{y_{gfl}(s)}$ and  $\SRG{y_{l}(s)}$. 

 It is also noticeable that $\SRG{Y_{\text{grid},1}}$ in Fig.~\ref{fig:SRG_GFM_Nonlinear}(f) resembles Fig.~\ref{fig:SRG_GFM_Nonlinear}(d) enlarged by $\SRG{\widehat{y_{cp,1}}}$. This is expected as both, $\SRG{\widehat{y_{cp}}}$ and $\SRG{y_{l}(s)}$ meet the chord property, and therefore $\SRG{Y_{\text{grid,i}}}=\SRG{\widehat{y_{cp,i}} + y_{gfl}(s) + y_{l}(s)}=\SRG{\widehat{y_{cp,i}} }+\SRG{y_{gfl}(s)}+\SRG{y_{l}(s)}$. Fig.~\ref{fig:SRG_GFM_Nonlinear}(f) shows that  $\tau\SRG{Y_{\text{grid},1}}$ does not intersect $-\SRG{Y_c(s)}$ and therefore by Theorem~\ref{thm:GFT_nonlinear}, the system is stable. In contrast, Fig.~\ref{fig:SRG_GFM_Nonlinear}(g) shows the $\SRG{Y_{\text{grid},2}}$ and $\SRG{Y_{c}}$ having multiple intersections, and therefore stability cannot be guaranteed. The SRG-based stability margin in Fig.~\ref{fig:SRG_GFM_Nonlinear}(h) provides a clearer insight into the separation between SRGs, revealing that the smallest distances (or intersections for CPL$_2$) occur below and above the system's nominal frequency ($\approx 0.1, 8$ and $10^3$ Hz). %

\begin{figure}[ht]
    \centering
\begin{subfigure}[b]{0.15\textwidth}
    \centering
    \includegraphics[width=0.92\linewidth]{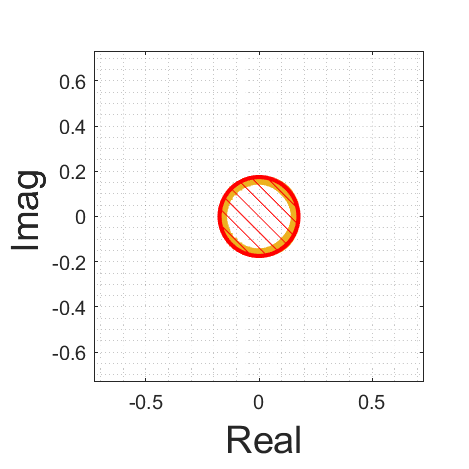}
\caption{}
\end{subfigure}
\begin{subfigure}[b]{0.15\textwidth}
    \centering
    \includegraphics[width=0.9\linewidth]{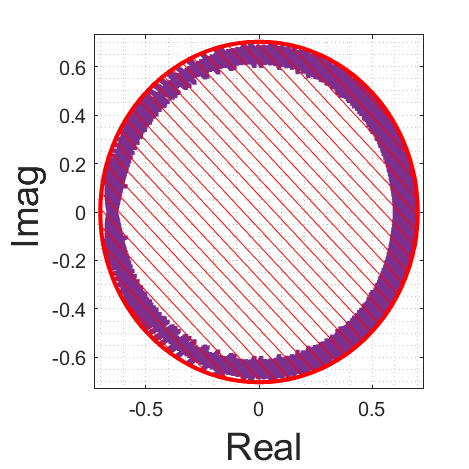}
\caption{}
\end{subfigure}
\begin{subfigure}[b]{0.15\textwidth}
    \centering
    \includegraphics[width=0.9\linewidth]{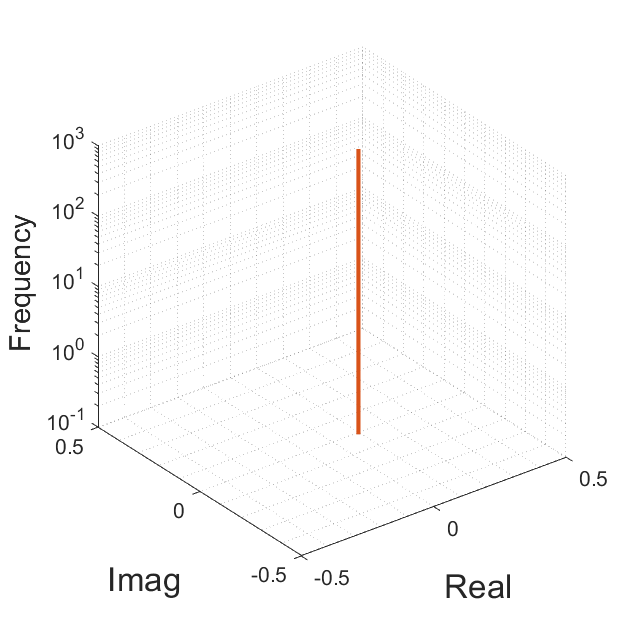}
    \caption{}
\end{subfigure}  
\begin{subfigure}[b]{0.24\textwidth}
    \centering
    \includegraphics[width=0.9\linewidth]{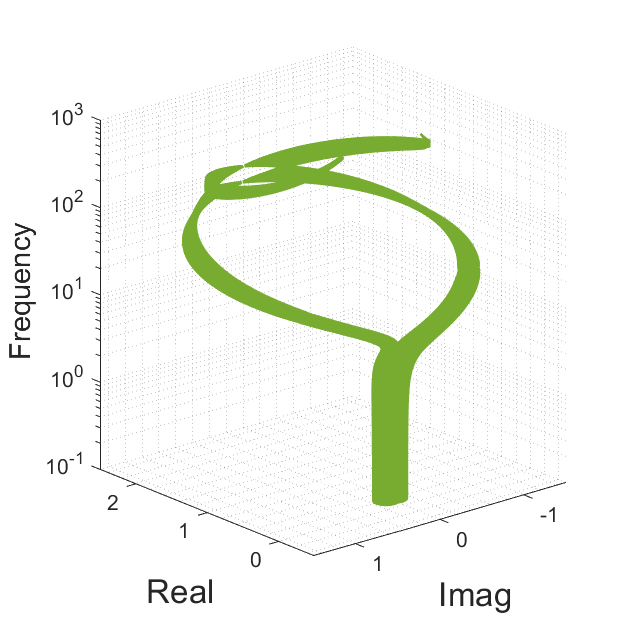}
    \caption{}
\end{subfigure}  
\begin{subfigure}[b]{0.24\textwidth}
    \centering
    \includegraphics[width=0.9\linewidth]{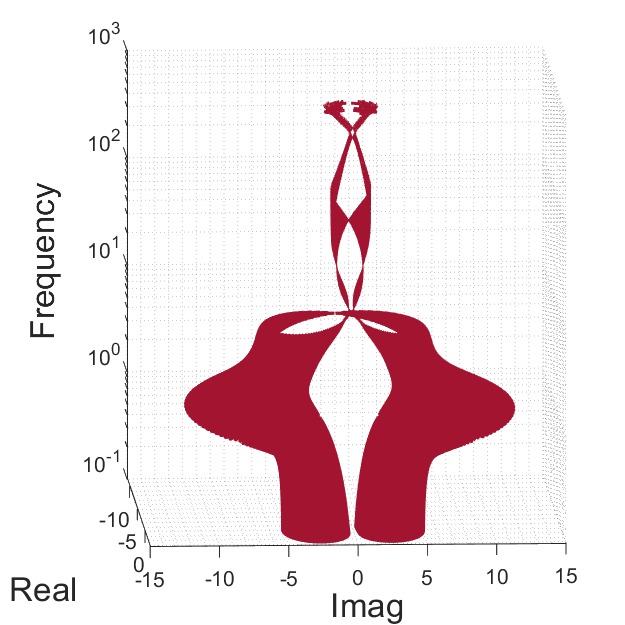}
    \caption{}
\end{subfigure}
\begin{subfigure}[b]{0.24\textwidth}
    \centering
    \includegraphics[width=0.9\linewidth]{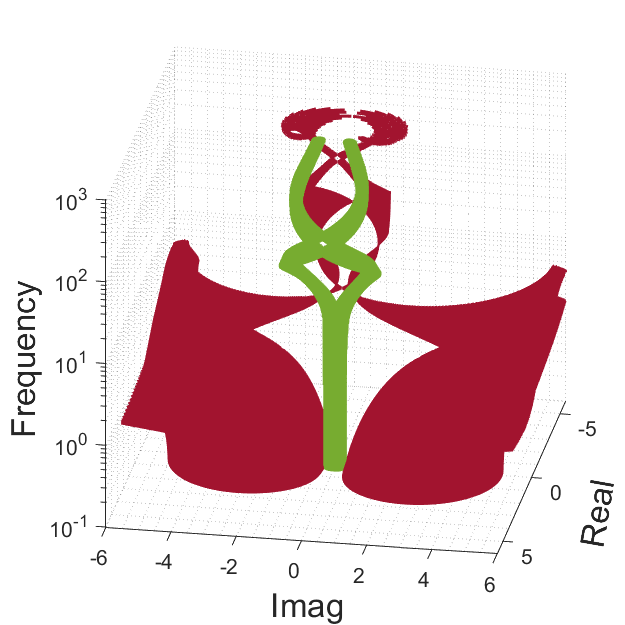}
    \caption{}
\end{subfigure}     
\begin{subfigure}[b]{0.24\textwidth}
    \centering 
    \includegraphics[width=0.9\linewidth]{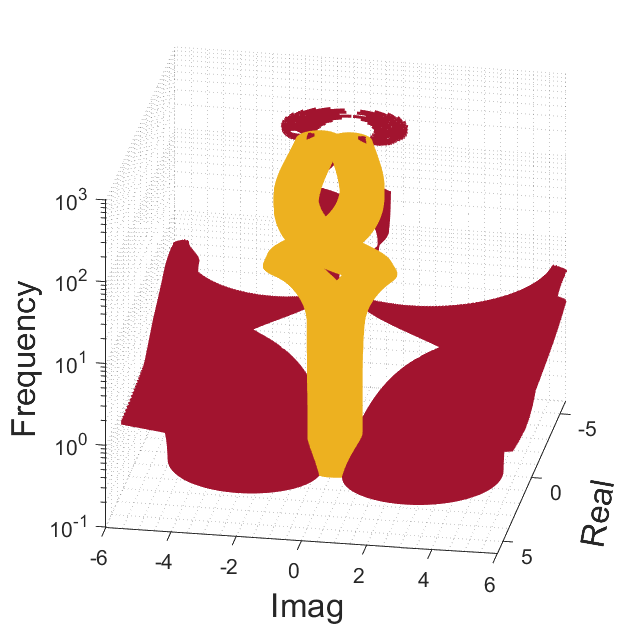}
\caption{}
\end{subfigure}
\begin{subfigure}[b]{0.5\textwidth}
    \centering
    \includegraphics[width=1\linewidth]{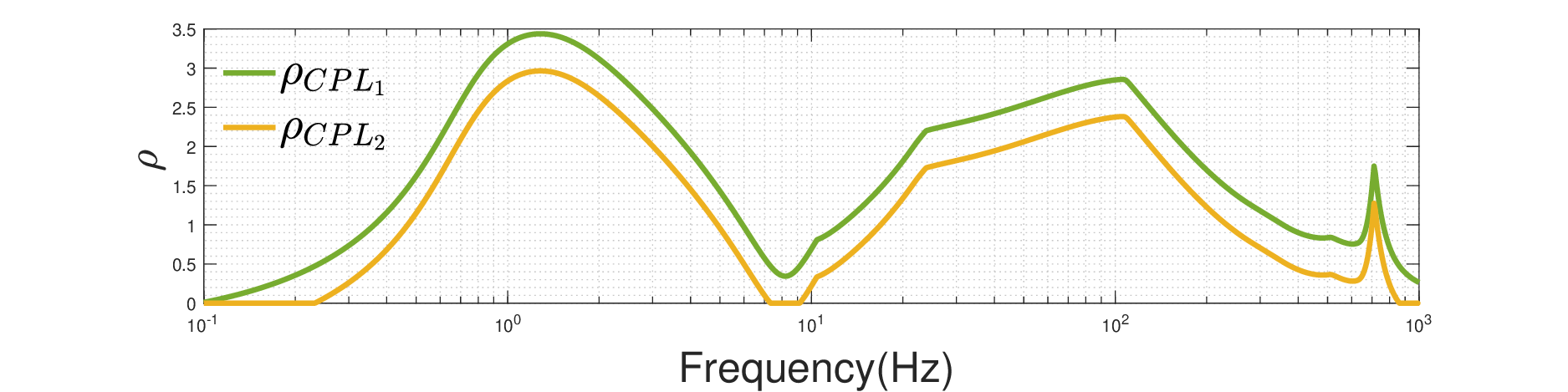}
    \caption{}
\end{subfigure}  
\caption{(a)~$\SRG{y_{cp,1}}$ in yellow and $\SRG{\widehat{y_{cp,1}}}$ red hatched disk.
(b)~$\SRG{y_{cp,2}}$ in magenta and $\SRG{\widehat{y_{cp,2}}}$ red hatched disk.
(c)~$\SRG{y_{l}(s)}$ in orange.
(d)~$\SRG{y_{gfl}(s)}$ in green. 
(e)~$\SRG{Y_c(s)}$ in red.
(f)~$\SRG{Y_c(s)}$ in red and -$\SRG{Y_{\text{grid,1}}}$ in green. 
(g)~$\SRG{Y_c(s)}$ in red and -$\SRG{Y_{\text{grid,2}}}$ in yellow. 
(h)~SRG-stability margin for both cases.}
    \label{fig:SRG_GFM_Nonlinear}
\end{figure} 
\subsubsection*{Time Domain Simulations Case Study 2} \label{subsec:Sim_Nonlinearloads}
We simulate the system in Fig.~\ref{fig:Nonlinearload_example}, starting with the GFL converter and linear load connected to the GFM converter. At $t = 1$~s, we introduce CPL$_1$. Fig.~\ref{fig:Sims_Nonlinear} shows the resulting frequency, reactive power, and active power dynamics. The GFL converter and $y_l(s)$ stabilize within 0.5~s, while the GFM converter exhibits the expected droop-like frequency response. The GFL converter initially oscillates near 50~Hz but settles within 0.6~s. Upon connecting CPL$_1$, the load instantly draws constant power, causing a grid imbalance that excites transient oscillations. These oscillations decay by $t = 1.8$~s, with the system frequency stabilizing at 49.93~Hz, slightly below the nominal 50~Hz but maintaining steady-state operation. Fig.~\ref{fig:Sims_Nonlinear_uns} shows identical behavior before $t = 1$~s. However, upon connecting CPL$_2$, the system dynamics diverge: while CPL$_2$ reaches steady-state power immediately, the GFM, GFL, and $y_l(s)$ exhibit oscillations for $\approx$1~s at around 8 Hz (same frequency band of intersections in Fig.~\ref{fig:SRG_GFM_Nonlinear}(g)). Beyond $t~>~2$~s, the system becomes unstable, as evidenced by growing deviations in frequency and power.  

\begin{figure}[ht]
    \centering
    \includegraphics[width=1\linewidth]{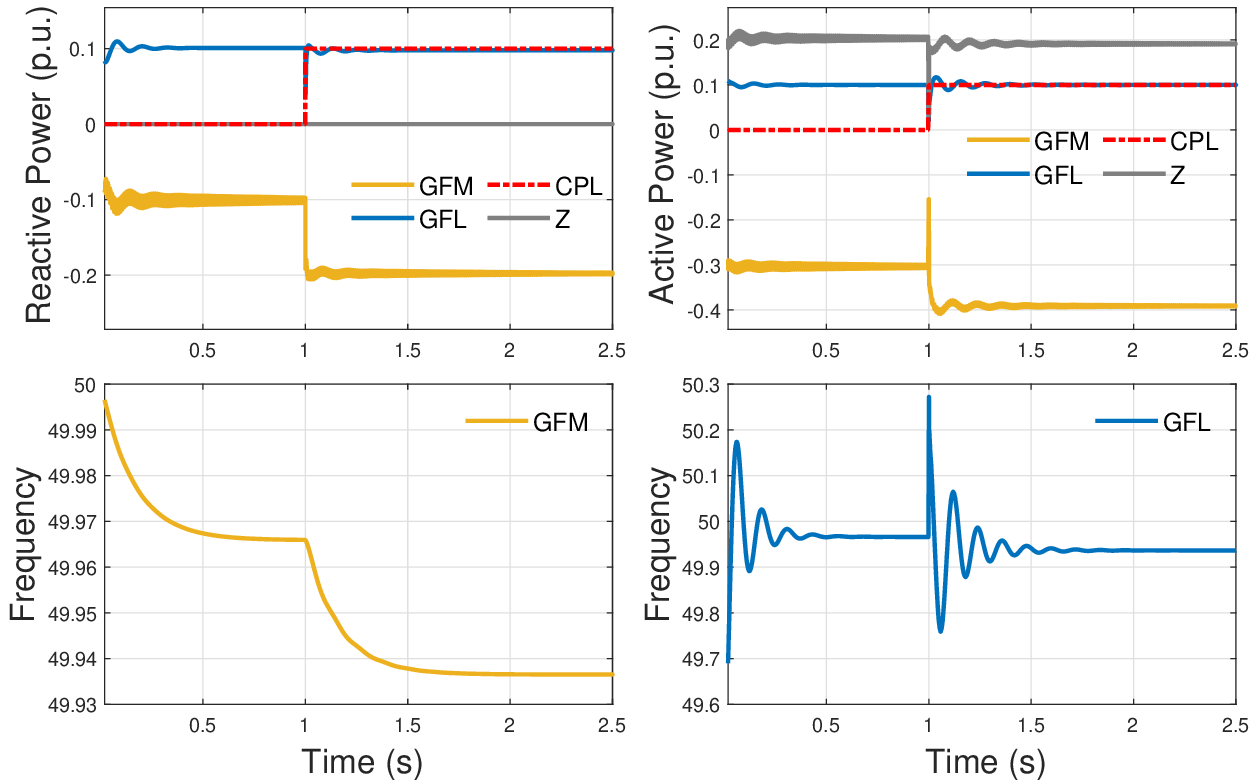}
    \caption{Reactive, active power, and frequency in the GFL and GFM depicted in Fig. \ref{fig:Nonlinearload_example} with CPL$_1$. }
    \label{fig:Sims_Nonlinear} 
    \end{figure} 
    \begin{figure}[ht]
    \centering
    \includegraphics[width=1\linewidth]{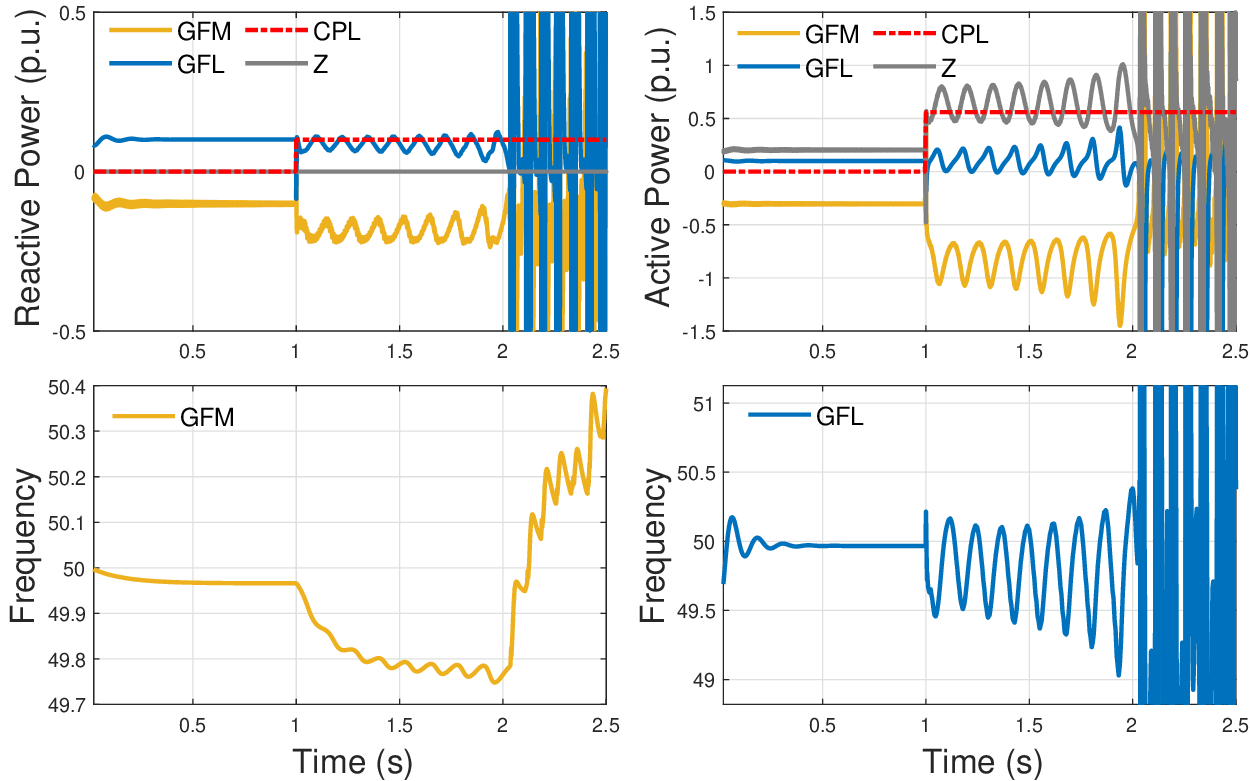}
    \caption{Reactive, active power, and frequency in the GFL and GFM depicted in Fig. \ref{fig:Nonlinearload_example} with CPL$_2$. }
    \label{fig:Sims_Nonlinear_uns}
\end{figure} 

\section{ System Stability Analysis }\label{sec:system_analysis}

This section investigates stability in the IEEE 14-bus system and the IEEE 57-bus system shown in Fig.~\ref{fig:14_node_IEEE} and Fig.~\ref{fig:57_node_IEEE}, with parameters detailed in Appendix \ref{appendix:system_analysis}.

\subsection{IEEE 14-node system}\label{subsec:14nodes}

The slack node is located in node 1 and modeled as a synchronous machine, while GFL converters are located in nodes 2, 3, 6, and 8 with admittances $Y_c^{N_2}(s),Y_c^{N_3}(s),Y_c^{N_6}(s)$ and $Y_c^{N_8}(s)$, respectively. {Note that all converters analyzed are GFL, but all have different $f_{\text{pll}}$}.  We analyze two scenarios: the first one comprises $Y_c^{N_8}(s)$ with $f_{i}=600$Hz and the second one $Y_c^{N_8}(s)$ with $f_{i}=100$Hz (where $f_{i}$ represents the frequency bandwidth of the current control loop) while the rest of the power system remains unchanged. 

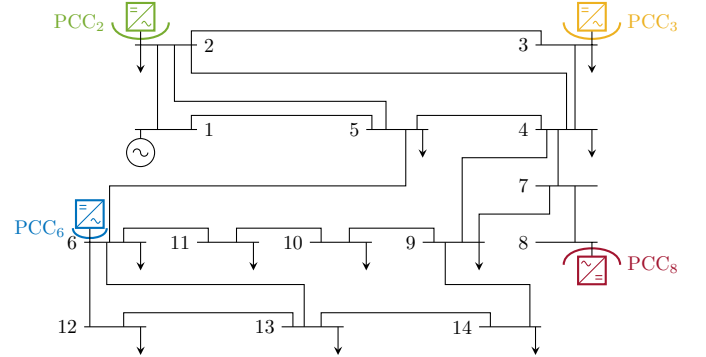
\begin{figure}[htb]
\centering
\begin{tikzpicture}[scale=0.75, every node/.style={transform shape}]
\draw[ thick,color=color1bg] (0.5,0.4) to [bend left=90] (-0.5,0.4)node[left]{PCC$_2$};
\draw[ thick,color=color2bg] (7.5,0.4) to [bend right=90] (8.5,0.4)node[right]{PCC$_3$};
\draw[ thick,color=color3bg] (-1.2,-3.25)node[left]{PCC$_6$} to [bend right=90] (-0.6,-3.25);
\draw[ thick,color=color4bg] (7.5,-3.9) to [bend left=90] (8.5,-3.9)node[right]{PCC$_8$};
\draw

(0,-1.5) coordinate(1-node) 
(1-node) --++ (0,-0.15) node[below,sin v source]{}
(1-node)--++(-0.1,0) 
(1-node)--++(1,0) node[right]{1} 
coordinate[pos=0.3](1-r) 
coordinate[pos=0.6](1-rr)
coordinate[pos=0.9](1-rrr)

(0,0)  coordinate(2-node) 
(2-node) --++ (0,0.3) node[above]{}
(2-node)--++(-0.1,0) 
(2-node)--++(1,0) node[right]{2} 
coordinate[pos=0.3](2-r) 
coordinate[pos=0.6](2-rr)
coordinate[pos=0.9](2-rrr)
;
\draw (0,0.5)node[sacdcshape,scale=-.5,color=color1bg,fill=white] (DC1){};
\draw
(8,0)  coordinate(3-node) 
(3-node) --++ (0,0.3) node[above]{}
(3-node)--++(0.1,0) 
(3-node)--++(-1,0) node[left]{3} 
coordinate[pos=0.3](3-r) 
coordinate[pos=0.6](3-rr)
coordinate[pos=0.9](3-rrr)
;
\draw (8,0.5)node[sacdcshape,scale=-.5,color=color2bg,fill=white] (DC2){};
\draw

(8,-1.5)  coordinate(4-node) 
(4-node)--++(0.1,0) 
(4-node)--++(-1,0) node[left]{4} 
coordinate[pos=0.3](4-r) 
coordinate[pos=0.6](4-rr)
coordinate[pos=0.9](4-rrr)


(5,-1.5)  coordinate(5-node) 
(5-node)--++(0.1,0) 
(5-node)--++(-1,0) node[left]{5} 
coordinate[pos=0.3](5-r) 
coordinate[pos=0.6](5-rr)
coordinate[pos=0.9](5-rrr)


(0,-3.5)  coordinate(6-node) 
(6-node)--++(0.1,0) 
(6-node)--++(-1,0) node[left]{6} 
coordinate[pos=0.3](6-r) 
coordinate[pos=0.6](6-rr)
coordinate[pos=0.9](6-rrr)
(6-rrr) --++ (0,0.3) node[above]{}
;
\draw (-0.9,-3)node[sacdcshape,scale=-0.5,color=color3bg,fill=white] (DC3){};
\draw

(8,-2.5)  coordinate(7-node) 
(7-node)--++(0.1,0) 
(7-node)--++(-1,0) node[left]{7} 
coordinate[pos=0.3](7-r) 
coordinate[pos=0.6](7-rr)
coordinate[pos=0.9](7-rrr)

(8,-3.5)  coordinate(8-node) 
(8-node)--++(0.1,0) 
(8-node)--++(-1,0) node[left]{8} 
coordinate[pos=0.3](8-r) 
coordinate[pos=0.6](8-rr)
coordinate[pos=0.9](8-rrr)
(8-node) --++ (0,-0.3) node[below]{}
;
\draw (8,-4)node[sacdcshape,scale=.5,color=color4bg,fill=white] (DC4){};
\draw
(6,-3.5)  coordinate(9-node) 
(9-node)--++(0.1,0) 
(9-node)--++(-1,0) node[left]{9} 
coordinate[pos=0.3](9-r) 
coordinate[pos=0.6](9-rr)
coordinate[pos=0.9](9-rrr)

(4,-3.5)  coordinate(10-node) 
(10-node)--++(0.1,0) 
(10-node)--++(-1,0) node[left]{10} 
coordinate[pos=0.3](10-r) 
coordinate[pos=0.6](10-rr)
coordinate[pos=0.9](10-rrr)


(2,-3.5)  coordinate(11-node) 
(11-node)--++(0.1,0) 
(11-node)--++(-1,0) node[left]{11} 
coordinate[pos=0.3](11-r) 
coordinate[pos=0.6](11-rr)
coordinate[pos=0.9](11-rrr)

(0,-5)  coordinate(12-node) 
(12-node)--++(0.1,0) 
(12-node)--++(-1,0) node[left]{12} 
coordinate[pos=0.3](12-r) 
coordinate[pos=0.6](12-rr)
coordinate[pos=0.9](12-rrr)

(3.5,-5)  coordinate(13-node) 
(13-node)--++(0.1,0) 
(13-node)--++(-1,0) node[left]{13} 
coordinate[pos=0.3](13-r) 
coordinate[pos=0.6](13-rr)
coordinate[pos=0.9](13-rrr)

(7,-5)  coordinate(14-node) 
(14-node)--++(0.1,0) 
(14-node)--++(-1,0) node[left]{14} 
coordinate[pos=0.3](14-r) 
coordinate[pos=0.6](14-rr)
coordinate[pos=0.9](14-rrr)
;
\draw[-stealth](2-node)--++(0,-0.5cm);
\draw[-stealth](3-node)--++(0,-0.5cm);
\draw[-stealth](4-node)--++(0,-0.5cm);
\draw[-stealth](5-node)--++(0,-0.5cm);
\draw[-stealth](6-node)--++(0,-0.5cm);
\draw[-stealth](9-node)--++(0,-0.5cm);
\draw[-stealth](10-node)--++(0,-0.5cm);
\draw[-stealth](11-node)--++(0,-0.5cm);
\draw[-stealth](12-node)--++(0,-0.5cm);
\draw[-stealth](13-node)--++(0,-0.5cm);
\draw[-stealth](14-node)--++(0,-0.5cm);


\draw
(1-r)--(2-r) 
(1-rrr)--++(0,0.25)--++(3.2,0)--++(0,-0.25) 
(2-rr)--++(0,-1)--++(3.75,0)--++(0,-0.5) 
(2-rrr)--++(0,-0.5)--++(6.65,0)--++(0,-1) 
(2-rrr)--++(0,0.25)--++(6.2,0)--++(0,-0.25) 
(3-r)--(4-r) 
(4-rrr)--++(0,0.25)--++(-2.2,0)--++(0,-0.25) 
(5-r)--++(0,-1)--++(-5.25,0)--++(0,-1) 
(6-r)--++(0,0.25)--++(1.5,0)--++(0,-0.25) 
(11-r)--++(0,0.25)--++(1.5,0)--++(0,-0.25) 
(10-r)--++(0,0.25)--++(1.5,0)--++(0,-0.25) 
(8-r)--(7-r) 
(4-rr)--(7-rr) 
(9-node)--++(0,0.5)--++(1.25,0)--++(0,0.5)
(9-r)--++(0,1.5)--++(1.5,0)--++(0,0.5)
(9-rr)--++(0,-0.75)--++(1.5,0)--++(0,-0.75) 
(12-r)--++(0,0.25)--++(3,0)--++(0,-0.25) 
(13-r)--++(0,0.25)--++(3,0)--++(0,-0.25) 
(6-rrr)--(12-rrr)%
(6-rr)--++(0,-0.75)--++(3.5,0)--++(0,-.75)
;
\end{tikzpicture}
\caption{Modified IEEE 14 node system.}
\label{fig:14_node_IEEE}
\end{figure}

We have two possible ways to build the feedback system in Fig. \ref{fig:decentralizedfb}, depending on the knowledge of the underlying grid connecting the converters. The first approach assumes no knowledge about the underlying grid. Therefore, we require the grid admittance transfer function seen from the PCC for the node under study, as given in \eqref{eqn:Zgrid_estimated}. For obtaining the grid admittance transfer functions from each PCC, we use the process proposed in \cite{Haberle2023}, injecting a signal into the system and deriving an admittance model that accurately matches the system's response as seen from nodes 2, 3, 6 and 8, which are denoted as $Y_{\text{grid}}^{N_2}(s),Y_{\text{grid}}^{N_3}(s),Y_{\text{grid}}^{N_6}(s)$ and $Y_{\text{grid}}^{N_8}(s)$, respectively. Importantly, this model incorporates the dynamics of other converters and the synchronous machine present in the system. We use Theorem~\ref{thm:GFT_nonlinear} for assessing the stability of the power system for each converter $i\in\{2,3,6,8\}$, as follows:
\begin{align}
    \SRG{Y_{\text{grid}}^{N_i}(s)} \cap -\tau \SRG{Y_{c}^{N_i}(s)} = \emptyset .
    \label{eqn:14node_individual}
\end{align}
{Note that, as in traditional impedance-based methods \cite{Wang2024Limitations,Fan2020_problemsAdmittance,Cigre2024}, we only need to verify condition \eqref{eqn:14node_individual} once to guarantee system stability. However, we evaluate this condition from each converter PCC as this provides insights into how the grid dynamics vary from each individual connection point.}

The second modeling approach assumes knowledge about the network model. Then, it is possible to use a Kron reduction to obtain a matrix $Z_{\text{grid}}(s)\in \mathcal{RH}_\infty^{8\times8}$, and the converters in the grid are modeled as a block diagonal matrix, $\mathbf{{Y}^N_c(s)}$, where each diagonal entry is a converter admittance with its $dq$-coordinate transformation, as is shown in Fig.~\ref{fig:system}.

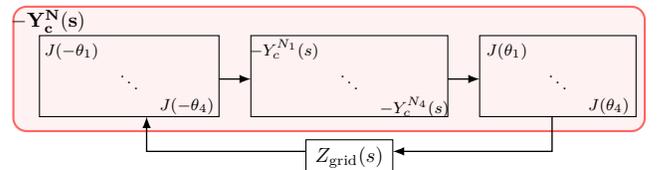
\begin{figure}[ht] 
    \centering
    \begin{tikzpicture}[scale=0.77, every node/.style={transform shape}]
\filldraw[color=red!60, fill=red!5, thick,rounded corners=5] (4.7,5.6) rectangle (15.6,3.45);
\draw (5.15,5.1) rectangle (8.25,3.7);
\draw (8.8,5.1) rectangle (12.2,3.7);
\draw (12.75,5.1) rectangle (15.45,3.7);
\node at (5.3,5.35) {$-\mathbf{{Y}^N_c(s)}$};
\node at (6.7,4.35) {\footnotesize $\begin{matrix}  J(-\theta_1)& & \\ & \ddots & \\ & &J(-\theta_4) \end{matrix}$};
\node at (10.5,4.35) {\footnotesize $\begin{matrix}  -Y_{c}^{N_1}(s)& & \\ & \ddots & \\ & &-Y_{c}^{N_4}(s) \end{matrix}$};
\node at (14.1,4.35) {\footnotesize $\begin{matrix}  J(\theta_1)& & \\ & \ddots & \\ & &J(\theta_4) \end{matrix}$};
\draw (9.75,3.3) rectangle (11.25,2.7);
\node at (10.5,3) {$Z_{\text{grid}}(s)$};
\draw[-latex, line width = .5 pt]  (14,3.7) -- (14,3.1) -- (11.25,3.1);
\draw[-latex, line width = .5 pt] (9.75,3.1) -- (7,3.1) -- (7,3.7) ;
\draw[-latex, line width = .5 pt] (8.25,4.35) -- (8.8,4.35);
\draw[-latex, line width = .5 pt] (12.2,4.35) -- (12.75,4.35);
\end{tikzpicture}
\caption{Closed-loop dynamics between $Z_{\text{grid}}(s)$ and $\mathbf{{Y}^N_c(s)}$. }
    \label{fig:system}
\end{figure}

We use Theorem~\ref{thm:GFT_nonlinear} for assessing the stability as follows:
\begin{align}
    \left(\operatorname{SRG}(Y_{\text{grid}}(s))\right) \cap -\tau \operatorname{SRG}(\mathbf{Y_{c}^{N}(s)}) = \emptyset . \label{eqn:Kron_approach}
\end{align}

Both approaches offer distinct trade-offs in practical implementation. The Kron reduction method demands an explicit model of the reduced grid, which is not always necessarily available. In contrast, the grid-impedance-estimation approach (without prior grid knowledge) measures the impedance at each PCC, inherently reflecting the aggregated dynamics of all converters. Although computationally intensive and less scalable, this method is more adaptable to actual operating conditions. \tb{Similar to most of the impedance-based approaches, the choice of one approach over the other is guided by the requirement that both subsystems must exhibit stable behavior. Nevertheless, active research efforts are underway to relax this assumption and extend the applicability \cite{krebbekx2025}.}

Fig.~\ref{fig:SRG_IEEE14node} compares $\SRG{Y_{c}^{N_i}(s)}$ and $\SRG{Y_{\text{grid}}^{N_i}(s)}$ across PCCs for the system in Fig.~\ref{fig:14_node_IEEE}. $\SRG{Y_{\text{grid}}^{N_i}(s)}$ varies significantly with location, reflecting grid dynamics: $\SRG{Y_{\text{grid}}^{N_2}(s)}$ (strong grid, near slack bus) lies farthest from the origin, while $\SRG{Y_{\text{grid}}^{N_8}(s)}$ (weak grid) is the closest, necessitating conservative GFL tuning. 

\begin{figure}[h]
\centering
\begin{subfigure}[b]{0.32\linewidth}
    \centering
    \includegraphics[width=1\linewidth]{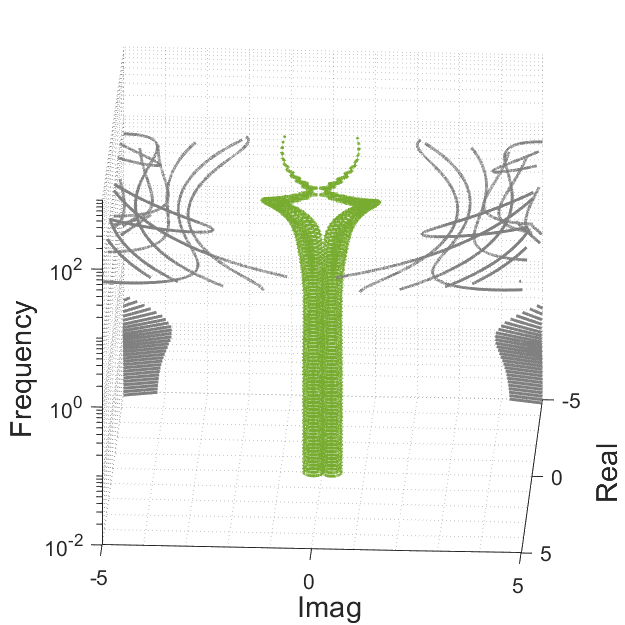}
    \caption{} 
\end{subfigure}
\begin{subfigure}[b]{0.32\linewidth}
    \centering
    \includegraphics[width=1\linewidth]{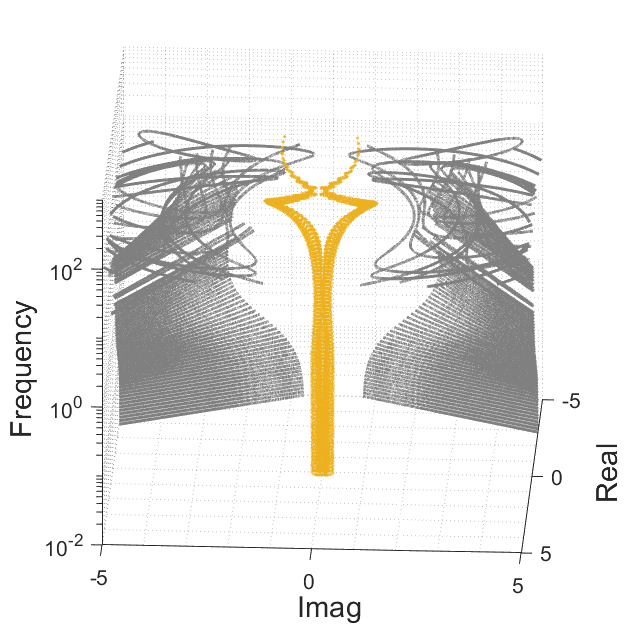}
    \caption{} 
\end{subfigure}
\begin{subfigure}[b]{0.32\linewidth}
    \centering
    \includegraphics[width=1\linewidth]{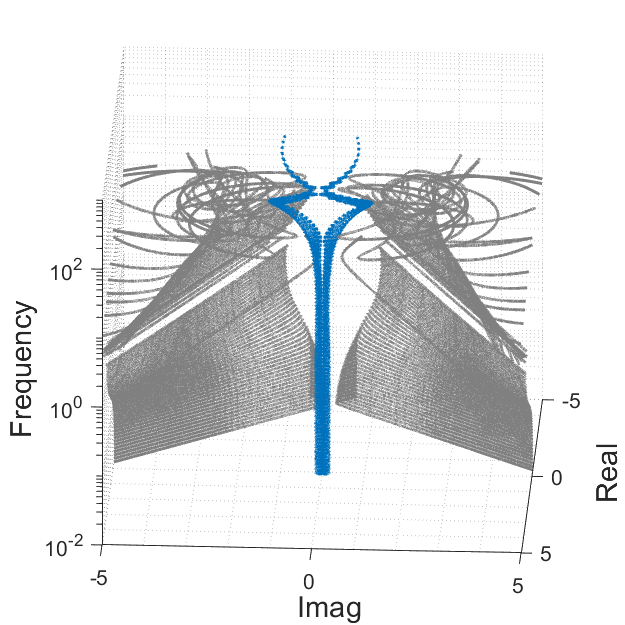}
    \caption{} 
\end{subfigure}
\begin{subfigure}[b]{0.45\linewidth}
    \centering
    \includegraphics[width=1\linewidth]{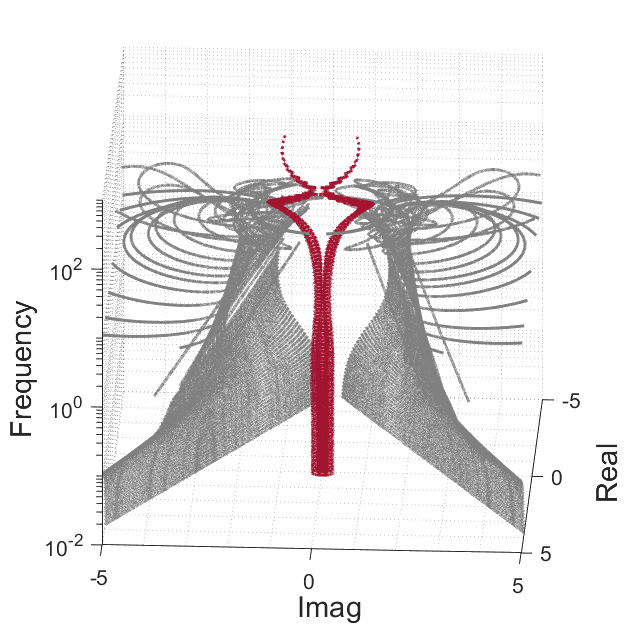}
    \caption{} 
\end{subfigure}  
\begin{subfigure}[b]{0.45\linewidth}
    \centering
    \includegraphics[width=1\linewidth]{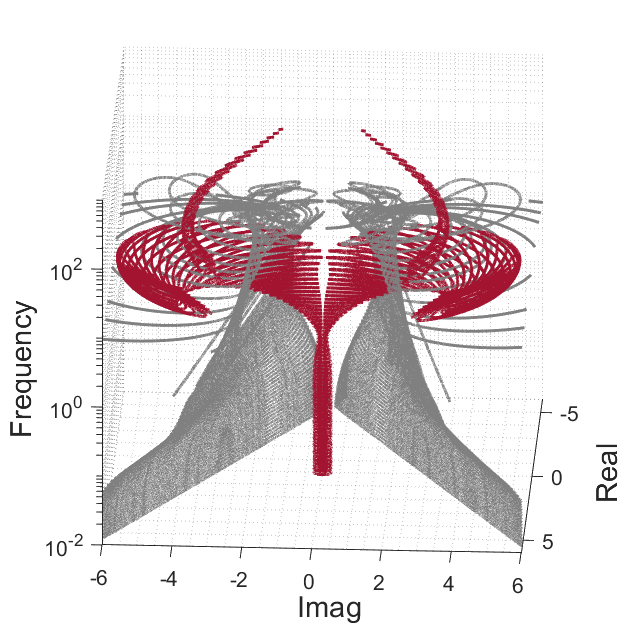}
    \caption{} 
\end{subfigure}   
\caption{(a) $\SRG{Y_c^{N_2}(s)}$ in green and $\SRG{Y_{\text{grid}}^{N_2}(s)}$ in gray.  
(b) $\SRG{Y_c^{N_3}(s)}$ in yellow and $\SRG{Y_{\text{grid}}^{N_3}(s)}$ in gray. 
(c) $\SRG{Y_c^{N_6}(s)}$ in blue and $\SRG{Y_{\text{grid}}^{N_6}(s)}$ in gray.  
(d) $\SRG{Y_c^{N_8}(s)}$ with current bandwidth  $f_{i}=600$Hz in red and $\SRG{Y_{\text{grid}}^{N_8}(s)}$ in gray. 
(e) $\SRG{Y_c^{N_8}(s)}$ with current bandwidth  $f_{i}=100$Hz in red and $\SRG{Y_{\text{grid}}^{N_8}(s)}$ in gray.}
    \label{fig:SRG_IEEE14node}
\end{figure}

\begin{figure}[ht]
    \centering
    \includegraphics[width=1\linewidth]{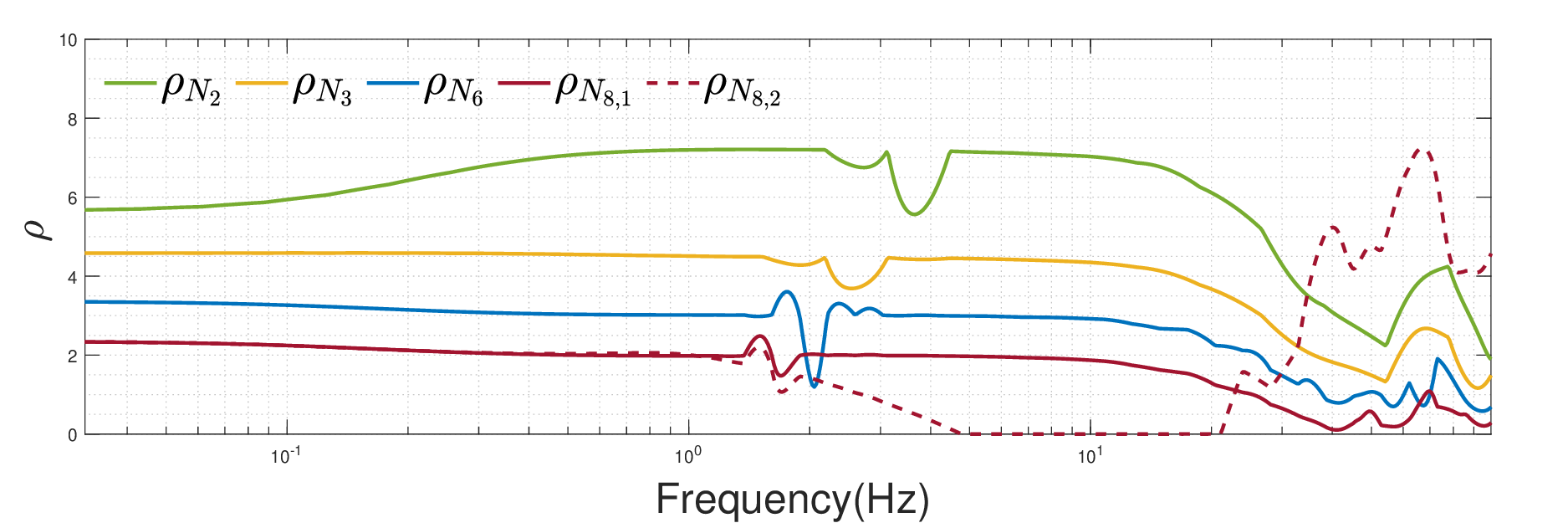}
    \caption{Stability margin for each converter on the IEEE 14 node system evaluated as \eqref{eqn:14node_individual} for both case studies.}
    \label{fig:Stabilitymargin_Model1}
\end{figure}

Figs. \ref{fig:SRG_IEEE14node}(a)-(d) show that there is no intersection between the SRG of the GFL converters and their respective equivalent network admittance, indicating system stability. This is confirmed by the stability margins $\rho_{N_{2}},\rho_{N_{3}},\rho_{N_{6}},\rho_{N_{8_1}}$ depicted in Fig.~\ref{fig:Stabilitymargin_Model1} and validated by the time domain simulations shown in Fig.~\ref{fig:Sims_IEEENode14}. 
We examine two scenarios for {converter at node~8}. Initially, the converter operates with current bandwidth $f_{i}=600$ Hz, resulting in two disjoint SRGs that satisfy Theorem~\ref{thm:GFT_nonlinear}, as depicted in Fig.~\ref{fig:SRG_IEEE14node}(d). However, when the  $f_{i}$ is decreased to $100$ Hz, the SRGs are no longer disjoint (See Fig.~\ref{fig:SRG_IEEE14node}(e)), indicating a possible unstable connection. In the $f_i=100$Hz case study, we present $\SRG{Y_{\text{grid}}^{N_{8}}(s)}$ in Fig.~\ref{fig:SRG_IEEE14node}(e) and its stability margin $\rho_{N_{8,2}}$  in Fig.~\ref{fig:Stabilitymargin_Model1} to identify the frequency bands where condition \eqref{eqn:14node_individual} is not met. Fig.~\ref{fig:Sims_IEEENode14_unstable} shows unstable $11$~Hz oscillations, aligning with Fig.~\ref{fig:Stabilitymargin_Model1} predictions.

\begin{figure}[ht]
    \centering
    \includegraphics[width=1\linewidth]{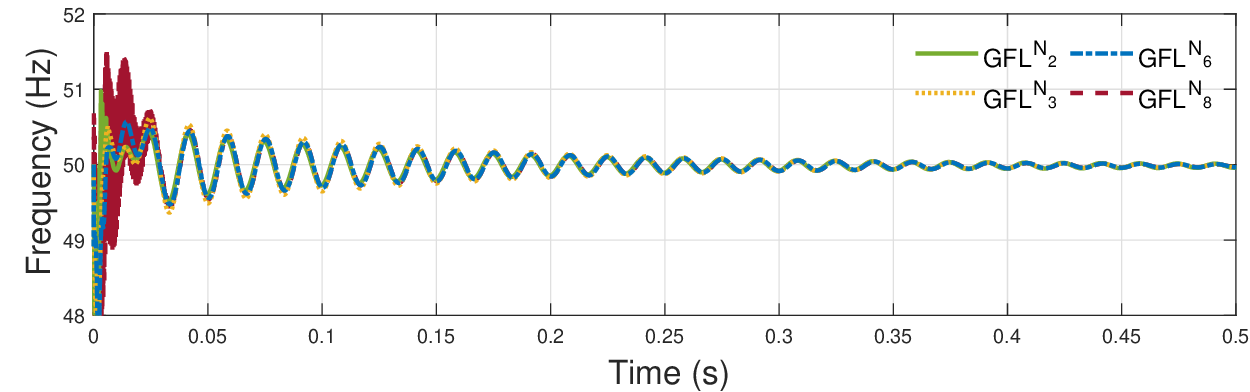}
    \caption{Frequency in the system depicted in Fig. \ref{fig:14_node_IEEE} for the GFL converters with the converter in node 8 with $f_{i}=600$Hz.}
    \label{fig:Sims_IEEENode14}
    \includegraphics[width=1\linewidth]{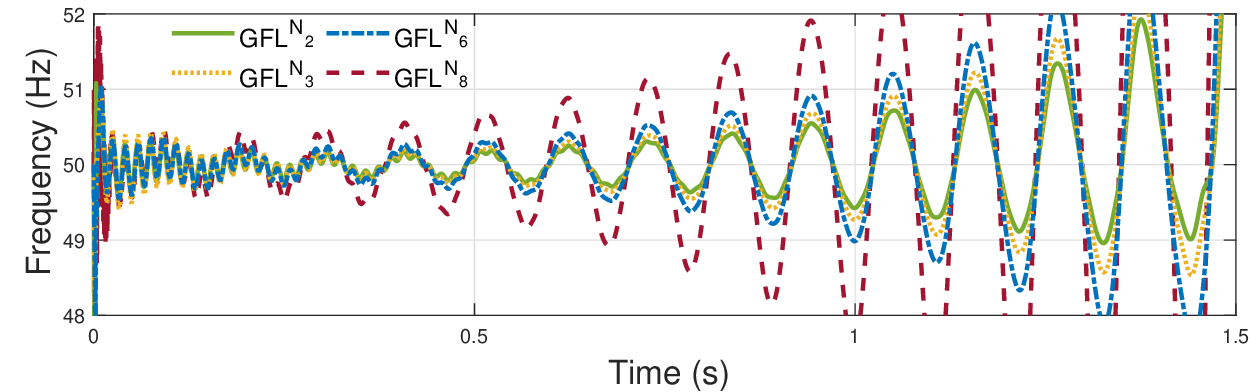}
    \caption{Frequency in the system depicted in Fig.\ref{fig:14_node_IEEE} for the GFL converters with the converter in node 8 with $f_{i}=100$Hz.}
    \label{fig:Sims_IEEENode14_unstable}
\end{figure} 

We now resort to our second modeling approach, Fig.~\ref{fig:Stability_Kron_SRG}(a) displays $\SRG{Y_{\text{grid}}(s)}$ and $\operatorname{SRG}(\mathbf{Y_{c}^{N}(s)})$ when $Y_{c}^{N_8}(s)$ is tuned with current bandwidth $f_{i} = 600$ Hz. As in the first approach, we observe no intersections between the two SRGs across the entire frequency spectrum, indicating stability. Several key insights emerge from this case study. First, $\operatorname{SRG}(\mathbf{Y_{c}^{N}(s)})$ is independent of the diagonal matrix's order, which means that the geographical placement of converters in the grid does not influence the SRG, leaving a key portion of information out of the analysis. This limitation is common in decentralized modeling and aligns with expectations for such frameworks~\cite{huang2024gain,Baron2025decentralized}. Additionally, $\operatorname{SRG}(\mathbf{Y_{c}^{N}(s)})$ is not the union of individual converter SRGs, though they are contained within it\cite{Baron2025decentralized}. When the parameter $f_{i}$ of $Y_{c}^{N_8}(s)$ is changed to $100$ Hz, {\eqref{eqn:Kron_approach} is not met at certain frequencies, meaning that the SRGs intersect each other and hence revealing a possible unstable condition. Fig.~\ref{fig:Stability_Kron_SRG}(b) depicts multiple intersections between $\operatorname{SRG}(\mathbf{Y_{c}^{N}(s)})$ and $\SRG{Y_{\text{grid}}(s)}$, along with a significant expansion of $\operatorname{SRG}(\mathbf{Y_{c}^{N}(s)})$ due to the altered $Y_{c}^{N_8}(s)$. }Finally, both feedback models assess stability, but \eqref{eqn:Kron_approach} yields conservative margins by ignoring converter locations. While computationally efficient, this simplification trades accuracy for speed, as seen when comparing the stability margins in Fig.~\ref{fig:Stabilitymargin_Model1} (individual analysis) with Fig.~\ref{fig:Stabilitymargin_Model2} (Kron reduction).

\begin{figure}[h]
    \centering
\begin{subfigure}[b]{0.24\textwidth}
    \centering 
    \includegraphics[width=0.9\linewidth]{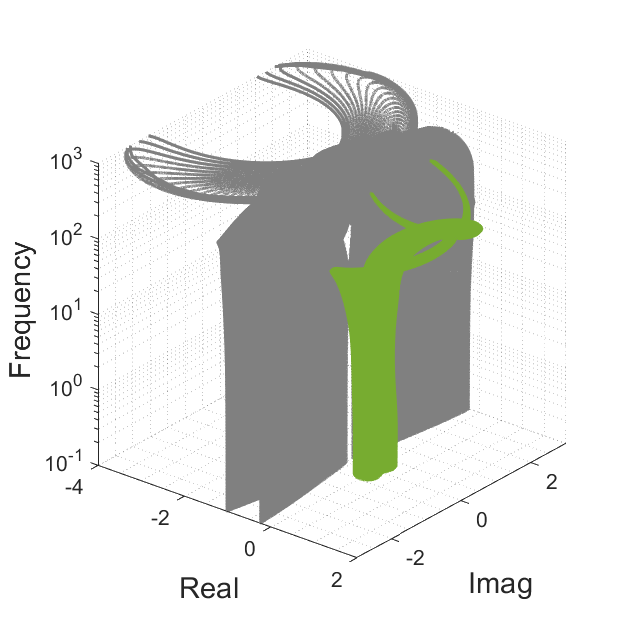}
    \caption{}
\end{subfigure}  
\begin{subfigure}[b]{0.24\textwidth}
    \centering
    \includegraphics[width=0.9\linewidth]{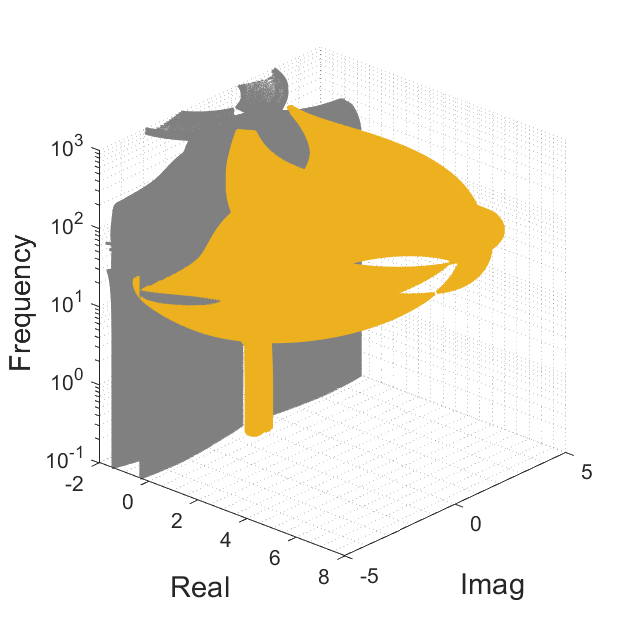}
    \caption{}
\end{subfigure}  
\begin{subfigure}[b]{0.49\textwidth}
    \centering
    \includegraphics[width=1\linewidth]{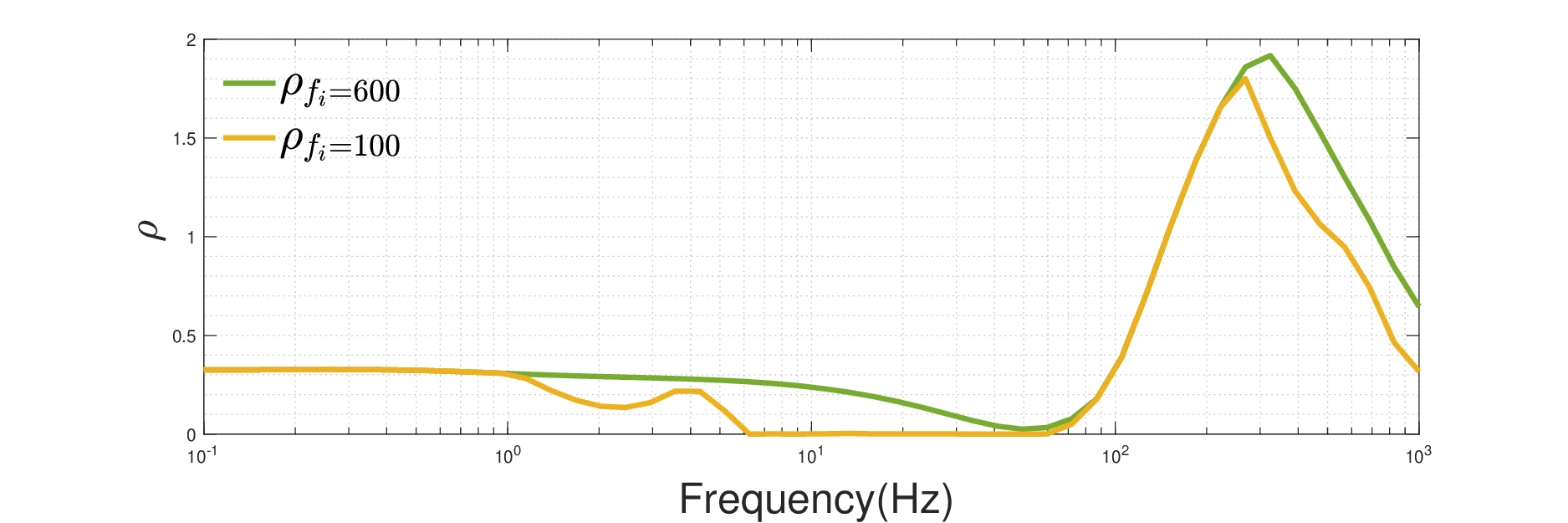}
    \caption{}
    \label{fig:Stabilitymargin_Model2}
\end{subfigure}  
    \caption{(a)  $\operatorname{SRG}(\mathbf{Y_{c}^{N}(s)})$ in green and $\SRG{Y_{\text{grid}}(s)}$ in gray. (b)  $\operatorname{SRG}(\mathbf{Y_{c}^{N}(s)})$ in green and $\SRG{Y_{\text{grid}}(s)}$ in gray for $f_{i}=600$Hz.(c) SRG-Stability margin for both cases. }
    \label{fig:Stability_Kron_SRG} 
\end{figure} 

 
\subsection{IEEE 57-bus system}
\tb{In this modified network the synchronous generators originally located at nodes~2, 6, and~12 are replaced by GFL converters, each equipped with a distinct PLL bandwidth, while a GFM converter is installed at node~8. The corresponding converter admittances are denoted as $Y_c^{N_2}(s)$, $Y_c^{N_6}(s)$, $Y_c^{N_8}(s)$, and $Y_c^{N_{12}}(s)$. We consider two scenarios that differ only in the GFL control design of $Y_c^{N_{12}}(s)$: the same controller architecture is retained, but the current controller bandwidth is modified, whereas the rest of the power system remains unchanged. Following the procedure in Subsection~\ref{subsec:14nodes}, we compute the grid admittances seen from each converter-interfaced node, denoted by $Y_{\text{grid}}^{N_2}(s)$, $Y_{\text{grid}}^{N_6}(s)$, $Y_{\text{grid}}^{N_8}(s)$, and $Y_{\text{grid}}^{N_{12}}(s)$, which characterize the effective network dynamics at each point of interconnection.}  

\begin{figure}
    \centering
    \includegraphics[width=.9\linewidth]{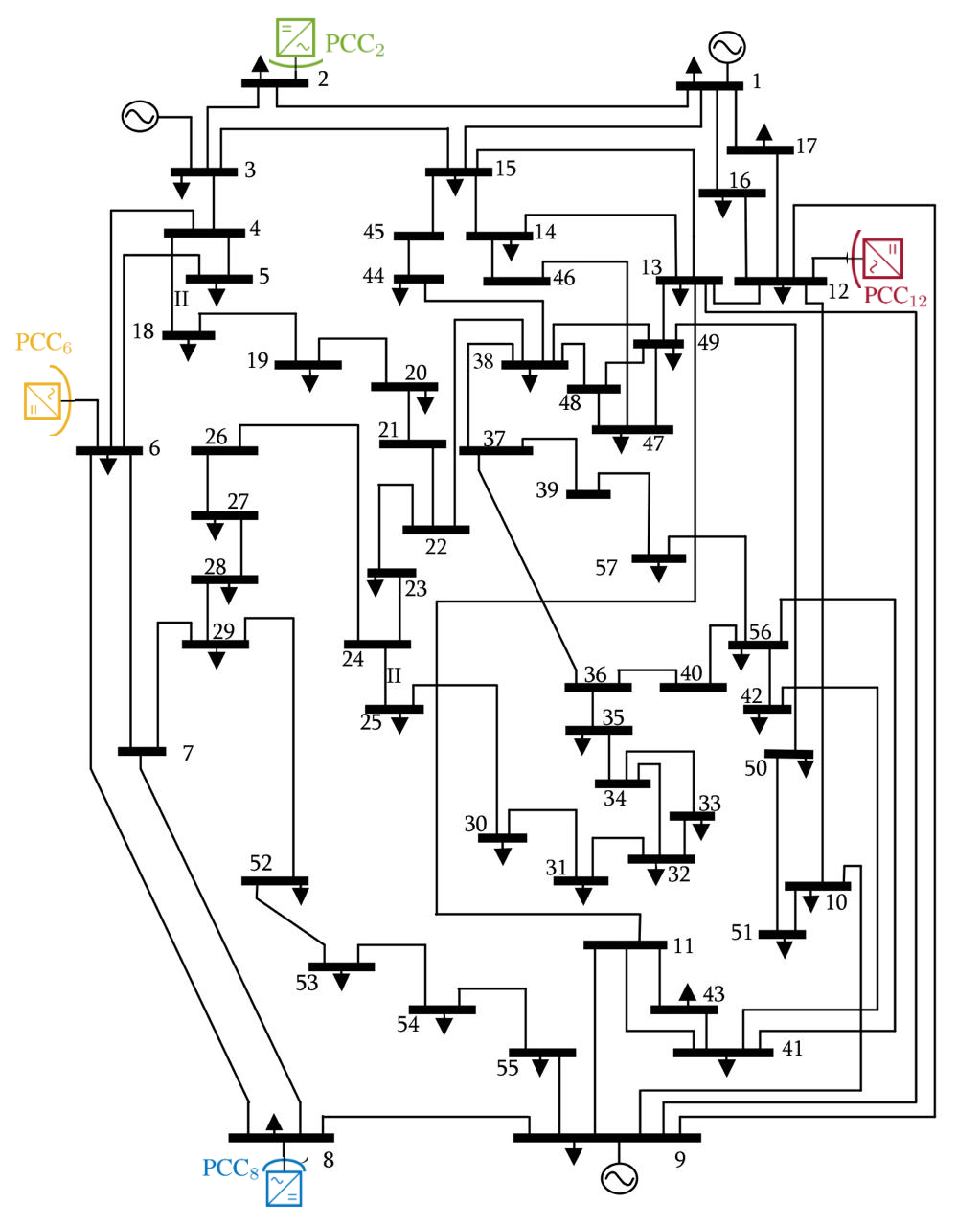}
    \caption{IEEE 57 bus test system.}
    \label{fig:57_node_IEEE}
\end{figure}
\tb{Figure~\ref{fig:SRG_IEEE57node} illustrates the application of~\eqref{eqn:14node_individual}. Each figure is rotated to facilitate the interpretability of the plot. Moreover, to enhance the perception of the stability margin, we plot the SRG stability margin in Fig.~\ref{fig:Stabilitymargin_Model_57nodes}, which reveals several intuitive trends.  Fig.~\ref{fig:Stabilitymargin_Model_57nodes} shows that the GFL converter at node~2 exhibits a notably large stability margin, primarily due to its proximity to the slack bus and its relatively low PLL bandwidth. In contrast, the GFL converters at nodes~6 and~12 display similar stability margins, despite node~12 being electrically stronger than node~6. This apparent similarity arises because the converter at node~12 employs a higher PLL bandwidth, which offsets the advantage of its stronger network position. The behavior of the GFM converter differs markedly from that of its GFL counterparts. At low frequencies ($f<5$\,Hz), it achieves substantially larger stability margins than the GFL units at nodes~6 and~12, whereas at higher frequencies its margin decreases rapidly and becomes comparable to theirs. The fact that the margin remains positive across the entire frequency range indicates that the system is stable under this configuration, which is consistent with the time-domain results in Fig.~\ref{fig:Sims_IEEENode57}, where we can observe oscillations at a high frequency in the initial part of the simulation that is attenuated in the first $5s$. However, the synchronous machine in node 3 exhibits oscillatory behavior at $ 2.4$ Hz, consistent with one of the lowest points in the SRG stability margin. In contrast, the unstable GFL configuration exhibits uniformly lower margins than the stable one, with a clear intersection near $7.3$\,Hz that coincides with the growing oscillation seen in Fig.~\ref{fig:Sims_IEEENode57_unstable}.}

\begin{figure}[bt]
\centering
\begin{subfigure}[b]{0.32\linewidth}
    \centering
    \includegraphics[width=1\linewidth]{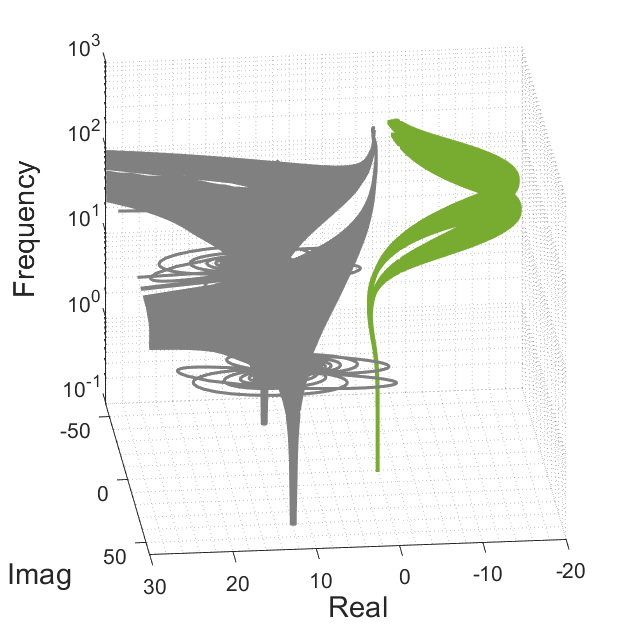}
    \caption{} 
\end{subfigure}
\begin{subfigure}[b]{0.32\linewidth}
    \centering
    \includegraphics[width=1\linewidth]{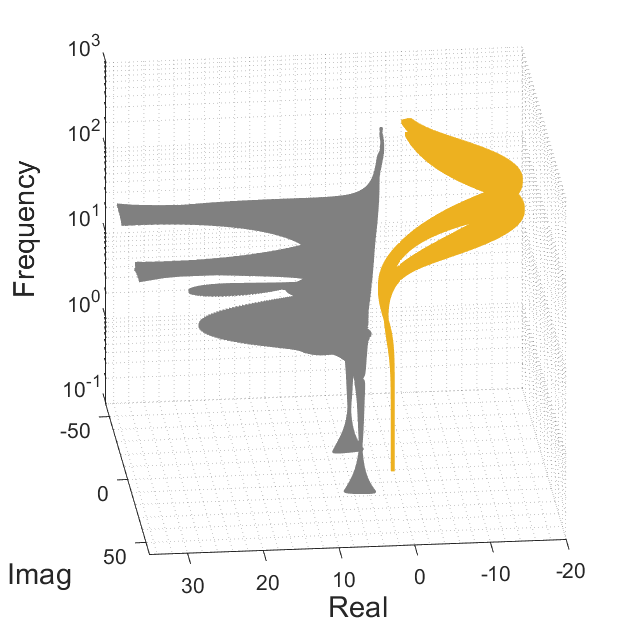}
    \caption{} 
\end{subfigure}
\begin{subfigure}[b]{0.32\linewidth}
    \centering
    \includegraphics[width=1\linewidth]{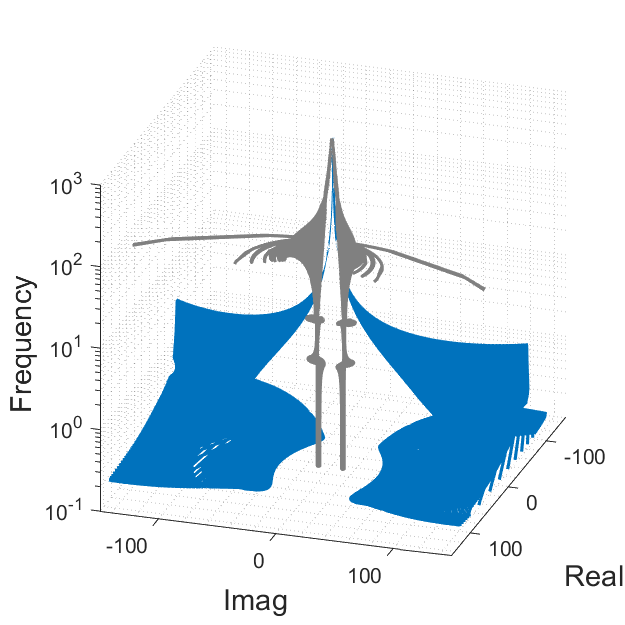}
    \caption{} 
\end{subfigure}
\begin{subfigure}[b]{0.49\linewidth}
    \centering
    \includegraphics[width=1\linewidth]{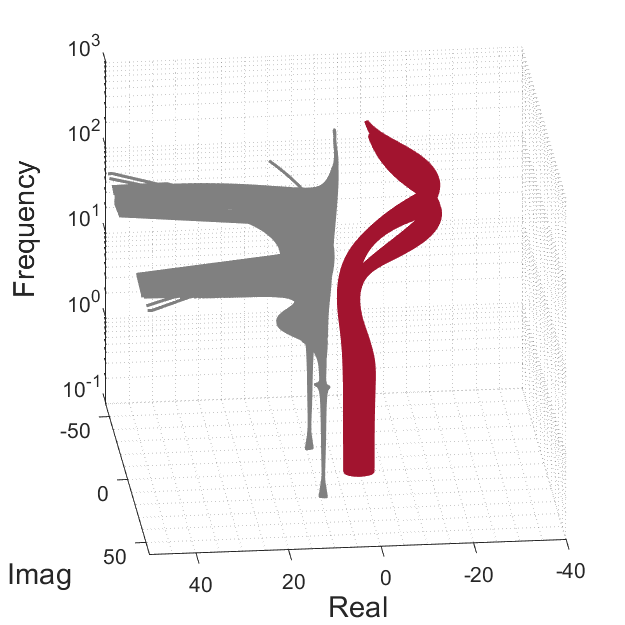}
    \caption{} 
\end{subfigure}
\begin{subfigure}[b]{0.49\linewidth}
    \centering
    \includegraphics[width=1\linewidth]{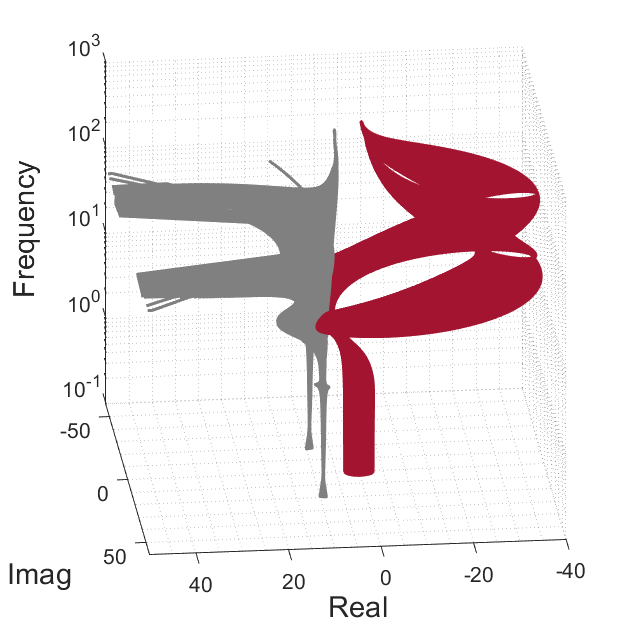}
    \caption{} 
\end{subfigure}
\caption{\tb{(a) $\SRG{Y_c^{N_2}(s)}$ in green and $\SRG{Y_{\text{grid}}^{N_2}(s)}$ in gray.  
(b) $\SRG{Y_c^{N_6}(s)}$ in yellow and $\SRG{Y_{\text{grid}}^{N_6}(s)}$ in gray. 
(c) $\SRG{Y_c^{N_8}(s)}$ in blue and $\SRG{Y_{\text{grid}}^{N_8}(s)}$ in gray.  
(d) $\SRG{Y_c^{N_{12}}(s)}$ with current bandwidth  $f_{i}=300$Hz in red and $\SRG{Y_{\text{grid}}^{N_{12}}(s)}$ in gray. 
(e) $\SRG{Y_c^{N_{12}}(s)}$ with current bandwidth  $f_{i}=100$Hz in red and $\SRG{Y_{\text{grid}}^{N_{12}}(s)}$ in gray.}}
    \label{fig:SRG_IEEE57node}
\end{figure} 

\begin{figure}[ht]
    \centering
    \includegraphics[width=1\linewidth]{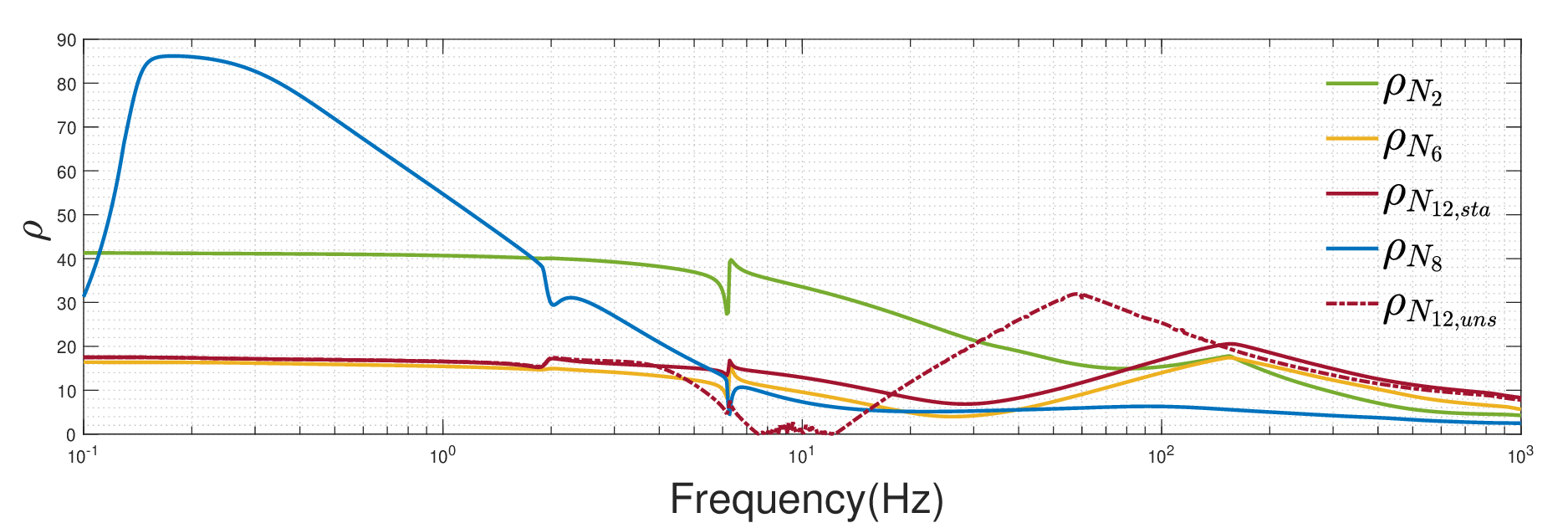}
    \caption{Stability margin for each converter on the IEEE 57 node system evaluated as \eqref{eqn:14node_individual} for both case studies.}
    \label{fig:Stabilitymargin_Model_57nodes}
\end{figure} 
\begin{figure}[ht]
    \centering
    \includegraphics[width=1\linewidth]{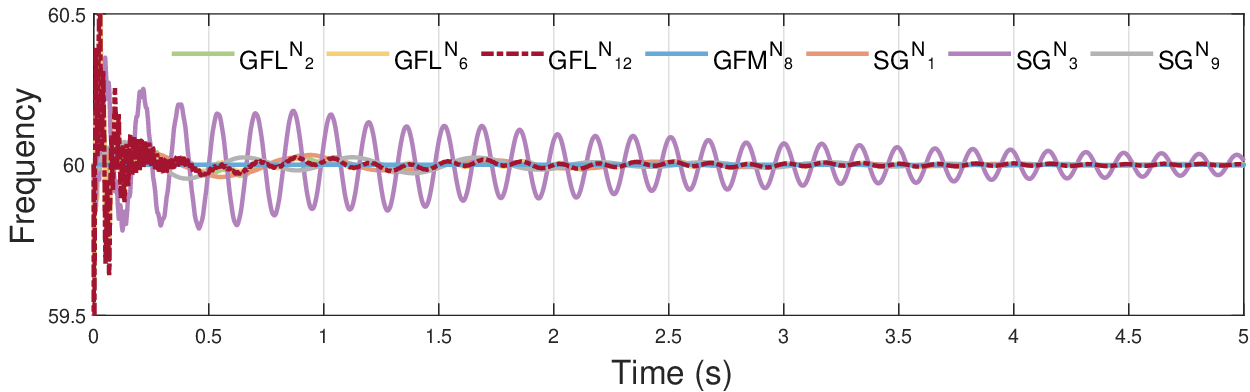}
    \caption{Frequency in the system depicted in Fig. \ref{fig:57_node_IEEE} for the GFL converters with the converter in node 12 corresponding to a stable GFL configuration.}
    \label{fig:Sims_IEEENode57}
    \includegraphics[width=1\linewidth]{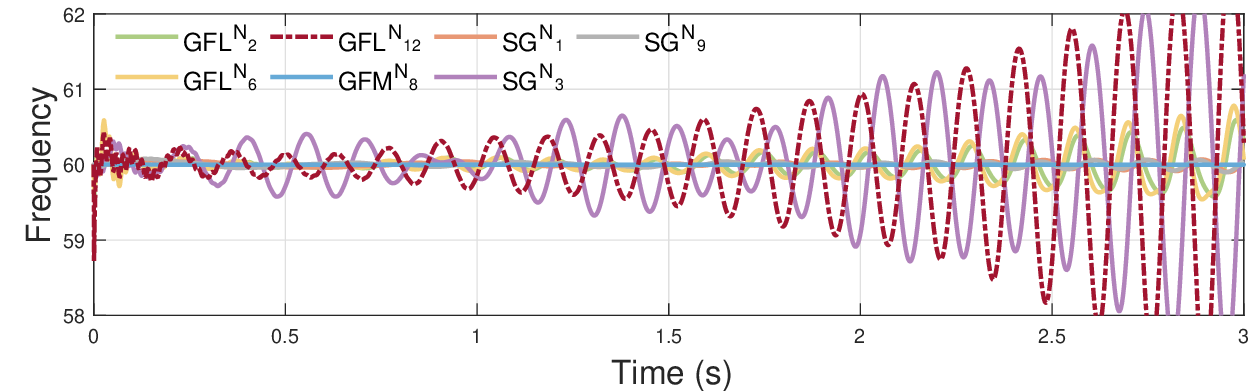}
    \caption{Frequency in the system depicted in Fig.\ref{fig:57_node_IEEE} for the GFL converters with the converter in node 12 corresponding to a unstable GFL configuration.}
    \label{fig:Sims_IEEENode57_unstable}
\end{figure}

\begin{remark}[Decentralized stability conditions]
\label{rem:decentralized_setting}
\tb{The underlying SRG formulation is fully amenable to decentralization as shown in \cite{Baron2025decentralized}. Therefore, even though this manuscript does not exploit that structure, the proposed method remains compatible with decentralized operation and can be extended in that direction with minimal modifications. In such a case, the grid is represented by its Kron reduction, and its SRG is over-approximated at each frequency by a disk, denoted $\operatorname{SRG}(\widehat{Y_{\text{grid}}}(s))$. If for each converter $i$ and for each frequency $\omega\in[0,\infty)$ 
\begin{align*}
    \operatorname{SRG}(\widehat{Y_{\text{grid}}}(s)) \;\cap\; -\tau\,\operatorname{SRG}\!\left(Y_{c}^{N_i}(s)\right) = \emptyset,\; \forall \tau\in[0,1],
\end{align*}
then, the system is stable. Further details on the decentralized formulation are provided in~\cite{Baron2025decentralized}.}
\end{remark} 

\section{Stability Analysis Methods: A Comparison } \label{sec:comparison}
\tb{The proposed framework is rooted in the impedance-based, frequency-domain viewpoint, where stability is assessed through input–output relations rather than internal state dynamics. Hence, a comparison with state-space approaches is not pursued, as these rely on full system white-box models which are rarely available. Our method instead extends classical frequency-domain criteria (such as Nyquist and passivity tests) toward a geometric characterization, making comparisons meaningful only within this family of impedance-based formulations.} We now analyze state-of-the-art \tb{impedance-based} stability conditions and compare them to Theorem \ref{thm:GFTLTI}. We start by briefly reviewing the most used stability tools in industrial and academic environments. First, the GNC 
considers $\Tilde{Y}_{c}(s), Z_{\text{grid}}(s) \in \mathcal{RH}_\infty^{2\times 2}$ connected as shown in Fig. \ref{fig:decentralizedfb}, with a well-posed system interconnection. The closed-loop system achieves exponential stability if and only if:
\begin{align}
    \operatorname{det}(I+\Tilde{Y}_{c}(s)Z_{\text{grid}}(s)) \neq 0, \quad \forall s=\textup{j}\omega, \; \omega \in \mathbb{R}, \label{eqn:NyquistGC}
\end{align}
and the winding number of $\det(I+\Tilde{Y}_{c}(s)Z_{\text{grid}}(s))$ about the origin vanishes (See details in \cite{skogestad2005}).

As a second condition, the mixed gain-phase theorem combines two fundamental stability theorems. On one hand, the gain condition uses the maximum singular value ($\sigma_{\max}(\cdot)$ representing the maximum gain at each frequency) to require: 
\begin{align}
    \sigma_{\max}(\Tilde{Y}_{c}(s))\sigma_{\max}(Z_{\text{grid}}(s)) < 1. \label{eqn:smallgain}
\end{align}
     
The phase condition uses the maximum $\alpha_{\max}(\cdot)$ and minimum phase angle $\alpha_{\min}(\cdot)$, denoting the largest phase shifts at every frequency to enforce: 
\begin{subequations}\label{eqn:smallphase}
   \begin{align} 
 \alpha_{\max}(\Tilde{Y}_{c}(s)) + \alpha_{\max}(Z_{\text{grid}}(s)) &< \pi,\\
  \alpha_{\min}(\Tilde{Y}_{c}(s)) + \alpha_{\min}(Z_{\text{grid}}(s)) &> -\pi.
\end{align} 
\end{subequations}

The small phase criterion requires the semi-sectorial property \cite{Wang2024}, where an operator $A$ is said to be semi-sectorial if it has a numerical range with supporting lines with an angle less or equal than $\pi$.

Finally, passivity-based stability analysis provides another perspective. Defining the input feedforward passivity (IFP) and output feedback passivity (OFP) indices as:
\begin{align}
    \operatorname{IFP}(Y_c(\textup{j}\omega)) &= \nicefrac{1}{2} \min \lambda[Y_c(\textup{j}\omega)+Y_c^H(\textup{j}\omega)], \label{eqn:IFP_def} \\
    \operatorname{OFP}(Y_c(\textup{j}\omega)) &= \nicefrac{1}{2}\min \lambda[Y_c^{-1}(\textup{j}\omega)+Y_c^{-H}(\textup{j}\omega)], \label{eqn:OFP_def}
\end{align}
where $\min\lambda$ denotes the minimum eigenvalue,  the closed-loop system shown in Fig. \ref{fig:decentralizedfb} is exponentially stable if $\forall s=\textup{j}\omega$ with $\omega \in [0,\infty)$\cite{Wang2024Limitations}, either
\begin{align}
    \operatorname{IFP}(\Tilde{Y}_{c}(s)) + \operatorname{OFP}(Z_{\text{grid}}(s)) &> 0, \quad \text{or} \label{eqn:IFP} \\
    \operatorname{OFP}(\Tilde{Y}_{c}(s)) + \operatorname{IFP}(Z_{\text{grid}}(s)) &> 0. \label{eqn:OFP}
\end{align} 
\begin{remark}[Difference of Theorem \ref{thm:GFT_nonlinear} and reviewed methods]
Theorem \ref{thm:GFT_nonlinear} offers a distinct advantage over the reviewed methods by allowing direct inclusion of CPL in the stability assessment. The approximation of a CPL as a gain-bounded operator in
Lemma \ref{lemma:approx_SRG} can be used for small-gain analysis\label{remark:CPL}.
\end{remark}

\subsection{Case Study 3: Comparison}
To compare the stability methods, we study a  PLL-synchronized GFL converter as an example. The input admittance matrix expressions, denoted as $Y_c(s)$, are given in \cite{chen2022impedance,Wang2024Limitations}, with used parameters that can be found in Appendix \ref{appendix:comparison}. The system is depicted in Fig. \ref{fig:setup_single}, and we consider no CPL in this context in line with Remark \ref{remark:CPL}. We assume the DC side is ideal and the AC grid is modeled as $$Y_{\text{grid}}(s)=\begin{bmatrix}
        R+sL     & -\omega_0L \\
        \omega_0L & R+sL
    \end{bmatrix}^{-1}.$$

The system's stability is confirmed by both the Nyquist and SRG methods. While the Nyquist plot in Fig.~\ref{fig:Comparison_Smallsignal}(a) offers a more straightforward interpretation than the SRG-based plot in Fig.~\ref{fig:Comparison_Smallsignal}(b), Nyquist plots can become unwieldy when analyzing power grids, as shown in \cite{Cigre2024,Fan2020_problemsAdmittance}. To provide more comprehensive insights into system behavior, we present half of the SRG in Fig.~\ref{fig:Comparison_Smallsignal}(b). Given the SRG's symmetrical nature, this partial view offers better clarity for the reader. {The $\operatorname{SRG}(Y_c(s))$ establishes a criterion for determining grid-converter compatibility based on the decoupling principle in Theorem \ref{thm:GFTLTI}. This framework is particularly valuable for modern grids with rapidly changing admittance characteristics, such as those with high renewable energy integration. Our approach assists operators and manufacturers in evaluating converter stability under such variable conditions.}

\begin{figure}[ht]
\begin{subfigure}[b]{0.24\textwidth}
    \centering
    \includegraphics[width=1\linewidth]{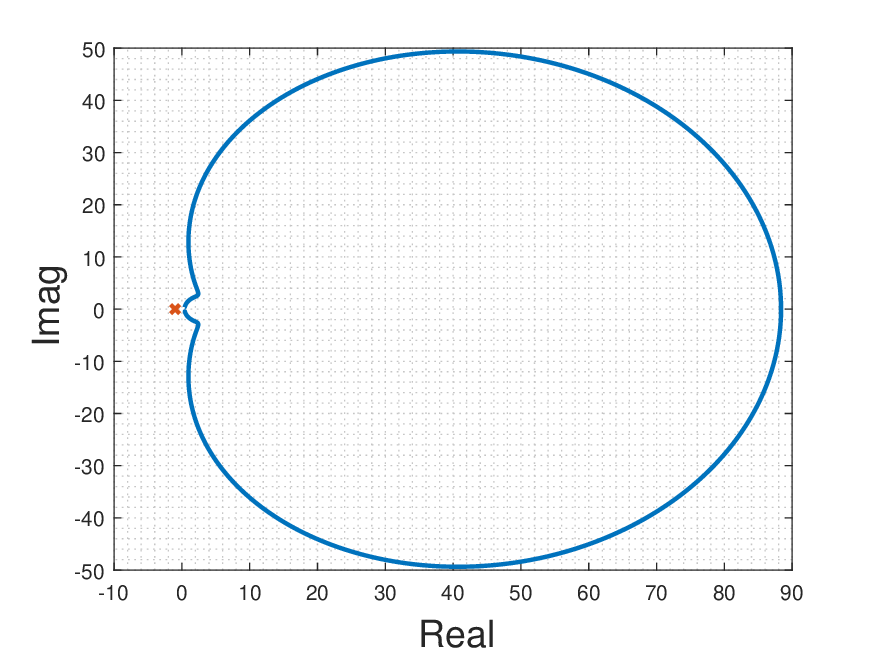}
    \caption{}
\end{subfigure}
\begin{subfigure}[b]{0.24\textwidth}
    \centering
    \includegraphics[width=1\linewidth]{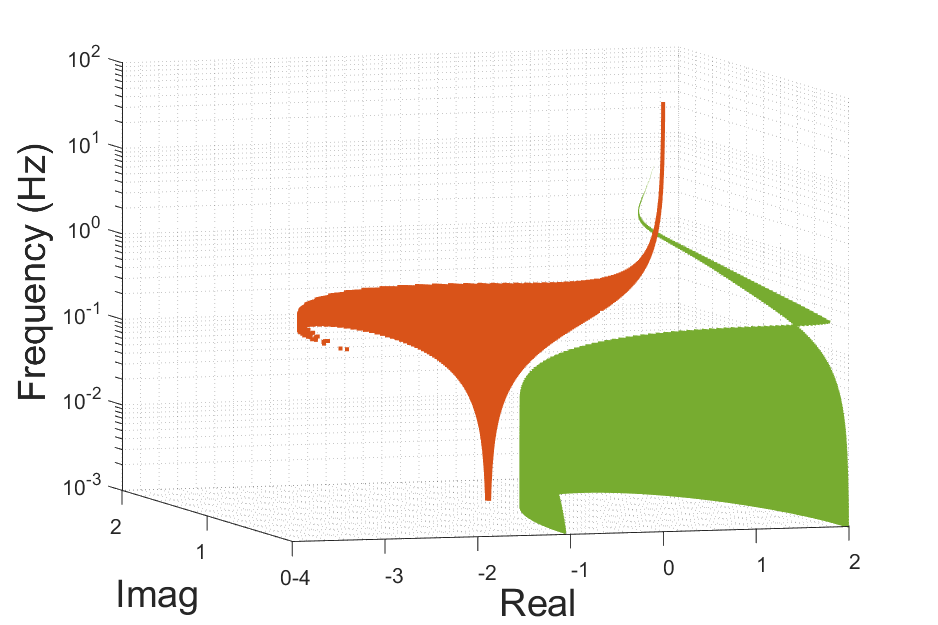}
    \caption{} 
\end{subfigure}
\begin{subfigure}[b]{0.24\textwidth}
    \centering
    \includegraphics[width=1\linewidth]{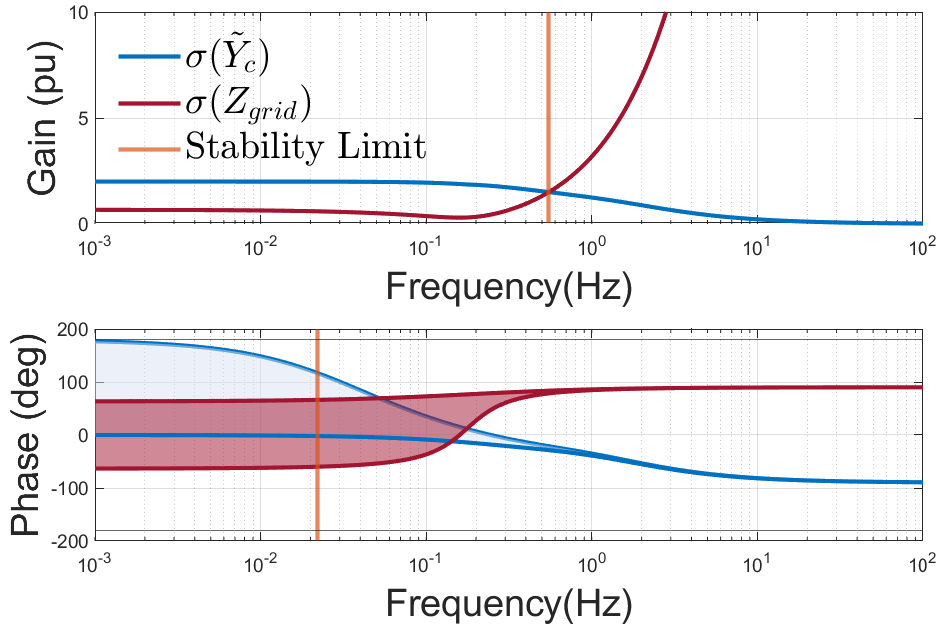}
    \caption{}
\end{subfigure}
\begin{subfigure}[b]{0.24\textwidth}
    \centering
    \includegraphics[width=1\linewidth]{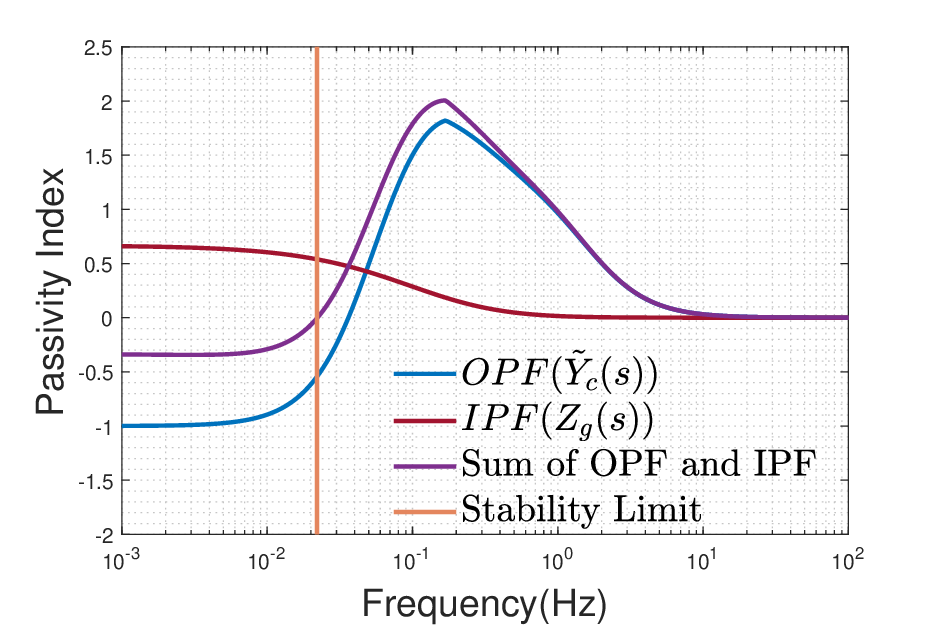}
    \caption{} 
\end{subfigure}
\begin{subfigure}[b]{0.5\textwidth}
    \centering
    \includegraphics[width=1\linewidth]{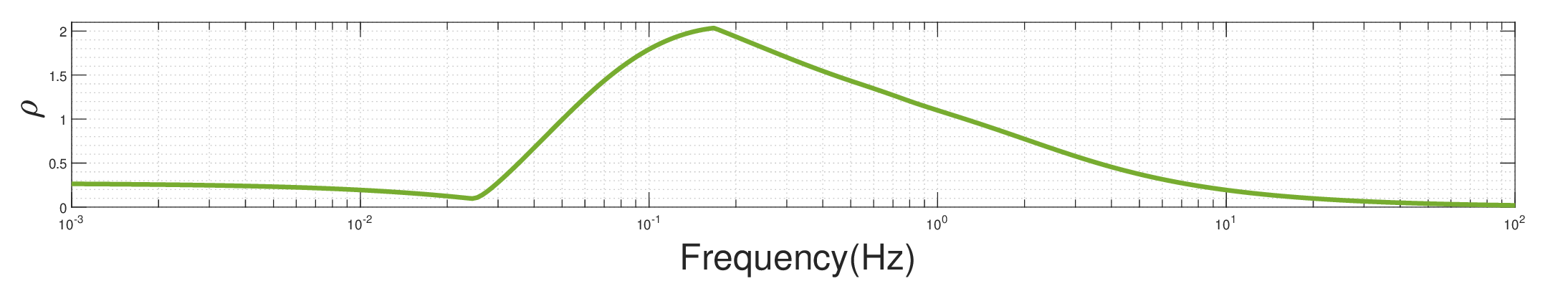}
    \caption{} 
\end{subfigure}  
    \caption{(a) Nyquist plot of $(I+\Tilde{Y}_{c}(s)Z_{\text{grid}}(s))$(b) $\SRG{Y_c(s)}$ in green with -$\SRG{Y_{\text{grid}}(s)}$ in orange. (c) Above $\sigma_{max}(\Tilde{Y}_{c}(s))$ in blue with $\sigma_{max}(Z_{\text{grid}}(s))$ in red. Below $\alpha_{\max}(\Tilde{Y}_{c}(s))$ in blue and $\alpha_{\max}(Z_{\text{grid}}(s))$ in red. (d) IFP($\Tilde{Y}_{c}(s)$) in blue and OFP($Z_{\text{grid}}(s)$) in red, in magenta  \eqref{eqn:IFP} (e) SRG-Stability margin.}
    \label{fig:Comparison_Smallsignal}
\end{figure}

Fig.~\ref{fig:Comparison_Smallsignal}(c) shows the gain and phase plots. In this case, the small gain theorem certifies the stability of the feedback system from $f\in (0.548,\infty]$ Hz, and the small phase theorem can certify stability in the range of $f\in (0.022,\infty]$ Hz. Therefore, the mixed small gain and phase theorem fails to assess the stability of the GFL converter between $f\in (0,0.022]$ Hz. In the case of the analysis with the passivity theorem, we reach the results shown in Fig.~\ref{fig:Comparison_Smallsignal}(d). It is possible to observe that the condition in \eqref{eqn:IFP} fails between $f\in (0,0.024]$ Hz, making it unsuitable to characterize the converter connection's stability.  {Fig.\ref{fig:Comparison_Smallsignal}(e) shows a low (but positive) SRG-stability margin for $f\in(0.01,0.022]$ Hz, which is the frequency band where the mixed small gain and phase and passivity theorems fail to assess stability.}

\subsection{Complexity and Conservatism of the Methods}
 
\tb{We compare the computational cost and conservatism of the previous methods for the closed-loop connection in Fig.~\ref{fig:decentralizedfb}, where each subsystem has an $n\times n$ admittance evaluated over $N_\omega$ frequency points.}

\tb{In the SRG method, each subsystem is processed independently. The SRG boundary is obtained from the numerical range of $Y(\mathrm{j}\omega)$ via the dominant eigenvalues of the Hermitian matrices $H_\phi=\tfrac{1}{2}(e^{-\mathrm{j}\phi}Y+e^{\mathrm{j}\phi}Y^H)$ for $N_\phi$ angular samples. Each eigenvalue computation costs $\mathcal{O}(n^3)$, giving $\mathcal{O}(N_\omega N_\phi n^3)$ overall. The frequency-wise decoupling enables full parallelization and requires no loop formation.}

\tb{The generalized Nyquist criterion constructs the loop $L(s)=\widetilde{Y}_{\mathrm{c}}(s)Z_{\mathrm{grid}}(s)$ and evaluates $\det(I+L(\mathrm{j}\omega))$ or its eigenvalues, with each matrix multiplication or eigenvalue computation costing $\mathcal{O}(n^3)$. This yields $\mathcal{O}(N_\omega n^3)$ complexity but couples all subsystems and requires detailed models. SRG-based analysis, in contrast, can use over-approximated or identified SRGs admittances, making it amenable when there is no full modeling data available, while the GNC remains exact and non-conservative for LTI systems.} 

\tb{Passivity-index, small-gain, and small-phase tests have similar cost $\mathcal{O}(N_\omega n^3)$ and share the SRG decoupled structure. However, their applicability is limited, since converter admittances typically violate passivity near the synchronous frequency\cite{Wang2023_Passivity}, small gain is typically satisfied only for low frequency ranges, and small-phase conditions become ill-defined when the admittance is non-sectorial\cite{huang2024gain}.}

\tb{Overall, all methods scale cubically with subsystem dimension. The SRG approach is distinguished by its parallelizable structure, ability to work with model over-approximations, and provision of a frequency-wise stability margin rather than a binary test, offering clearer geometric insight. The GNC is exact but it is centralized and requires detailed model knowledge, while passivity and gain/phase-based methods are efficient yet often conservative in low-frequency regions where converter admittances deviate from ideal passive or sectorial behavior.}

\section{Conclusions and Future Work} \label{sec:conclusions}

This paper presents a novel stability certification method for grid-connected converters based on the SRG framework. The proposed approach provides several key advantages: it enables decoupled analysis of grid and converter dynamics, offering a unified stability framework applicable to both GFL and GFM converters. Furthermore,  it incorporates CPLs directly into the stability analysis without requiring linearization and using easily computable bounds. Future research will focus on developing SRG-based methods using black-box modeling.
\bibliographystyle{IEEEtran}
\bibliography{bibtex/TIE.bib}
\appendix
\subsection{Appendix: Proof Lemma \ref{lemma:approx_SRG}}\label{appendix:approximation}
Consider two arbitrary signals $ v_1, v_2$ with amplitudes larger than $ v_{\min} $. Substituting them into \eqref{eqn:SRG_operator_nonlinear} yields: 
\begin{small}
 \begin{align*}
 \operatorname{SRG}(y_{cp})=\hspace{7cm} \\\left\{\dfrac{\left\| \frac{Mv_2}{\|v_2\|_2^2} - \frac{Mv_1}{\|v_1\|_2^2} \right\|_2}{\| {v_2-v_1}\|_2} \exp \left[\pm \textup{j} \angle\left(v_2-v_1,\frac{Mv_2}{\|v_2\|_2^2} - \frac{Mv_1}{\|v_1\|_2^2}\right) \right]\right\},
\end{align*}   
\end{small}
We focus on bounding the gain. We observe that:
\begin{align}
    \dfrac{\left\| \frac{Mv_2}{\|v_2\|_2^2} - \frac{Mv_1}{\|v_1\|_2^2} \right\|}{\| {v_2-v_1}\|_2}\leq {\sigma_{\max}(M)} \dfrac{\left\| \frac{v_2}{\|v_2\|_2^2} - \frac{v_1}{\|v_1\|_2^2} \right\|_2}{\| {v_2-v_1}\|_2}.\label{eqn:step1_approximation}
\end{align}

Define $f(v) =  \frac{v}{\|v\|_2^2}$, for $\norm{v}>0$. The Jacobian is given by 
\begin{align*}
    Df(v) = \frac{I}{\|v\|_2^2} - \frac{2vv^\top }{\|v\|_2^4},
\end{align*}
and the induced $2$-norm of $Df(v)$ is
\begin{align*}
   \sup_{\norm{u}=1} \|Df(v)u\|_2 = \left\|\frac{u}{\|v\|_2^2}-\frac{2(v^\top  u)v}{\|v\|_2^4}\right\|_2 = \frac{1}{\|v\|^2}.
\end{align*}

By the mean value inequality, and $ \norm{v}_2 \geq v_{\min} $,  we reach
\begin{align}
\|f(v_2)-f(v_1)\|_2 &\leq \sup_{v\in[v_1,v_2]}{\|Df(v)\|_2}\|v_2-v_1\|_2 \nonumber\\
\dfrac{\|f(v_2)-f(v_1)\|_2}{\|v_2-v_1\|_2}&\leq \frac{1}{{v^2_{\min}}}.    \label{eqn:step2_approximation}
\end{align}    
Thus, replacing \eqref{eqn:step2_approximation} into \eqref{eqn:step1_approximation}, we reach to
\begin{align*}
\dfrac{\left\| \frac{Mv_2}{\|v_2\|_2^2} - \frac{Mv_1}{\|v_1\|_2^2} \right\|_2}{\| {v_2-v_1}\|_2} \leq \sigma_{\max}(M)\dfrac{\|f(v_2)-f(v_1)\|_2}{\| {v_2-v_1}\|_2} \leq \frac{\sigma_{\max}(M)}{{v^2_{\min}}},
\end{align*}
which bounds the gain. Since we do not constrain the phase angles of $\SRG{y_{cp}}$, we consider the worst-case scenario where the SRG covers the full angular range $[-\pi, \pi]$, leading us to \eqref{eqn:bound_CPL}.

\subsection{Appendix: Proof Lemma \ref{lemma:CPL_eps_volterra}}
\label{appendix:lemmaCPL}

\tb{We can write $\frac{1}{\|v\|_2^2}=\frac{1}{V_0^2}\tfrac{1}{1+\xi}$ with $\xi=2\frac{v_0^\top v_\delta}{V_0^2}+\frac{ v_\delta^\top v_\delta}{V_0^2}$. For $\rho<0.1$, $\|\xi\|_\infty<1$, then the Neumann series converges uniformly as $\frac{1}{1+\xi}=\sum_{k=0}^\infty(-1)^k\xi^k.$ 
Thus, replacing into \eqref{eqn:Zgrid_CPL}, we reach to 
\begin{align*}
    i=i_{\mathrm{lin}}+i_{\mathrm{har}}=\frac{1}{V_0^2}\sum_{k=0}^\infty(-1)^k\xi^k\,M(v_0+ v_\delta).
\end{align*} 
The term with $k=0$ is linear in $v_0$ and $v_\delta$, which allow us to build the frequency preserving response as $i_{\mathrm{lin}}=\frac{1}{V_0^2}M(v_0+ v_\delta)$. Therefore, the nonlinear remainder is }
\tb{\begin{align*}
    i_{\text{har}}=\frac{1}{V_0^2}\sum_{k=1}^\infty(-1)^k\xi^k\,M(v_0+ v_\delta).
\end{align*}}

\tb{Then, $\|i_{\text{har}}\|_2\le\frac{\sigma_{\text{max}}(M)}{V_0^2}\|\sum_{k=1}^\infty\xi^k\,v\|_2$. Using the triangular inequality, $\|i_{\text{har}}\|_2\le\frac{\sigma_{\text{max}}(M)}{V_0^2}\sum_{k=1}^\infty\|\xi^k\,v\|_2$. Recalling that  $\|xy\|_2\leq\|x\|_\infty\|y\|_2$,  we can bound $\|\xi^k\,v\|_2\leq\|\xi\|^{k-1}_\infty \|\xi\|_2 \|v\|_\infty$. Therefore, given that $\|\xi\|_2\leq\tfrac{2+\rho}{V_0}\|v_\delta\|_2$, $\|\xi\|_\infty \leq 2\rho+\rho^2$ and $\|v\|_\infty\leq{V_0(1+\rho)}$
\begin{align*}
    \|i_{\text{har}}\|_2\le\frac{\sigma_{\text{max}}(M)}{V_0^2}\sum_{k=1}^\infty\|\xi\|^{k-1}_\infty (2+\rho)(1+\rho)\|v_\delta\|_2.
\end{align*}
Using the geometric series we reach to $\|i_{\text{har}}\|_2~\le~\frac{\sigma_{\text{max}}(M)}{V_0^2}\frac{(2+\rho)(1+\rho)}{1-\|\xi\|_\infty} \|v_\delta\|_2$, and using the minimum allowable voltage the following final expression can be obtained:
\begin{align*}
    \|i_{\text{har}}\|_2\le\frac{\sigma_{\text{max}}(M)}{v_{\text{min}}^2}\frac{(1+\rho)(2+\rho)}{1-(2\rho+\rho^2)} \|v_\delta\|_2.
\end{align*}}

\subsection{Appendix: Proof Theorem \ref{thm:GFT_nonlinear}}
\label{appendix:proofcor}
Recall condition \eqref{eqn:cor1}, $\forall s=\textup{j}\omega$ with $\omega\in[0,\infty)$
\begin{align*}
      -\left(\operatorname{SRG}(y_{l}(s))+\operatorname{SRG}(\widehat{y_{cp}})\right) \cap\tau\operatorname{SRG}(Y_{c}(s))= \emptyset. 
\end{align*}

As $\widehat{y_{cp}}$ satisfies the chord property by Lemma \ref{lemma:approx_SRG}, then
\begin{align}
    \operatorname{SRG}(y_{l}(s))+\operatorname{SRG}(\widehat{y_{cp}})\supset\operatorname{SRG}(y_{l}(s)+\widehat{y_{cp}}).    \label{eqn:proofunitary_step3}
\end{align}
 
Using Property \ref{property:sum} and Lemma \ref{lemma:approx_SRG}, we reach to
\begin{align}
    \operatorname{SRG}(y_{l}(s)+\widehat{y_{cp}})\supset\operatorname{SRG}(y_{l}(s)+{y_{cp}}).
    \label{eqn:proofunitary_stepn+1}
\end{align}

Given Property \ref{property:inversion} and using \eqref{eqn:Zgrid_nonlinear}, we can rewrite \eqref{eqn:proofunitary_stepn+1} as 
 \begin{align}
  \operatorname{SRG}(y_{l}(s)+{y_{cp}})=\operatorname{SRG}(Y_{\text{grid}}),
   \label{eqn:proofunitary_step4}
 \end{align} 
on the other hand, recall  $\Tilde{Y}_c(s)=  J(-\theta)Y_c(s)J(\theta)$. The SRG of the $\Tilde{Y}_c(s)$ is
\begin{align}
     \operatorname{SRG}(\Tilde{Y}_c(s))= \text{SRG}( J(-\theta)Y_c(s)J(\theta)). \label{eqn:proofunitary_step1}
\end{align}

The SRG operator is unitarily invariant \cite[Thm 1]{pates2021scaled}. In other words, given any linear operator $U \in \mathbb{C}^{n\times n}$ such that $UU^* = U^*U = I_n$, then $\operatorname{SRG}(U^*A U) = \operatorname{SRG}(A)$. Therefore, as $J(-\theta)J(\theta)=J(\theta)J(-\theta)=I_m$ for any arbitrary $\theta$, we can rewrite \eqref{eqn:proofunitary_step1} as
\begin{align}
  \operatorname{SRG}(\Tilde{Y}_c(s))=  \operatorname{SRG}({Y}_c(s)).\label{eqn:proofunitary_step2}
\end{align}

Consequently, we can rewrite \eqref{eqn:cor1} using \eqref{eqn:proofunitary_step4} and \eqref{eqn:proofunitary_step2} as
\begin{align*}
     \operatorname{SRG}(Y_{\text{grid}}) \cap -\tau \operatorname{SRG}(\Tilde{Y}_c(s)) = \emptyset,\;\; \forall \tau \in (0,1],
\end{align*}
$\forall s=\textup{j}\omega$ with $\omega\in[0,\infty)$, which is the stability condition in Theorem \ref{thm:GFTLTI}.



\subsection{Grid Following and Grid Forming Controller Derivation} \label{appendix:GFL_GFM} 

This appendix shows the detailed parameters of the GFL and GFM controllers used in Subsection \ref{subsec:scenario1} and \ref{subsec:Scenario2_nonlinear}\cite{Simplus}.  The measured system variables (shown in red in Figs. \ref{fig:GFL_controller} and \ref{fig:GFM_controller}) are $i_d$, $i_q$, $v_{dc}$, $v_d$, $v_q$, and $Q$, representing the $d$-axis current, $q$-axis current, DC voltage, $d$-axis voltage, $q$-axis voltage, and reactive power, respectively. Variables marked with $^*$ denote their reference values.   The control diagrams for the PLL-based GFL controller are shown in Fig.\ref{fig:GFL_controller}. 
Fig.~\ref{fig:GFL_controller}(a) shows the current controller where $PI_{cc}(s)=K_{p_{i}}+\frac{K_{i_{i}}}{s}$, where $K_{p_{i}}=X f_i/f_{base}$ and $K_{i_{i}}=K_{p_{i}} 2\pi f_i/4  $. The DC-link controller is depicted in Fig.~\ref{fig:GFL_controller}(b) where $PI_{dc}(s)=K_{p_{dc}}+\frac{K_{i_{dc}}}{s}$, and $K_{p_{dc}}=v_{dc} C_{dc} 2\pi f_{vdc}$, and $K_{i_{dc}}=K_{p_{dc}} 2\pi f_{dc}/4$. Finally, the PLL PI controller is depicted in Fig.~\ref{fig:GFL_controller}(c) where $PI_{pll}(s)=K_{p_{pll}}+\frac{K_{i_{pll}}}{s}$, where $K_{p_{pll}}= 2\pi f_{\text{pll}}$ and $K_{i_{pll}}=K_{p_{pll}} 2\pi f_{\text{pll}}/4$. where $f_{dc}$, $f_{\text{pll}}$, $f_i$ refer to the dc-link bandwidth, PLL cutoff frequency, and current bandwidth, respectively.
\begin{figure}[ht]
\begin{subfigure}[b]{0.15\textwidth}
    \centering
    \begin{tikzpicture}[scale=0.8, every node/.style={transform shape}]
\draw (9.4,4.35) rectangle (10.6,3.65);
\node at (10,4) {$PI_{cc}(s)$};
\draw[fill = white] (9,4) circle [radius=0.05];

\draw[-latex, line width = .5 pt] (10.6,4) -- (11.2,4) ;
\draw[-latex, line width = .5 pt] (9.05,4) -- (9.4,4);
\draw[-latex, line width = .5 pt] (8,3.5)--(9,3.5)-- (9,3.95);
\draw[-latex, line width = .5 pt] (8,4) -- (8.95,4);
\node at (8.812,3.8444) {\tiny$-$};
\node at (8.812,4.1664) {\tiny$+$};
\node at (10.85,4.2) {$e_d$};
\node at (8,4.2) {$i_d^*$};
\node at (8,3.7) {\tbr{$i_d$}};

\draw (9.4,5.35) rectangle (10.6,4.65);
\node at (10,5) {$PI_{cc}(s)$};
\draw[fill = white] (9,5) circle [radius=0.05];
\draw[-latex, line width = .5 pt] (10.6,5) -- (11.2,5) ;
\draw[-latex, line width = .5 pt] (9.05,5) -- (9.4,5);
\draw[-latex, line width = .5 pt] (8,4.5)--(9,4.5)-- (9,4.95);
\draw[-latex, line width = .5 pt] (8,5) -- (8.95,5);
\node at (8.812,4.8444) {\tiny$-$};
\node at (8.812,5.1664) {\tiny$+$};
\node at (10.85,5.2) {$e_q$};
\node at (8,5.2) {$i_q^*$};
\node at (8,4.7) {\tbr{$i_q$}};
\end{tikzpicture}
    \caption{}
\end{subfigure}
\begin{subfigure}[b]{0.15\textwidth}
    \centering
    \begin{tikzpicture}[scale=0.8, every node/.style={transform shape}]
\draw (9.4,4.35) rectangle (10.6,3.65);
\node at (10,4) {$PI_{dc}(s)$};
\draw[fill = white] (9,4) circle [radius=0.05];

\draw[-latex, line width = .5 pt] (10.6,4) -- (11.2,4) ;
\draw[-latex, line width = .5 pt] (9.05,4) -- (9.4,4);
\draw[-latex, line width = .5 pt] (8,3.5)--(9,3.5)-- (9,3.95);
\draw[-latex, line width = .5 pt] (8,4) -- (8.95,4);
\node at (8.812,3.8444) {\tiny$-$};
\node at (8.812,4.1664) {\tiny$+$};
\node at (10.85,4.2) {$i_d^*$};
\node at (8,4.2) {$v_{dc}^*$};
\node at (8,3.4) {\tbr{$v_{dc}$}};
\end{tikzpicture}
\caption{}
\end{subfigure}
\begin{subfigure}[b]{0.15\textwidth}
\begin{tikzpicture}[scale=0.8, every node/.style={transform shape}]
\draw (9.4,4.35) rectangle (10.6,3.65);
\node at (10,4) {$PI_{pll}(s)$};
\draw[fill = white] (11.2,4) circle [radius=0.05];
\draw (11.6,4.35) rectangle (12,3.65);
\node at (11.8,4) {$\frac{1}{s}$};

\draw[-latex, line width = .5 pt] (10.6,4) -- (11.15,4) ;
\draw[-latex, line width = .5 pt] (8.9,4) -- (9.4,4);
\draw[-latex, line width = .5 pt] (11.2,3.5)-- (11.2,3.95);
\draw[-latex, line width = .5 pt] (11.25,4) -- (11.6,4) ;
\draw[-latex, line width = .5 pt] (12,4) -- (12.4,4) ;

\node at (11,3.8444) {\tiny$+$};
\node at (11,4.1664) {\tiny$+$};
\node at (12.2,4.2) {$\theta$};
\node at (11,3.4) {$\omega_{0}$};
\node at (9,4.3) {\tbr{$v_{q}$}};
\end{tikzpicture}
    \caption{}
\end{subfigure}
    \caption{(a) Current, (b) dc-link  and (c) PLL controller.}
\label{fig:GFL_controller}
\end{figure}
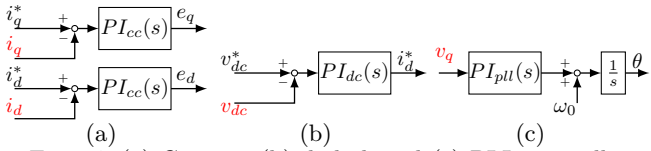

For the GFM controller, the control diagrams are shown in Fig. \ref{fig:GFM_controller}. Fig.~\ref{fig:GFM_controller}(a) shows the current controller where $PI_{cc}(s)=K_{p_{i}}+\frac{K_{i_{i}}}{s}$, where $K_{p_{i}}=X f_{idq}/f_{base}$ and $K_{i_{i}}=K_{p_i} 2\pi f_{idq}/4  $. Fig.~\ref{fig:GFM_controller}(b) shows the voltage controller where $PI_{vc}(s)=K_{p_{vc}}+\frac{K_{i_{vc}}}{s}$, where $K_{p_{vc}}= 2\pi f_{vdq}/4$ and $K_{i_{vc}}=K_{p_{vc}} 2\pi f_{vdq}/4$. Fig.~\ref{fig:GFM_controller}(c) shows the voltage regulation controller with parameter $P_{vr}$. Finally, Fig.~\ref{fig:GFM_controller}(d) shows the swing equation implemented in the GFM controller, using the well-known virtual synchronous generator strategy, where $J$ and $D$ represent the inertia and damping, respectively.

\begin{figure}[ht]
\begin{subfigure}[b]{0.1\textwidth}
    \centering
    \begin{tikzpicture}[scale=0.8, every node/.style={transform shape}]
\draw (9.4,4.35) rectangle (10.6,3.65);
\node at (10,4) {$PI_{cc}(s)$};
\draw[fill = white] (9,4) circle [radius=0.05];

\draw[-latex, line width = .5 pt] (10.6,4) -- (11.2,4) ;
\draw[-latex, line width = .5 pt] (9.05,4) -- (9.4,4);
\draw[-latex, line width = .5 pt] (8.4,3.5)--(9,3.5)-- (9,3.95);
\draw[-latex, line width = .5 pt] (8.4,4) -- (8.95,4);
\node at (8.812,3.8444) {\tiny$-$};
\node at (8.812,4.1664) {\tiny$+$};
\node at (10.85,4.2) {$e_d$};
\node at (8.4,4.2) {$i_d^*$};
\node at (8.4,3.7) {\tbr{$i_d$}};

\draw (9.4,5.35) rectangle (10.6,4.65);
\node at (10,5) {$PI_{cc}(s)$};
\draw[fill = white] (9,5) circle [radius=0.05];
\draw[-latex, line width = .5 pt] (10.6,5) -- (11.2,5) ;
\draw[-latex, line width = .5 pt] (9.05,5) -- (9.4,5);
\draw[-latex, line width = .5 pt] (8.4,4.5)--(9,4.5)-- (9,4.95);
\draw[-latex, line width = .5 pt] (8.4,5) -- (8.95,5);
\node at (8.812,4.8444) {\tiny$-$};
\node at (8.812,5.1664) {\tiny$+$};
\node at (10.85,5.2) {$e_q$};
\node at (8.4,5.2) {$i_q^*$};
\node at (8.4,4.7) {\tbr{$i_q$}};
\end{tikzpicture}
    \caption{}
\end{subfigure}
\begin{subfigure}[b]{0.1\textwidth}
    \centering
    \begin{tikzpicture}[scale=0.8, every node/.style={transform shape}]
\draw (9.4,4.35) rectangle (10.6,3.65);
\node at (10,4) {$PI_{vc}(s)$};
\draw[fill = white] (9,4) circle [radius=0.05];

\draw[-latex, line width = .5 pt] (10.6,4) -- (11.2,4) ;
\draw[-latex, line width = .5 pt] (9.05,4) -- (9.4,4);
\draw[-latex, line width = .5 pt] (8.4,3.5)--(9,3.5)-- (9,3.95);
\draw[-latex, line width = .5 pt] (8.4,4) -- (8.95,4);
\node at (8.812,3.8444) {\tiny$-$};
\node at (8.812,4.1664) {\tiny$+$};
\node at (10.85,4.2) {$i_d^*$};
\node at (8.4,4.2) {$v_d^*$};
\node at (8.4,3.7) {\tbr{$v_d$}};

\draw (9.4,5.35) rectangle (10.6,4.65);
\node at (10,5) {$PI_{vc}(s)$};
\draw[fill = white] (9,5) circle [radius=0.05];
\draw[-latex, line width = .5 pt] (10.6,5) -- (11.2,5) ;
\draw[-latex, line width = .5 pt] (9.05,5) -- (9.4,5);
\draw[-latex, line width = .5 pt] (8.4,4.5)--(9,4.5)-- (9,4.95);
\draw[-latex, line width = .5 pt] (8.4,5) -- (8.95,5);
\node at (8.812,4.8444) {\tiny$-$};
\node at (8.812,5.1664) {\tiny$+$};
\node at (10.85,5.2) {$i_q^*$};
\node at (8.4,5.2) {$v_q^*$};
\node at (8.4,4.7) {\tbr{$v_q$}};
\end{tikzpicture}
    \caption{}
\end{subfigure}
\begin{subfigure}[b]{0.16\textwidth}
    \centering
    \begin{tikzpicture}[scale=0.8, every node/.style={transform shape}]
\draw (9.4,4.35) rectangle (9.8,3.65);
\node at (9.6,4) {$\frac{1}{J}$};
\draw (10.2,4.35) rectangle (10.6,3.65);
\node at (10.4,4) {$\frac{1}{s}$};
\draw (11,4.35) rectangle (11.4,3.65);
\node at (11.2,4) {$\frac{1}{s}$};
\draw (9.7,3.1) rectangle (10.3,3.5);
\node at (10,3.3) {$D$};
\draw[fill = white] (9,4) circle [radius=0.05];
\draw[fill = white] (10.75,3.3) circle [radius=0.05];

\draw[-latex, line width = .5 pt] (9.8,4) -- (10.2,4) ;
\draw[-latex, line width = .5 pt] (10.6,4) -- (11,4) ;
\draw[-latex, line width = .5 pt] (11.4,4) -- (11.8,4) ;
\draw[-latex, line width = .5 pt] (9.05,4) -- (9.4,4);
\draw[-latex, line width = .5 pt] (9.7,3.3)--(9,3.3)-- (9,3.95); 
\draw[-latex, line width = .5 pt] (8.4,4.5)--(9,4.5)-- (9,4.05);
\draw[-latex, line width = .5 pt] (8.4,4) -- (8.95,4);
\draw[-latex, line width = .5 pt] (11.5,3.3) -- (10.8,3.3);
\draw[-latex, line width = .5 pt] (10.7,3.3) -- (10.3,3.3);
\draw[-latex, line width = .5 pt] (10.75,4)--(10.75,3.35);

\node at (8.812,3.8444) {\tiny$-$};
\node at (9.122,4.1664) {\tiny$-$};
\node at (8.812,4.1664) {\tiny$+$};
\node at (10.9,3.5) {\tiny$+$};
\node at (10.9,3.1664) {\tiny$-$};
\node at (11.6,4.2) {$\theta$};
\node at (11.3,3.1) {$\omega_0$};
\node at (8.4,4.2) {$Q^*$};
\node at (8.4,4.7) {\tbr{$Q$}};
\end{tikzpicture}
\caption{} 
\end{subfigure}
\begin{subfigure}[b]{0.12\textwidth}
    \centering
     \begin{tikzpicture}[scale=0.8, every node/.style={transform shape}]
\draw (9.4,4.35) rectangle (10,3.65);
\node at (9.7,4) {$P_{vr}$};
\draw[fill = white] (9,4) circle [radius=0.05];
\draw[fill = white] (10.55,4) circle [radius=0.05];

\draw[-latex, line width = .5 pt] (10,4) -- (10.5,4) ;
\draw[-latex, line width = .5 pt] (10.6,4) -- (11.2,4) ;
\draw[-latex, line width = .5 pt] (9.05,4) -- (9.4,4);
\draw[-latex, line width = .5 pt] (8,3.5)--(9,3.5)-- (9,3.95);
\draw[-latex, line width = .5 pt] (8,4) -- (8.95,4);
\draw[-latex, line width = .5 pt] (10.55,3.5) -- (10.55,3.95);
\node at (8.812,3.8444) {\tiny$-$};
\node at (8.812,4.1664) {\tiny$+$};
\node at (10.3912,3.8444) {\tiny$+$};
\node at (10.392,4.1664) {\tiny$+$};
\node at (10.85,4.2) {$v_d^*$};
\node at (10.75,3.6) {$e_d^*$};
\node at (8,4.2) {$Q^*$};
\node at (8,3.7) {\tbr{$Q$}};
\end{tikzpicture}
    \caption{} 
\end{subfigure}
    \caption{(a) Current, (b) Voltage, (c) Swing, and (d) Voltage-regulator controller.}
\label{fig:GFM_controller}
\end{figure}
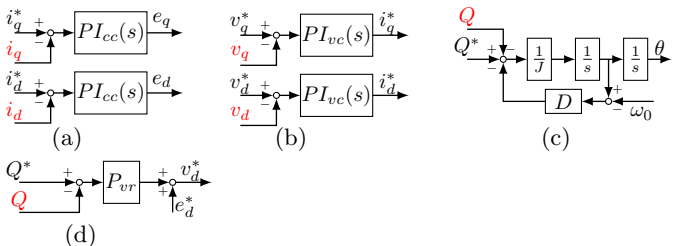

The parameters used for the \textit{Case study 1} and \textit{Case study 2} are shown in Table \ref{table:GFL_parameters}.

\begin{table}[h]
\begin{tabular}{l|ll|llll}
\hline \multicolumn{1}{c}{ Parameters}   &  \multicolumn{2}{c}{ Case study 1 }   &\multicolumn{4}{c}{ Case study 2 } \\ 
& GFL$_1$ &GFL$_2$& GFL& GFM &CPL$_1$&CPL$_2$\\
\hline$v_{dc}$ (p.u.)& 2.5 & 2.5 & 2.5 & -& -& -\\
$C_{dc}$ (p.u.)& 1.25 & 1.25 & 1.25 & - & -& -\\
$f_{dc}$ (Hz) & 10& 10& 10 & - & - & -\\
$f_{\text{pll}}$ (Hz) & 30& 70 & 30 & - & -  & -\\
$f_{i}$ (Hz) & 250 & 250 & 250 & - & - & - \\
$X$  (p.u.)& 0.01 & 0.01 & 0.01 & 0.001 & -& -\\
$R$  (p.u.)& 0 & 0 & 0 & 0& -& -\\
$D$ & - & - &  - &  0.05 & - & -\\
$J$ (Hz) & - & - & -  &  10$\pi$ &  -& -\\
$P_{vr}$ & - & - &  - &  0.05 &   -& -\\
$f_{vdq}$ (Hz) & - & - &  -  &  300  &  - & - \\
$f_{idq}$ (Hz) & - & - &  - &  600 &  -& -\\
$p_c$ (p.u.) & - & - &  -  &  -  &  0.1& 0.56\\
$q_c$ (p.u.) & - & - &  - &  - & 0.1& 0.1\\
\hline
\end{tabular}
    \caption{Converter parameters for Case studies in Section \ref{sec:conditions_linear} and Section \ref{sec:conditions_nonlinear}.}
\label{table:GFL_parameters}
\end{table}
\subsection{System Analysis Parameters}\label{appendix:system_analysis}

\subsubsection{IEEE 14-node system}
All GFL converters used in this case study follow the parameters listed in Table~\ref{table:GFL_system_parameters}. For Case~1, the converters at nodes 2, 3, 6, and 8 correspond to the column labeled GFL$_{2,3,6,8}$; their only difference lies in the PLL bandwidths, set to 45, 40, 35, and 30~Hz, respectively. For the unstable scenario, the parameters of the modified GFL converter at node~8 are shown under GFL$_{8}^{\mathrm{uns}}$. Line parameters and the synchronous machine data are taken from~\cite{Simplus}.

\subsubsection{IEEE 57-node system} 
The GFL converters used in this system also follow the parameter sets in Table~\ref{table:GFL_system_parameters}. For Case~1, the converters at nodes 2, 6, and 12 correspond to the column GFL$_{2,6,12}$. The GFM converter at node~8 is denoted GFM$_8$. For the unstable case, the modified GFL converter at node~12 is given by GFL$_{12}^{\mathrm{uns}}$. As in the previous system, line parameters and synchronous machine models are taken from~\cite{Simplus}.

\begin{table}[h]
\begin{tabular}{l|ll|llll}
\hline \multicolumn{1}{c}{ Parameters}   &  \multicolumn{2}{c}{ IEEE 14 node }   &\multicolumn{3}{c}{ IEEE 57 node } \\ 
& GFL$_{2,3,6,8}$ &GFL$_8^{\mathrm{uns}}$& GFL$_{2,6,12}$& GFM$_8$ &GFL$_{12}^{\mathrm{uns}}$\\
\hline$v_{dc}$ (p.u.)& 2.5 & 2.5 & 2.5 & -& 1\\
$C_{dc}$ (p.u.)& 1.25 & 1.25 & 1.25 & - & 1\\
$f_{dc}$ (Hz) & 10& 10& 10 & - & 5 \\
$f_{\text{pll}}$ (Hz) & 45, 40, 35, 30& 30 & 5, 10, 15 & - & 15  \\
$f_{i}$ (Hz) & 600 & 100 & 250 & - & 300  \\
$X$  (p.u.)& 0.01 & 0.01 & 0.01 & 0.001 & 0.01\\
$R$  (p.u.)& 0 & 0 & 0.01 & 0& 0.01\\
$D$ & - & - &  - &  0.002 & - \\
$J$ (Hz) & - & - & -  &  10$\pi$ &  -\\
$P_{vr}$ & - & - &  - &  5 &   -\\
$f_{vdq}$ (Hz) & - & - &  -  &  150  &  - \\
$f_{idq}$ (Hz) & - & - &  - &  300 &  -\\ 
\hline
\end{tabular}
    \caption{Converter parameters for Case studies in Section \ref{sec:system_analysis}.}
\label{table:GFL_system_parameters}
\end{table}
\subsection{Comparison Small gain-phase and passivity parameters}\label{appendix:comparison}
This appendix shows the controller parameters of the GFL used in Section \ref{sec:comparison}\cite{Wang2024Limitations,chen2022impedance}. The parameters are as follows: $ L $ is 0.1 p.u., $ \alpha_c $ is 10, $ \alpha_p $ is 0.35, $ R_a $ is 0.2 p.u., $ G_{ad} $ and $ G_{aq} $ are 2.0, $ K_v $ is 0, $ \alpha_f $ is 10, $ L_g $ is 0.6 p.u., $ R_g $ is 0.2 p.u., $ i_{d0} $ is 1 p.u., $ i_{q0} $ is 0 p.u., and $ E_0 $ is 1 p.u..
\end{document}